\documentclass[a4,a4wide]{article}
\usepackage{aaai}
\usepackage{times}
\usepackage{helvet}
\usepackage{courier}
\usepackage{paralist}
\usepackage{tikz}
\usepackage{url}
\usepackage{verbatim}

\newcommand{\eqref}[1]{(\ref{#1})}

% basic
\newcommand{\at}{{At}}
\newcommand{\Lit}{{Lit}}

\newcommand{\dnot}{\mathit{not} \;}

\newcommand{\lrule}[2]{#1 \leftarrow #2}
\newcommand{\head}[1]{head(#1)}
\newcommand{\body}[1]{body(#1)}
\newcommand{\bodyp}[1]{body^+(#1)}
\newcommand{\bodym}[1]{body^-(#1)}

\newcommand{\reduct}[2]{{{#1}^{#2}}}

\newcommand{\symminpos}{\mathcal{Q}}
\newcommand{\minpos}[1]{{\symminpos(#1)}}

\newcommand{\as}[1]{{\mathcal{AS}({#1})}}
\newcommand{\agras}[2]{{G_{#2}(#1)}}

\newcommand{\lpp}[1]{\mathcal{#1}}

% examples
%%% as equation
\newcommand{\alrule}[2]{#1 & \leftarrow & #2}

%%% as table
\newcommand{\laarule}[3]{\larule{#1}{#2}{#3}{0.5em}{0.5em}{0.5em}}
\newcommand{\larule}[6]{\ifthenelse{\isempty{#1}}{}{$#1$: \hspace{#4} \=} $#2$ \hspace{#5} \= $\leftarrow$ \hspace{#6} \= $#3$}
\newcommand{\lbrule}[3]{\ifthenelse{\isempty{#1}}{}{$#1$: \>} $#2$ \> $\leftarrow$ \> $#3$}
\newcommand{\tset}[1]{\hspace{#1}\=}
\newcommand{\tgo}{\>}

% indirect in NP
\newcommand{\rname}[1]{n_{#1}}
\newcommand{\rlit}[2]{{#1}^{#2}}
\newcommand{\trules}[2]{{T_{#1}^{#2}}}
\newcommand{\namess}[2]{{N_{#2}({#1})}}
\newcommand{\transr}[2]{{t_{#2}({#1})}}
\newcommand{\trans}[1]{{t({#1})}}

% indirect above NP
\newcommand{\frag}[1]{{F({#1})}}
\newcommand{\exrules}[1]{{R({#1})}}

% semantics
%%%% names
\newcommand{\sn}[2]{\mathcal{PAS}_{#1}^{#2}}

\newcommand{\snpasd}{\sn{D}{}}
\newcommand{\snpasindno}{\sn{GNO}{}}
\newcommand{\snpasindgen}{\sn{G}{}}

\newcommand{\snpasdst}{\sn{DST}{}}
\newcommand{\snpaswzl}{\sn{WZL}{}}
\newcommand{\snpasbe}{\sn{BE}{}}

%%%% pas
\newcommand{\pasd}[1]{\snpasd(#1)}
\newcommand{\pasindno}[1]{\snpasindno(#1)}
\newcommand{\pasindgen}[1]{\snpasindgen(#1)}

% theorems
\newtheorem{definition}{Definition}
\newtheorem{proposition}{Proposition}

\newtheorem{example}{Example}
\newtheorem{notation}{Notation}

\newtheorem{corollary}{Corollary}

\newtheorem{principle}{Principle}{\itshape}{\rmfamily}

% repeat theorem
\usepackage{xifthen}
\newcounter{counter:tmp}

\newenvironment{proof}{\noindent\emph{Proof:}}{}

\pdfinfo{
/Title (A Family of Descriptive Approaches To Preferred Answer Sets)
/Author (Alexander Simko)
/Keywords (logic programming, preferred answer sets)}
\setcounter{secnumdepth}{0}  

\title{A Family of Descriptive Approaches To Preferred Answer Sets}
\author{Alexander \v{S}imko\\
Department of Applied Informatics\\
Faculty of Mathematics, Physics and Informatics\\
Comenius University in Bratislava\\
Mlynsk\'{a} dolina, 842 48 Bratislava, Slovakia}

 \begin{document}
\nocopyright
\maketitle
\begin{abstract}
In logic programming under the answer set semantics, preferences on rules are used to choose which of the conflicting rules are applied. Many interesting semantics have been proposed. Brewka and Eiter's Principle \ref{principle:i} expresses the basic intuition behind the preferences. All the approaches that satisfy Principle \ref{principle:i} introduce a rather imperative feature into otherwise declarative language. They understand preferences as the order, in which the rules of a program have to be applied. In this paper we present two purely declarative approaches for preference handling that satisfy Principle I, and work for general conflicts, including direct and indirect conflicts between rules. The first approach is based on the idea that a rule cannot be defeated by a less preferred conflicting rule. This approach is able to ignore preferences between non-conflicting rules, and, for instance, is equivalent with the answer set semantics for the subclass of stratified programs. It is suitable for the scenarios, when developers do not have full control over preferences. The second approach relaxes the requirement for ignoring conflicting rules, which ensures that it stays in the NP complexity class. It is based on the idea that a rule cannot be defeated by a rule that is less preferred or depends on a less preferred rule. The second approach can be also characterized by a transformation to logic programs without preferences. It turns out that the approaches form a hierarchy, a branch in the hierarchy of the approaches by Delgrande et. al., Wang et. al., and Brewka and Eiter. Finally, we show an application for which the existing approaches are not usable, and the approaches of this paper produce expected results.
\end{abstract}

%%%%%%%%%%%%%%%%%%%%%%%%%%%%%%%%%%%%%
%%%%%%%%%%%%%%%%%%%%%%%%%%%%%%%%%%%%%
\section{Introduction}
%%%%%%%%%%%%%%%%%%%%%%%%%%%%%%%%%%%%%
%%%%%%%%%%%%%%%%%%%%%%%%%%%%%%%%%%%%%

Preferences on rules are an important knowledge representation concept. In logic programming, one usually writes general rules, and needs to express exceptions. Consider we have the following rules
\begin{tabbing}
\laarule{r_1}{select(car_1)}{nice(car_1)}\\
\lbrule{r_2}{\neg select(car_1)}{expensive(car_1)}\\
\lbrule{r_3}{select(car_1)}{fast(car_1)}
\end{tabbing}
If a $car_1$ is both nice, expensive, and fast, the rules lead to contradiction. If we have preferences on rules, e.g., we prefer $r_1$ over $r_2$, and $r_2$ over $r_3$, we can use default negation to express exceptions between rules. Since the rules $r_1$ and $r_3$ have the same head, we have to use an auxiliary literal in order to ensure that $r_3$ does not defeat $r_2$. \begin{tabbing}
\larule{r_{1a}}{aux}{nice(car_1)}{0.25em}{4.5em}{0.5em}\\
\lbrule{r_{1b}}{select(car_1)}{select(car_1)}\\
\lbrule{r_2}{\neg select(car_1)}{expensive(car_1), \dnot aux}\\
\lbrule{r_3}{select(car_1)}{fast(car_1), \dnot \neg select(car_1)}
\end{tabbing}
The hand-encoding of preferences has to use auxiliary literals, we have to split rules, and the resulting program is less readable. If the complementary literals are derived via other rules, and the program has hundreds of rules, the hand-encoding becomes even less readable.

More readable way to encode the exceptions between the rules is to make rules mutually exclusive, represent preferences using a relation on rules, and use a semantics for logic programs with preferences, in order to handle preferences.
\begin{tabbing}
\larule{r_1}{select(car_1)}{nice(car_1), \dnot \neg select(car_1)}{0em}{0.5em}{0em}\\
\lbrule{r_2}{\neg select(car_1)}{expensive(car_1), \dnot select(car_1)}\\
\lbrule{r_3}{select(car_1)}{fast(car_1), \dnot \neg select(car_1)}\\
\\
$r_3 < r_2 < r_1$
\end{tabbing}
The rules $r_1$ and $r_2$ are mutually exclusive: whenever we apply the rule $r_1$, the rule $r_2$ is not applicable, and vice versa. We call this mutual exclusivity a conflict. The resulting program is much tolerant to changes. If we decide that the rule $r_3$ is the most preferred, and $r_3$ is the least preferred, only the preference relation needs to be changed, and the rules stay intact.

Several semantics for logic programs with preferences on rules have been proposed in the literature. In the first group  are semantics that extend the well-founded semantics \cite{VanGelder:1991wl}: \cite{Brewka:1996vs,Wang:2000tq,Schaub:2002tr} modify the alternating fixpoint characterization of the well-founded semantics in order to take preferences into account.

In the second group are the semantics that extend the answer set semantics \cite{Gelfond:1991wx}. Each model of a program with preferences, called a preferred answer set, is guaranteed to be an answer set of the underlying program without preferences. \cite{Brewka:1999uv,Wang:2000tq,Delgrande:2002uc} provide prescriptive \cite{Delgrande:2004um} semantics, i.e. preferences are understood as the order in which the rules of a program have to be applied. A rule can be defeated only by rules that were applied before it w.r.t. to this order. Each answer set is tested whether it can be constructed in aforementioned way. \cite{Zhang:1997tc} iteratively non deterministically removes from a program less preferred rules that are defeated by the remainder of the program. \cite{Sakama:2000wo} transforms preferences on rules to preferences on literals, which leads to comparison of the sets of generating rules. Roughly speaking, answer set generated by maximal rules (w.r.t. a preference relation) are selected. \cite{Sefranek:2008wo} understands preference handling as a kind of argumentation.

Brewka and Eiter have proposed Principle \ref{principle:i} \cite{Brewka:1999uv} that captures the intuition behind preferences on rules. If two answer sets are generated by the same rules except for two rules, and one rule is preferred over the other, an answer set generated by the less preferred rule should not be preferred.

The existing approaches to preference handling that satisfy Principle \ref{principle:i} \cite{Brewka:1999uv,Wang:2000tq,Delgrande:2002uc}, denoted here as $\snpasbe$, $\snpaswzl$ and $\snpasdst$, introduce a rather imperative feature into the otherwise declarative language. They understand preferences on rules as the order in which the rules of a program have to be applied. This, on the one hand goes against declarative spirit of logic programming. On the other hand, it makes the approaches unusable in the situations when we need to automatically generate preferences.

\begin{example}
\label{example:motivation}
Consider a modified version of the scenario from \cite{Brewka:1999uv}. Imagine we have a car recommender system. A program written by the developers of the system contains a database of cars and recommends them to a user.
\begin{tabbing}
\laarule{r_1}{nice(car_1)}{}\\
\lbrule{r_2}{safe(car_2)}{}\\
\\
\lbrule{r_3}{rec(car_1)}{nice(car_1), \dnot \neg rec(car_1)}\\
\lbrule{r_4}{rec(car_2)}{nice(car_2), \dnot \neg rec(car_2)}
\end{tabbing}
The system recommends nice cars to the user. We allow the user to write his/her own rules during the run time of a system. Imagine the user writes the following rules
\begin{tabbing}
\laarule{u_1}{\neg rec(car_2)}{rec(car_1)}\\
\lbrule{u_2}{\neg rec(car_1)}{rec(car_2)}\\
\\
\lbrule{u_3}{rec(car_1)}{safe(car_1), \dnot \neg rec(car_1)}\\
\lbrule{u_4}{rec(car_2)}{safe(car_2), \dnot \neg rec(car_2)}
\end{tabbing}
to say that maximally one car should be recommended, and that the user is interested in safe cars.

Due to the rules $u_1$ and $u_2$, the rule $u_3$ is conflicting with $r_4$:
\begin{inparaenum}[(i)]
\item The rule $u_1$ depends on $r_3$, and its head is in the negative body of $u_4$.
\item The rule $u_2$ depends on $u_4$, and its head is in the negative body of $r_3$.
\end{inparaenum}
We also have that $u_3$ is conflicting with $u_4$, and $r_3$ is conflicting with $r_4$ and $u_4$. All the conflicts are indirect -- without the rules $u_1$ and $u_2$ there are no conflicts.

The purpose of the user's rules is to override the default behaviour of the system in order to provide the user the best experience possible. Therefore we want the rule $u_3$ to override $r_4$, and $u_4$ to override $r_3$. Since the $u_i$ rules are only known at the run time, preferences cannot be specified beforehand by the developers of the system. Moreover, we cannot expect a user to know all the $r_i$ rules. It is reasonable to prefer each $u_i$ rule over each $r_j$ rule, and let the semantics to ignore preferences between non-conflicting rules. Hence we have the preferences:\\
\\
$u_1$ is preferred over $r_1$\\
$u_1$ is preferred over $r_2$\\
$\dots$\\
$u_4$ is preferred over $r_4$\\
\\
The prerequisites $nice(car_2)$ and $safe(car_1)$ of $r_4$ and $u_3$ cannot be derived. The only usable conflicting rules are $r_3$ and $u_4$. The rule $u_4$ being preferred, $u_4$ defines an exception to $r_3$. We expect $u_4$ to be applied, and $r_3$ defeated. The only answer set that uses $u_4$ is $S = F \cup \{ \neg rec(car_1), rec(car_2) \}$ where $F = \{ nice(car_1), safe(car_2)\}$. Hence $S$ is the unique expected preferred answer set.

None of the existing approaches satisfying Principle \ref{principle:i} works as expected. $\snpasbe$ does not handle indirect conflicts, and provides two preferred answer sets $S$ and $S_2 = F \cup \{ rec(car_1), \neg rec(car_2) \}$. $\snpasdst$ and $\snpaswzl$ provide no preferred answer set due to they imperative nature. Since $u_4$ is preferred over $r_2$, they require that $u_4$ is applied before $r_2$. It is impossible as $r_2$ is the only rule that derives $r_4$'s prerequisite.

It is not crucial for the example that the facts $r_1$ and $r_2$ are less preferred. If one feels that they should be separated from the rest of the rules, we can easily modify the program, e.g., by replacing the fact $safe(car_2)$ by the fact $volvo(car_2)$ and the rule $\lrule{safe(car_2)}{volvo(car_2)}$.
\end{example}

%%%% 

Our goal is to develop an approach to preference handling that
\begin{inparaenum}[(i)]
\item is purely declarative,
\item satisfies Brewka and Eiter's Principle \ref{principle:i}, and
\item is usable in the above-mentioned situation.
\end{inparaenum}

We have already proposed such a semantics for the case of direct conflicts, and we denote it by $\snpasd$ \cite{Simko:2013tm}. We understand this semantics as the reference semantics for the case of direct conflicts, and extend it to the case of general conflicts in this paper.

We present two approaches. The first one, denoted by $\snpasindgen$, is based on the intuition that \emph{a rule cannot be defeated by a less preferred (generally) conflicting rule}. The approach is suitable for situations when we need to ignore preferences between non-conflicting rules, and is equivalent to the answer set semantics for the subclass of stratified programs. We consider this property to be important for the aforementioned situations as stratified programs contain no conflicts.

The second approach, denoted $\snpasindno$, relaxes the requirement for ignoring preferences between non-conflicting rules, and stays is the NP complexity class. There are stratified programs with answer sets and no preferred answer sets according to the approach. The approach is suitable in situations when a developer has a full control over a program. The approach is based on the intuition that \emph{a rule cannot be defeated by a less preferred rule or a rule that depends on a less preferred rule}. The approach can be also characterized by a transformation from logic programs with preferences to logic programs without preferences such that the answer sets of the transformed program (modulo new special-purpose literals) are the preferred answer sets of an original one.

The two approaches of this paper and our approach for direct conflicts $\snpasd$ form a hierarchy, which in general does not collapse. Preferred answer sets of $\snpasindno$ are preferred according to $\snpasindgen$, and preferred answer sets of $\snpasindgen$ are preferred according to $\snpasd$.

$\snpasd$ is thus the reference semantics for the case of direct conflicts. $\snpasindno$ can be viewed as a computationally acceptable approximation of $\snpasindgen$. $\snpasindno$ is sound w.r.t. $\snpasindgen$, but it is not complete w.r.t. $\snpasindgen$, meaning that each preferred answer set according to $\snpasindno$ is a preferred answer set according to $\snpasindgen$, but not vice versa.

When dealing with preferences, it is always important to remember what the abstract term ``preferences'' stands for. Different interpretations of the term lead to different requirements on a semantics. We want to stress that we understand preferences as a mechanism for encoding exceptions between rules in this paper.

The rest of the paper is organized as follows. We first recapitulate preliminaries of logic programming, answer set semantics and our approach to preferred answer sets for direct conflicts $\snpasd$. Then we provide the two approaches to preferred answer sets for general conflicts. After that we show relation between the approaches of this paper, and also between approaches of this paper and existing approaches. Finally we show how the approaches work on the problematic program from Example \ref{example:motivation}. Proofs not presented here can be found in the technical report \cite{simko:report}.

%%%%%%%%%%%%%%%%%%%%%%%%%%%%%%%%%%%%%
%%%%%%%%%%%%%%%%%%%%%%%%%%%%%%%%%%%%%
\section{Preliminaries}
\label{section:preliminaries}
%%%%%%%%%%%%%%%%%%%%%%%%%%%%%%%%%%%%%
%%%%%%%%%%%%%%%%%%%%%%%%%%%%%%%%%%%%%

In this section, we give preliminaries of logic programming and the answer set semantics. We recapitulate the alternative definition of answer sets based on generating sets from \cite{Simko:2013tm}, upon which this paper builds.

%%%%%%%%%%%%%%%%%%%%%%%
\subsection{Syntax}
%%%%%%%%%%%%%%%%%%%%%%%

Let $\at$ be a set of all atoms. A \emph{literal} is an atom or an expression $\neg a$, where $a$ is an atom. Literals of the form $a$ and $\neg a$ where $a$ is an atom are \emph{complementary}. A \emph{rule} is an expression of the form
$\lrule{l_0}{l_1, \dots, l_m, \dnot l_{m+1}, \dots, \dnot l_n}$, where $0 \leq m \leq n$, and each $l_i$ ($0 \leq i \leq n)$ is a literal. Given a rule $r$ of the above form we use $\head{r} = l_0$ to denote the \emph{head} of $r$, $\body{r} = \{ l_1, \dots, \dnot l_n \}$ the \emph{body} of $r$. Moreover, $\bodyp{r} = \{l_1, \dots, l_m\}$ denotes the \emph{positive} body of $r$, and $\bodym{r} = \{l_{m+1}, \dots, l_n\}$ the \emph{negative} body of $r$. For a set of rules $R$, $\head{R} = \{ \head{r} : r \in R \}$. A \emph{fact} is a rule with the empty body. A \emph{logic program} is a finite set of rules.% For a logic program $P$, $P^+ = \{ \lrule{\head{r}}{\bodyp{r}} : r \in P \}$.

We say that a rule $r_1$ defeats a rule $r_2$ iff $\head{r_1} \in \bodym{r_2}$. A set of rules $R$ defeats a rule $r$ iff $\head{R} \cap \bodym{r} \neq \emptyset$. A set of rules $R_1$ defeats a set of rules $R_2$ iff $R$ defeats a rule $r_2 \in R_2$.

For a set of literals $S$ and a program $P$ we use $\agras{S}{P} = \{ r \in P : \bodyp{r} \subseteq S \mbox{ and } \bodym{r} \cap S = \emptyset \}$.

%\begin{definition}[Logic Program with Preferences]
A \emph{logic program with preferences} is a pair $(P,<)$ where:
\begin{inparaenum}[(i)]
\item $P$ is a logic program, and
\item $<$ is a transitive and asymmetric relation on $P$.
\end{inparaenum}
If $r_1 < r_2$ for $r_1,r_2 \in P$ we say that $r_2$ is \emph{preferred over} $r_1$.
%\end{definition}

%%%%%%%%%%%%%%%%%%%%%%%%%%%%%%%%
\subsection{Answer Set Semantics}
%%%%%%%%%%%%%%%%%%%%%%%%%%%%%%%%

A set of literals $S$ is consistent iff $a \in S$ and $\neg a \in S$ holds for no atom $a$.

%\begin{definition}[Positive satisfaction]
%\label{definition:pos_satisfies}
A set of rules $R \subseteq P$ \emph{positively satisfies} a logic program $P$ iff for each rule $r \in P$ we have that: If $\bodyp{r} \subseteq \head{R}$, then $r \in R$. We will use $\minpos{P}$ to denote the minimal (w.r.t. $\subseteq$) set of rules that positively satisfies $P$. It contains all the rules from $P$ that can be applied in the iterative manner: we apply a rule which positive body is derived by the rules applied before.

\begin{example}
Consider the following program $P$:
\begin{tabbing}
\laarule{r_1}{a}{}\\
\lbrule{r_2}{b}{a}\\
\lbrule{r_3}{d}{c}
\end{tabbing}
We have that $R_1 = \{r_1, r_2\}$ and $R_2 = \{r_1, r_2, r_3\}$ positively satisfy $P$. On the other hand $R_3 =\{r_1\}$ doe not positively satisfy $P$ as $\bodyp{r_2} \subseteq \head{R_3}$ and $r_2 \not \in R_3$.

We also have that $\minpos{P} = R_1$.
\end{example}

%\end{definition}

%\begin{definition}[Reduct]
%\label{definition:reduct}
The \emph{reduct $P^R$} of a logic program $P$ w.r.t. a set of rules $R \subseteq P$ is obtained from $P$ by removing each rule $r$ with $\head{R} \cap \bodym{r} \neq \emptyset$.
%\end{definition}

%\begin{definition}[Generating set]
%\label{definition:gen_set}
A set of rules $R \subseteq P$ is a \emph{generating set} of a logic program $P$ iff $R = \minpos{\reduct{P}{R}}$.
%\end{definition}

\begin{definition}[Answer set]
\label{definition:as_alt}
A consistent set of literals $S$ is an \emph{answer set} of a logic program $P$ iff there is a generating set $R$ such that $\head{R} = S$.
\end{definition}

\begin{example}
Consider the following program $P$
\begin{tabbing}
\laarule{r_1}{a}{\dnot b}\\
\lbrule{r_2}{c}{d, \dnot b}\\
\lbrule{r_3}{b}{\dnot a}
\end{tabbing}
Let $R = \{r_1\}$. When constructing $\reduct{P}{R}$ we remove $r_3$ as $\bodym{r_3} \cap \head{R} \neq \emptyset$. We get that $\reduct{P}{R} = \{r_1, r_2\}$, and $\minpos{\reduct{P}{R}} = \{r_1\}$.  The rule $r_2$ is not included as $d \in \bodyp{r_2}$ cannot be derived. We have that $\minpos{\reduct{P}{R}} = R$. Therefore $R$ is a generating set of $P$ and $\{a\} = \head{R}$ is an answer set of $P$.
\end{example}

It holds that: if a set of rules $R$ is a generating set of a logic program $P$, and $S = \head{R}$ is consistent, then $R = \agras{S}{P}$.

%%%%%%%%%%%%%%%%%%%%%%%%%%%%%%%%%%%%%%%%%%%%%
\section{Conflicts}
%%%%%%%%%%%%%%%%%%%%%%%%%%%%%%%%%%%%%%%%%%%%%

Informally, two rules are conflicting, if their applicability is mutually exclusive: if the application of one rule causes the other rule to be inapplicable, and vice versa. We divide general conflicts into two disjunctive categories:
\begin{itemize}
\item direct conflicts, and
\item indirect conflicts.
\end{itemize}

In case of a direct conflict, application of a conflicting rule causes immediately the other rule to be inapplicable.

\begin{definition}[Directly Conflicting Rules]
\label{definition:conflict}
We say that rules $r_1$ and $r_2$ are \emph{directly conflicting} iff:
\begin{inparaenum}[(i)]
\item $r_1$ defeats $r_2$, and
\item $r_2$ defeats $r_1$.
\end{inparaenum}
\end{definition}

\begin{example}
\label{example:direct_conflict}
Consider the following program
\begin{tabbing}
\laarule{r_1}{a}{\dnot b}\\
\lbrule{r_2}{b}{\dnot a}
\end{tabbing}
The rules $r_1$ and $r_2$ are directly conflicting. If $r_1$ is used, then $r_2$ is not applicable, and vice versa.
\end{example}

In case of an indirect conflict, another, intermediate rule, has to be used. The following example illustrated the idea.

\begin{example}
Consider the following program
\begin{tabbing}
\laarule{r_1}{x}{\dnot b}\\
\lbrule{r_2}{b}{\dnot a}\\
\lbrule{r_3}{a}{x}
\end{tabbing}
Now, the rule $r_1$ is not able to make $r_2$ inapplicable on its own. The rule $r_3$ is also needed. Therefore we say that $r_1$ and $r_2$ are indirectly conflicting, and the conflict is formed via the rules $r_3$.
\end{example}

When trying to provide a formal definition of a general conflict, one has to address several difficulties.

First, an indirect conflict is not always effectual. The following example illustrates what we mean by that.

\begin{example}
Consider the following program.
\begin{tabbing}
\laarule{r_1}{x}{\dnot b}\\
\lbrule{r_2}{b}{\dnot a}\\
\lbrule{r_3}{a}{x, \dnot y}\\
\lbrule{r_4}{y}{}
\end{tabbing}
When the rule $r_2$ is used, the rule $r_1$ cannot be used. However, if we use $r_1$, the rule $r_2$ is still applicable as the rule $r_3$ that depends on $r_1$ and defeats $r_2$ is defeated by the fact $r_4$. Note that this cannot happen in the case of direct conflicts.
\end{example}

Second, we need to define that an indirect conflict is formed via rules that are somehow related to a conflicting rule.
\begin{example}
Consider the following program:
\begin{tabbing}
\laarule{r_1}{a}{\dnot b}\\
\lbrule{r_2}{x}{\dnot a}\\
\lbrule{r_3}{b}{}
\end{tabbing}
If we fail to see that $r_3$ does not depend on $r_2$, we can come to wrong conviction that $r_1$ and $r_2$ are conflicting via $r_3$ as
\begin{inparaenum}[(i)]
\item $r_1$ defeats $r_2$, and
\item $r_3$ defeats $r_1$.
\end{inparaenum}
\end{example}

Third, in general, the rules depending on a rule are conflicting, thus creating alternatives, in which the rule is/is not conflicting. The following example illustrates this.
\begin{example}
Consider the following program:
\begin{tabbing}
\laarule{r_1}{x}{\dnot c}\\
\lbrule{r_2}{a}{x, \dnot b}\\
\lbrule{r_3}{b}{x, \dnot a}\\
\lbrule{r_4}{c}{\dnot a}
\end{tabbing}
Since the rules $r_2$ and $r_3$ are directly conflicting, they cannot be used at the same time. If $r_2$ is used, $r_1$ and $r_4$ are conflicting via $r_2$. If $r_3$ is used, $r_1$ and $r_4$ are not conflicting.
\end{example}

In this paper we are going to address these issues from a different angle. Instead of defining a general conflict between two rules, we will move to sets of rules and define conflicts between sets of rules in the later sections.

%%%%%%%%%%%%%%%%%%%%%%%%%%%%%%%%%%%%%%%%%%%%%
\section{Approach to Direct Conflicts}
%%%%%%%%%%%%%%%%%%%%%%%%%%%%%%%%%%%%%%%%%%%%%

In this section we recapitulate our semantics for directs conflicts \cite{Simko:2013tm}, which we generalize in this paper for the case of general conflicts.

%\begin{definition}[Override]
We say that a rule $r_1$ \emph{directly overrides} a rule $r_2$ w.r.t. a preference relation $<$ iff
\begin{inparaenum}[(i)]
\item $r_1$ and $r_2$ are directly conflicting, and
\item $r_2 < r_1$.
\end{inparaenum}
%\end{definition}

%\begin{definition}[Reduct]
%\label{definition:pas_reduct}
The \emph{reduct} $\reduct{\lpp{P}}{R}$ of a logic program with preferences $\lpp{P} = (P,<)$ w.r.t. a set of rules $R \subseteq P$ is obtained from $P$ by removing each rule $r_1 \in P$, for which there is a rule $r_2 \in R$ such that:
\begin{itemize}
\item $r_2$ defeats $r_1$, and
\item $r_1$ does not directly override $r_2$ w.r.t. $<$.
\end{itemize}
%\end{definition}

%\begin{definition}[Preferred generating set]
%\label{definition:pref_rule_set}
A set of rules $R \subseteq P$ is a \emph{preferred generating set} of a logic program with preferences $\lpp{P} = (P,<)$ iff $R = \minpos{\reduct{\lpp{P}}{R}}$.
%\end{definition}

%\begin{definition}[Preferred answer set]
%\label{definition:pas}
A consistent set of literals $S$ is a \emph{preferred answer set} of a logic program with preferences $\lpp{P}$ iff there is a preferred generating set $R$ of $\lpp{P}$ such that $\head{R} = S$.

We will use $\pasd{\lpp{P}}$ to denote the set of all the preferred answer sets of $\lpp{P}$ according to this definition.
%\end{definition}

It holds that each preferred generating set of $\lpp{P} = (P,<)$ is a generating set of $P$.

%%%%%%%%%%%%%%%%%%%%
\section{Principles}
%%%%%%%%%%%%%%%%%%%%

An important direction in preference handling research is the study of principles that a reasonable semantics should satisfy. Brewka and Eiter have proposed first two principles \cite{Brewka:1999uv}.

Principle \ref{principle:i} tries to capture the meaning of preferences. If two answer sets are generated by the same rules except for two rules, the one generated by a less preferred rule is not preferred.

\begin{principle}[\cite{Brewka:1999uv}]
\label{principle:i}
Let $\lpp{P} = (P,<)$ be a logic program with preferences, $S_1, S_2$ be two answer sets of $P$. Let $\agras{S_1}{P} = R \cup \{ r_1 \}$ and $\agras{S_2}{P} = R \cup \{ r_2 \}$ for $R \subset P$. Let $r_2 < r_1$. Then $S_2$ is not a preferred answer set of $\lpp{P}$.
\end{principle}

Principle \ref{principle:ii} says that the preferences specified on a rule with an unsatisfied positive body are irrelevant.

\begin{principle}[\cite{Brewka:1999uv}]
\label{principle:ii}
Let $S$ be a preferred answer set of a logic program with preferences $\lpp{P} = (P,<)$, and $r$ be a rule such that $\bodyp{r} \not \subseteq S$. Then $S$ is a preferred answer set of a logic program with preferences $\lpp{P}^\prime = (P^\prime,<^\prime)$, where $P^\prime = P \cup \{ r\}$ and  $<^\prime \cap (P \times P) = <$.
\end{principle}

Principle \ref{principle:iii}\footnote{It is an idea from Proposition 6.1 from \cite{Brewka:1999uv}. Brewka and Eiter did not consider it as a principle. On the other hand \cite{Sefranek:2008wo} did.} requires that a program has a preferred answer set whenever a standard answer set of the underlying program exists. It follows the view that the addition of preferences should not cause a consistent program to be inconsistent.

\begin{principle}
\label{principle:iii}
Let $\lpp{P} = (P,<)$ be a logic program with preferences. If $P$ has an answer set, then $\lpp{P}$ has a preferred answer set.
\end{principle}

Before we proceed, we remind that our approach to preference handling is for general conflicts, and understands preferences on rules as a mechanism for expressing exception between rules. Using this view, we show that Principle \ref{principle:ii} and Principle \ref{principle:iii} should be violated by a semantics, and hence are not relevant under this understanding of preferences.

\begin{example}
Consider the following program $\lpp{P} = (P,<)$
\begin{tabbing}
\laarule{r_1}{select(a)}{\dnot \neg select(a)}\\
\lbrule{r_2}{select(b)}{\dnot \neg select(b)}\\
\\
\lbrule{r_3}{\neg select(a)}{select(b)}\\
\\
$r_2 < r_1$
\end{tabbing}
The program is stratified, and has the unique answer set $S = \{\neg select(a), select(b)\}$. Since there are no conflicts between the rules, the unique answer set should be preferred.

We construct $\lpp{P}^\prime = (P^\prime, <)$, $P^\prime = P \cup \{ r_4 \}$, by adding the rule
\begin{tabbing}
\laarule{r_4}{\neg select(b)}{select(a)}
\end{tabbing}
We have an indirect conflict between the rules $r_1$ and $r_2$ via $r_3$ and $r_4$. The rule $r_1$ being preferred, $S$ should not be a preferred answer set of $\lpp{P}^\prime$.

Hence Principle \ref{principle:ii} is violated: $\bodyp{r_4} =\{ select(a) \} \not \subseteq S$, but $S$ is not a preferred answer set of $\lpp{P}^\prime$.
\end{example}

\begin{example}
Consider the following program $\lpp{P} = (P,<)$.
\begin{tabbing}
\laarule{r_1}{select(a)}{\dnot \neg select(a)}\\
\lbrule{r_2}{\neg select(a)}{\dnot select(a)}\\
\\
$r_2 < r_1$
\end{tabbing}
When we interpret preference $r_1 < r_2$ as a way of saying that $r_1$ defines an exception to $r_2$ and not vice versa, the program has the following meaning:
\begin{tabbing}
\laarule{r_1}{select(a)}{}\\
\lbrule{r_2}{\neg select(a)}{\dnot select(a)}
\end{tabbing}
Hence $S = \{select(a)\}$ is the unique preferred answer set of $\lpp{P}$.

We construct $\lpp{P}^\prime = (P^\prime,<)$, $P^\prime = P \cup \{r_3\}$, by adding the rule
\begin{tabbing}
$r_3: \lrule{inc}{select(a), \dnot inc}$
\end{tabbing}
The program $\lpp{P}^\prime$ has the following meaning:
\begin{tabbing}
\laarule{r_1}{select(a)}{}\\
\lbrule{r_2}{\neg select(a)}{\dnot select(a)}\\
\lbrule{r_3}{inc}{select(a), \dnot inc}
\end{tabbing}
The program has no answer set, and hence $\lpp{P}^\prime$ has no preferred answer set.

Hence Principle \ref{principle:iii} is violated: The program $P^\prime$ has an answer set, but $\lpp{P}^\prime$ has no preferred answer set.
\end{example}

%%%%%%%%%%%%%%%%%%%%%%%%%%%%%%%%%%%%%
%%%%%%%%%%%%%%%%%%%%%%%%%%%%%%%%%%%%%
\section{Approach One to General Conflicts}
%%%%%%%%%%%%%%%%%%%%%%%%%%%%%%%%%%%%%
%%%%%%%%%%%%%%%%%%%%%%%%%%%%%%%%%%%%%

In this section we generalize our approach to direct conflicts to the case of general conflicts. As we have already noted, we deliberately avoid defining what a general conflict between two rules is. We will define when two sets of rules are conflicting instead. For this reason we develop an alternative definition of an answer set as a set of sets of rules, upon which the semantics for preferred answer sets will be defined.

%%%%%%%%%%%%%%%%%%%%%%%%%%%%%%%%%%%%%%%%
\subsection{Alternative Definition of Answer Sets}
%%%%%%%%%%%%%%%%%%%%%%%%%%%%%%%%%%%%%%%%

A building block of the alternative definition of answer sets is a fragment. The intuition behind a fragment is that it is a set of rules that can form the one hand side of a conflict. The positive bodies of the rules must be supported in a non-cyclic way.

\begin{definition}[Fragment]
A set of rules $R \subseteq P$ is a \emph{fragment} of a logic program $P$ iff $\minpos{R} = R$.
\end{definition}

\begin{example}
\label{alt_as:example:running:1}
Consider the following program $P$ that we will use to illustrate the definitions of this paper.
\begin{tabbing}
\laarule{r_1}{a}{x}\\
\lbrule{r_2}{x}{\dnot b}\\
\lbrule{r_3}{b}{\dnot a}
\end{tabbing}
The sets $F_1 = \emptyset$, $F_2 = \{r_2\}$, $F_3 = \{r_3\}$, $F_4=\{r_2, r_1\}$, $F_5=\{r_2, r_3\}$, $F_6 = \{r_1, r_2, r_3\}$ are all the fragments of the program. For example, $\{r_1\}$ is not a fragment as $\minpos{\{r_1\}} = \emptyset$.
\end{example}

\begin{notation}
We will denote by $\frag{P}$ the set of all the fragments of a program $P$.
\end{notation}

\begin{notation}
Let $P$ be a logic program and $E \subseteq \frag{P}$.

We will denote $\exrules{E} = \bigcup_{X \in E} X$, and $\head{E} = \head{\exrules{E}}$.
\end{notation}

Given a guess of fragments, we define the reduct. Since fragments are sets of rules, we can speak about defeating between fragments.

\begin{definition}[Reduct]
Let $P$ be a logic program and $E \subseteq \frag{P}$.

The reduct $\reduct{P}{E}$ of $P$ w.r.t. $E$ is obtained from $\frag{P}$ by removing each fragment $X \in \frag{P}$ for which there is $Y \in E$ that defeats $X$.
\end{definition}

\begin{example}[Example \ref{alt_as:example:running:1} continued]
\label{alt_as:example:running:2}
Let $E_1 = \{ F_1, F_2, F_4\}$. We have that $\reduct{P}{E_1} = \{F_1, F_2, F_4\}$. The fragments $F_3$, $F_5$, and $F_6$ are removed as they contain the rule $r_3$ which is defeated by $F_4 \in E_1$.

Let $E_2 = \{ F_2 \}$. We have that $\reduct{P}{E_2} = \{ F_1, F_2, F_3, F_4, F_5, F_6 \}$. Since no rule has $x$ in its negative body, no fragment is removed.
\end{example}

A stable fragment set, an alternative notion to the notion of answer set, is a set of fragments that is stable w.r.t. to the reduction.

\begin{definition}[Stable fragment set]
A set $E \subseteq \frag{P}$ is a \emph{stable fragment set} of a program $P$ iff $\reduct{P}{E} = E$.
\end{definition}

\begin{example}[Example \ref{alt_as:example:running:2} continued]
\label{alt_as:example:running:3}
We have that $\reduct{P}{E_1} = E_1$, so $E_1$ is a stable fragment set. On the other hand, $E_2$ is not a stable fragment set as $\reduct{P}{E_2}  \neq E_2$.
\end{example}

\begin{proposition}
\label{pas_gen:proposition:ext_eq_gen}
Let $P$ be a logic program, and $E \subseteq \frag{P}$.

$E$ is a stable fragment set of $P$ iff $\exrules{E}$ is a generating set of $P$ and $E = \{ T : T = \minpos{T} \mbox{ and } T \subseteq \exrules{E}\}$.
\end{proposition}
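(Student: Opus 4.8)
The claim is a characterization of stable fragment sets in terms of generating sets. I would prove the two directions separately, and in each direction the work splits into establishing the two conjuncts on the right-hand side.

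For the forward direction, assume $E$ is a stable fragment set, i.e. $\reduct{P}{E} = E$. First I would unfold the definition of the reduct: $X \in \reduct{P}{E}$ means $X \in \frag{P}$ and no $Y \in E$ defeats $X$, i.e. $\head{\exrules{E}} \cap \bodym{r} = \emptyset$ for every $r \in X$. Since $E = \reduct{P}{E}$, membership in $E$ is exactly: being a fragment whose rules are not defeated by $\head{E} = \head{\exrules{E}}$. From this I would first derive the ``downward closure'' conjunct $E = \{T : T = \minpos{T} \text{ and } T \subseteq \exrules{E}\}$: any $T$ with $T = \minpos{T}$ (a fragment) and $T \subseteq \exrules{E}$ inherits from the rules of $\exrules{E}$ the property of not being defeated by $\head{E}$ (because each $r \in T$ lies in some member of $E$, hence is not defeated), so $T \in \reduct{P}{E} = E$; conversely every $X \in E$ is a fragment and $X \subseteq \exrules{E}$ trivially. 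Then, to show $\exrules{E}$ is a generating set of $P$, I must prove $\exrules{E} = \minpos{\reduct{P}{\exrules{E}}}$, where the right-hand reduct is the rule-level reduct from the answer-set section (remove $r$ with $\head{\exrules{E}} \cap \bodym{r} \neq \emptyset$). The key observations: (1) $\reduct{P}{\exrules{E}}$ (rule-level) is precisely the set of rules appearing in some fragment of $\reduct{P}{E}$ (fragment-level), because a rule survives the rule-reduct iff it is not defeated by $\head{E}$, and such rules are exactly those lying in surviving fragments — here I need that a single rule $r$ not defeated, together with enough of its non-cyclic positive support (also not defeated), forms a fragment in the reduct, using that $E$ being a stable fragment set makes $\reduct{P}{E}$ closed under $\minpos$. (2) $\minpos$ of that rule set equals $\exrules{\minpos\text{-closed part}}$; since $E$ is already $\minpos$-downward-closed and equals $\reduct{P}{E}$, taking $\exrules{E}$ and applying $\minpos$ returns $\exrules{E}$.

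For the backward direction, assume $\exrules{E}$ is a generating set of $P$ and $E = \{T : T = \minpos{T}, T \subseteq \exrules{E}\}$. I would show $\reduct{P}{E} = E$. Write $R = \exrules{E}$, so $\head{E} = \head{R}$. A fragment $X$ lies in $\reduct{P}{E}$ iff no rule of $X$ is defeated by $\head{R}$. Since $R$ is a generating set, $R = \minpos{\reduct{P}{R}}$, and every rule in $R$ survives the rule-reduct, i.e. is undefeated by $\head{R}$; moreover $R = \minpos{R}$ up to the reduct structure, which gives that any fragment $X \subseteq R$ has all its rules undefeated, so $X \in \reduct{P}{E}$, giving $E \subseteq \reduct{P}{E}$. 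Conversely, if $X$ is a fragment with no rule defeated by $\head{R}$, I must show $X \subseteq R$; this is where I use that $R = \minpos{\reduct{P}{R}}$: the rules of $X$ all lie in $\reduct{P}{R}$ (being undefeated), and since $X = \minpos{X}$ its rules are reached by iterating positive-body satisfaction inside $\reduct{P}{R}$, hence lie in $\minpos{\reduct{P}{R}} = R$. Then $X$ is a $\minpos$-closed subset of $R$, so $X \in E$.

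The main obstacle I expect is the careful bookkeeping linking the fragment-level reduct $\reduct{P}{E}$ with the rule-level reduct $\reduct{P}{\exrules{E}}$ — specifically, showing that ``rules surviving the rule-reduct'' and ``rules occurring in fragments surviving the fragment-reduct'' coincide, and that $\minpos$ interacts correctly with passing between a set of rules and the family of its $\minpos$-closed subsets. The clean way to manage this is to first prove a small auxiliary fact: for any set of rules $R$ closed under $\minpos$ in the sense relevant here, $R = \exrules{\{T : T = \minpos{T}, T \subseteq R\}}$, and a rule $r$ belongs to such an $R$ iff some fragment containing $r$ is contained in $R$. With that lemma in hand, both directions reduce to unwinding definitions.
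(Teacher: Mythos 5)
Your backward direction is sound, and so is your argument for the conjunct $E = \{T : T = \minpos{T} \mbox{ and } T \subseteq \exrules{E}\}$ in the forward direction. One point you gloss over with ``$R=\minpos{R}$ up to the reduct structure'': for the inclusion $\reduct{P}{E}\subseteq E$ you need $\head{\exrules{E}}=\head{R}$, which you get by observing that $\minpos{Q}$ is always $\minpos$-closed, so the generating set $R=\minpos{\reduct{P}{R}}$ is itself a fragment and hence a member of $E$, whence $\exrules{E}=R$. This is exactly the auxiliary lemma you announce at the end, so make it explicit and prove it first.

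The genuine gap is your step (1) in the forward direction. The claimed identity between the rule-level reduct $\reduct{P}{\exrules{E}}$ and the set of rules occurring in some fragment of the fragment-level reduct $\reduct{P}{E}$ is false: a rule can survive the rule-level reduct (its negative body is disjoint from $\head{\exrules{E}}$) while belonging to no fragment at all, because the ``enough of its non-cyclic positive support'' you invoke need not exist. Take $P=\{\lrule{a}{c}\}$ with no rule deriving $c$: the only fragment is $\emptyset$, so $E=\{\emptyset\}$ is a stable fragment set with $\exrules{E}=\emptyset$, yet the rule-level reduct is all of $P$ and contains $\lrule{a}{c}$, which lies in no fragment. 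Hence $\reduct{P}{\exrules{E}}$ can strictly contain $\exrules{E}$, and your step (2), which in effect computes $\minpos{\exrules{E}}$, is applied to the wrong set. What you actually need, and what is true, is the weaker statement $\minpos{\reduct{P}{\exrules{E}}}=\exrules{E}$, proved directly: for $\supseteq$, every $X\in E$ is a $\minpos$-closed subset of $\reduct{P}{\exrules{E}}$ (its rules are undefeated because $X$ survives the fragment-level reduct) and hence is contained in $\minpos{\reduct{P}{\exrules{E}}}$; for $\subseteq$, show by induction that each stage $S_i$ of the least-fixpoint computation of $\minpos{\reduct{P}{\exrules{E}}}$ is itself a fragment that no member of $E$ defeats, so $S_i\in\reduct{P}{E}=E$ and therefore $S_i\subseteq\exrules{E}$. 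With that replacement the rest of your plan goes through.
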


From Proposition \ref{pas_gen:proposition:ext_eq_gen} we directly have that the following is an alternative definition of answer sets.

\begin{proposition}
%\label{alt_as:definition:as}
Let $P$ be a logic program and $S$ a consistent set of literals.

$S$ is an answer set of $P$ iff there is a stable fragment set $E$ of $P$ such that $\head{E} = S$.
\end{proposition}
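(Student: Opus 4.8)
The plan is to derive this directly from Proposition \ref{pas_gen:proposition:ext_eq_gen}, which already establishes the key correspondence between stable fragment sets of $P$ and generating sets of $P$, and then combine it with the generating-set definition of answer sets (Definition \ref{definition:as_alt}).

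First I would prove the forward direction. Suppose $S$ is a consistent answer set of $P$. By Definition \ref{definition:as_alt} there is a generating set $R \subseteq P$ with $\head{R} = S$. Set $E = \{ T : T = \minpos{T} \mbox{ and } T \subseteq R \}$, i.e. the set of all fragments contained in $R$. Since $R$ is a generating set, $R$ is trivially a subset of itself, and I claim $\exrules{E} = R$: the inclusion $\exrules{E} \subseteq R$ is immediate from the definition of $E$, and for the reverse inclusion I would note that $R$ is itself a fragment when $R$ is a generating set — here I need the fact that a generating set $R$ satisfies $R = \minpos{\reduct{P}{R}}$, and since $\minpos{\reduct{P}{R}} = \minpos{R}$ is not immediate, I should instead observe that each rule $r \in R$ lies in some fragment $T \subseteq R$ (for instance, take the rules used in the iterative derivation of $\bodyp{r}$ together with $r$ itself, which forms a fragment contained in $R$ because $R$ is closed under the positive-satisfaction process). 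Hence $\exrules{E} = R$, and so $E = \{ T : T = \minpos{T} \mbox{ and } T \subseteq \exrules{E} \}$ with $\exrules{E} = R$ a generating set. By the ``if'' part of Proposition \ref{pas_gen:proposition:ext_eq_gen}, $E$ is a stable fragment set of $P$, and $\head{E} = \head{\exrules{E}} = \head{R} = S$.

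Conversely, suppose $E$ is a stable fragment set of $P$ with $\head{E} = S$ and $S$ consistent. By the ``only if'' part of Proposition \ref{pas_gen:proposition:ext_eq_gen}, $\exrules{E}$ is a generating set of $P$. Then $\head{\exrules{E}} = \head{E} = S$ is consistent, so by Definition \ref{definition:as_alt}, $S$ is an answer set of $P$, with witnessing generating set $R = \exrules{E}$.

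The main obstacle is the forward direction: I must exhibit, from an arbitrary generating set $R$, a set of fragments $E$ whose union is exactly $R$ and which is downward-closed in the required sense. The delicate point is showing $R \subseteq \exrules{E}$, i.e. that every rule of a generating set belongs to some fragment contained in $R$; this rests on the non-cyclic support structure of generating sets (the iterative ``apply a rule whose positive body is already derived'' process underlying $\minpos{\cdot}$), which guarantees that for each $r \in R$ the rules supporting $\bodyp{r}$ together with $r$ form a fragment inside $R$. Once that is in hand, everything else is a direct appeal to Proposition \ref{pas_gen:proposition:ext_eq_gen} and unwinding the definition of $\head{E}$.
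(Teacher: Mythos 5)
Your proof is correct and takes essentially the same route as the paper, which obtains this proposition directly from Proposition \ref{pas_gen:proposition:ext_eq_gen} by passing between a generating set $R$ and the set $E$ of all fragments contained in it, exactly as you do. The one point you hedge on --- that every $r \in R$ lies in some fragment contained in $R$ --- is settled most quickly by observing that a generating set $R = \minpos{\reduct{P}{R}}$ itself satisfies $\minpos{R} = R$ (the iterative construction of $\minpos{\reduct{P}{R}}$ uses only rules that end up in $R$), so $R$ is a fragment, $R \in E$, and $\exrules{E} = R$ follows at once.
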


\begin{example}[Example \ref{alt_as:example:running:3} continued]
$E_1 = \{ F_1, F_2, F_4 \}$ and $E_3 = \{ F_1, F_3\}$ are the only stable fragment sets of the program. The sets $\{ a, x\} = \head{E_1}$ and $\{b\} = \head{E_3}$ are the only answer sets of the program.
\end{example}

%%%%%%%%%%%%%%%%%%%%%%%%%%%%%
\subsection{Preferred Answer Sets}
%%%%%%%%%%%%%%%%%%%%%%%%%%%%%

In this subsection we develop our first definition of preferred answer sets for general conflicts from the alternative definition of answer sets based on stable fragment sets.

The basic intuition behind the approach is that \emph{a rule cannot be defeated by a less preferred conflicting rule}. This intuition is realized by modifying the definition of reduct. We do not allow a fragment $X$ to be removed because of a fragment $Y$ if $Y$ uses less preferred conflicting rules. For this purpose we use the term ``override''.

\begin{definition}[Conflicting Fragments]
Fragments $X$ and $Y$ are \emph{conflicting} iff
\begin{inparaenum}[(i)]
\item $X$ defeats $Y$, and
\item $Y$ defeats $X$.
\end{inparaenum}
\end{definition}

\begin{example}[Example \ref{alt_as:example:running:1} continued]
\label{pas_ind_gen:example:running:1}
Let us recall the fragments:  $F_2 = \{r_2\}$, $F_3 = \{r_3\}$, and $F_4 =\{r_2, r_1\}$. The fragments $F_3$ and $F_4$ are conflicting as $\head{r_3} \in \bodym{r_2}$ and $\head{r_1} \in \bodym{r_3}$. On the other hand, $F_2$ and $F_3$ are not conflicting. The fragment $F_3$ defeats $F_2$, but not the other way around as $\head{r_2} \not \in \bodym{r_3}$.
\end{example}

\begin{definition}[Override]
Let $X$ and $Y$ be conflicting fragments. We say that $X$ \emph{overrides} $Y$ w.r.t. a preference relation $<$ iff
%\begin{itemize}
%\item
for each $r _1 \in X$ that is defeated by $Y$, there is $r_2 \in Y$ defeated by $X$, and $r_2 < r_1$.
%\end{itemize}
\end{definition}

\begin{example}[Example \ref{pas_ind_gen:example:running:1} continued]
\label{pas_ind_gen:example:running:2}
Let us continue with preference $r_2 < r_3$. We have that $F_3$ overrides $F_4$ and $F_3$ overrides $F_6$. On the other hand $F_3$ does not override $F_2$ because $F_2$ does not defeat $F_3$. From the following Proposition \ref{pas_gen:proposition:override_asym} we also have that $F_6$ does not override $F_6$.
\end{example}

\begin{proposition}
\label{pas_gen:proposition:override_asym}
Let $\lpp{P} = (P,<)$ be a logic program with preferences, $X$ and $Y$ be fragments of $P$.

If $X$ overrides $Y$ w.r.t. $<$, then $Y$ does not override $X$ w.r.t. $<$.
\end{proposition}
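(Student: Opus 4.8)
The plan is to argue by contradiction, reducing the claim to the elementary fact that a transitive and asymmetric relation on a \emph{finite} set admits no cyclic descending chain. Assume, for contradiction, that $X$ overrides $Y$ and $Y$ overrides $X$ w.r.t. $<$. Both the Override definition and the contradiction hypothesis presuppose that $X$ and $Y$ are conflicting, so $X$ defeats $Y$ and $Y$ defeats $X$. Recalling that a set of rules defeats a rule $r$ iff it contains a rule whose head lies in $\bodym{r}$, put $A = \{\, r \in X : Y \text{ defeats } r \,\}$ and $B = \{\, s \in Y : X \text{ defeats } s \,\}$. Because $X$ defeats $Y$ and $Y$ defeats $X$, both $A$ and $B$ are nonempty, and they are finite since $P$ is finite and $X, Y \subseteq P$.

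Next I would extract the two choice functions guaranteed by the Override definition. From ``$X$ overrides $Y$'' we get, for every $r \in A$, some $\phi(r) \in B$ with $\phi(r) < r$; from ``$Y$ overrides $X$'' we get, for every $s \in B$, some $\psi(s) \in A$ with $\psi(s) < s$. (Here one must keep the reading of $<$ straight: $r_2 < r_1$ means $r_1$ is preferred over $r_2$, which is exactly the inequality furnished by the Override definition for the defeated rule against its overrider.) Composing, $\psi \circ \phi$ maps the finite nonempty set $A$ into itself, and by transitivity of $<$ we have $\psi(\phi(r)) < r$ for every $r \in A$.

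Finally, pick any $r_0 \in A$ and define $r_{n+1} = \psi(\phi(r_n))$; this is well defined for all $n$ and produces a chain with $r_{n+1} < r_n$ for every $n$. Since $A$ is finite, there are indices $i < j$ with $r_i = r_j$, and transitivity of $<$ along $r_j < r_{j-1} < \dots < r_i$ yields $r_i = r_j < r_i$, contradicting asymmetry of $<$ (instantiated at a single element, asymmetry gives irreflexivity). Hence the assumption is untenable and $Y$ does not override $X$.

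I do not expect a genuine obstacle here: the entire mathematical content is the no-descending-cycle observation, and everything else is bookkeeping. The one place to be careful is justifying $A \neq \emptyset$ and $B \neq \emptyset$ — this is precisely what being \emph{conflicting} provides — and, as noted, keeping the orientation of $<$ consistent with the Override definition.
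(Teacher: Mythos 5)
Your proof is correct and is essentially the standard argument one would give here: the two override hypotheses yield choice maps $\phi\colon A\to B$ and $\psi\colon B\to A$ that strictly decrease w.r.t.\ $<$, and a strictly descending chain in the finite set $A$ contradicts transitivity plus asymmetry (hence irreflexivity) of $<$. The two points that need care --- nonemptiness of $A$ and $B$ from the fragments being conflicting, and the orientation of $<$ in the Override definition --- are both handled correctly.
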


%The definition of overriding is a parameter in our approach. We have developed the definition with the view of preferences as a mechanism for encoding exceptions between rules, and support it by showing the properties of the semantics. If different interpretation of preferences is used, it is reasonable to expect that a different definition of overriding or a completely different approach to preferred answer sets is needed.

When constructing the reduct w.r.t. a guess, a fragment $X$ cannot be removed because of a fragment $Y$ which is overridden by $X$.

\begin{definition}[Reduct]
Let $\lpp{P} = (P,<)$ be a logic program with preferences, and $E \subseteq \frag{P}$.

The reduct $\reduct{\lpp{P}}{E}$ of $\lpp{P}$ w.r.t. $E$ is obtained from $\frag{P}$ by removing each $X \in \frag{P}$ such that there is $Y \in E$ that:
\begin{itemize}
\item $Y$ defeats $X$, and
\item $X$ does not override $Y$ w.r.t. $<$.
\end{itemize}
\end{definition}

\begin{example}[Example \ref{pas_ind_gen:example:running:2} continued]
\label{pas_ind_gen:example_run_3}
Let $E_1 = \{ F_1, F_2, F_4\}$. We have that $\reduct{\lpp{P}}{E_1} = \{F_1, F_2, F_3, F_4\}$. Now, the fragment $F_3$ is not removed as the only fragment from $E_1$ that defeats it is $F_4$, but $F_3$ overrides $F_4$.
\end{example}

\begin{definition}[Preferred stable fragment set]
Let $\lpp{P} = (P,<)$ be a logic program with preferences., and $E \subseteq \frag{P}$.

We say that $E$ is a \emph{preferred stable fragment set} of $\lpp{P}$ iff $\reduct{\lpp{P}}{E} = E$.
\end{definition}

\begin{example}[Example \ref{pas_ind_gen:example:running:2} continued]
\label{pas_ind_gen:example:running:3}
Now we have that $\reduct{\lpp{P}}{E_1} \neq E_1$, so $E_1$ is not a preferred stable fragment set. On the other hand, $E_3 = \{ F_1, F_3\}$ is a preferred stable fragment set as $\reduct{\lpp{P}}{E_3}  = E_3$.
\end{example}

\begin{definition}[Preferred answer set]
\label{pas_gen:definition:pas}
Let $\lpp{P} = (P,<)$ be a logic program with preferences, and $S$ be a consistent set of literals.

$S$ is a \emph{preferred answer set} of $\lpp{P}$ iff there is a preferred stable fragment set $E$ of $\lpp{P}$ such that $\head{E} = S$.

We will use $\pasindgen{\lpp{P}}$ to denote the set of all the preferred answer sets of $\lpp{P}$ according to this definition.
\end{definition}

\begin{example}[Example \ref{pas_ind_gen:example:running:3} continued]
\label{pas_ind_gen:example_run_4}
The set $E_3 = \{F_1, F_3\}$ is the only preferred stable fragment set, and $\{b\} = \head{E_3}$ is the only preferred answer set of the program.
\end{example}

\begin{proposition}
\label{pas_gen:proposition:pref_ext_is_ext}
Let $\lpp{P} = (P,<)$ be a logic program with preferences, and $E \subseteq \frag{P}$.

If $E$ is a preferred stable fragment set of $\lpp{P}$, then $E$ is a stable fragment set of $P$.
\end{proposition}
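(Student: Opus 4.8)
The plan is to compare the two reduct operations directly and show that stability under the preference-aware reduct forces stability under the plain reduct, i.e. to prove both inclusions $\reduct{P}{E} \subseteq E$ and $E \subseteq \reduct{P}{E}$ under the hypothesis $\reduct{\lpp{P}}{E} = E$.

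First I would observe that for any $E \subseteq \frag{P}$ one has $\reduct{P}{E} \subseteq \reduct{\lpp{P}}{E}$. Both reducts are obtained from $\frag{P}$ by deleting certain fragments: the plain reduct deletes every $X$ for which some $Y \in E$ defeats $X$, whereas the preferred reduct deletes $X$ only when moreover $X$ does not override $Y$ w.r.t. $<$. Hence every fragment deleted when forming $\reduct{\lpp{P}}{E}$ is also deleted when forming $\reduct{P}{E}$, so $\reduct{\lpp{P}}{E} \supseteq \reduct{P}{E}$. Applying this with $E$ a preferred stable fragment set of $\lpp{P}$ gives $\reduct{P}{E} \subseteq \reduct{\lpp{P}}{E} = E$, which is the first inclusion.

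For the converse inclusion $E \subseteq \reduct{P}{E}$, I would argue by contradiction. Suppose some $X \in E$ does not survive in $\reduct{P}{E}$; then there is $Y \in E$ that defeats $X$. Since $X \in E = \reduct{\lpp{P}}{E}$, the fragment $X$ is not deleted by the preferred reduct, so the witness $Y$ cannot satisfy the preferred deletion condition; as $Y$ does defeat $X$, this forces $X$ to override $Y$ w.r.t. $<$. By the definition of override, $X$ and $Y$ are then conflicting fragments, so in particular $X$ defeats $Y$; and by Proposition \ref{pas_gen:proposition:override_asym}, $Y$ does not override $X$. But then $Y$ itself satisfies the preferred deletion condition witnessed by $X \in E$ (namely $X$ defeats $Y$ and $Y$ does not override $X$), so $Y \notin \reduct{\lpp{P}}{E} = E$, contradicting $Y \in E$. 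Hence no such $X$ exists, i.e. $E \subseteq \reduct{P}{E}$.

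Combining the two inclusions yields $\reduct{P}{E} = E$, which is exactly the statement that $E$ is a stable fragment set of $P$. The only non-routine ingredient is the asymmetry of the override relation (Proposition \ref{pas_gen:proposition:override_asym}): it is what rules out a ``cycle'' in which $X$ and $Y$ each protect themselves against the other while jointly defeating one another, and so it is the step I expect to carry the real weight of the argument — everything else is bookkeeping with the two reduct definitions.
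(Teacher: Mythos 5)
Your proof is correct: the inclusion $\reduct{P}{E} \subseteq \reduct{\lpp{P}}{E}$ follows because the preference-aware deletion condition strictly strengthens the plain one, and the converse inclusion is exactly where the asymmetry of override (Proposition~\ref{pas_gen:proposition:override_asym}) is needed to turn a would-be mutual protection between $X$ and $Y$ into a contradiction with $\reduct{\lpp{P}}{E}=E$ (the degenerate case $X=Y$ also collapses immediately under asymmetry). The paper defers its proof to the technical report, but this two-inclusion argument is the natural and evidently intended one.
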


%%%%%%%%%%%%%%%%%%%%%%%%%%%%%
\subsection{Properties}
%%%%%%%%%%%%%%%%%%%%%%%%%%%%%

Preferred answer sets as defined in Definition \ref{pas_gen:definition:pas} enjoy following nice properties.

\begin{proposition}
Let $\lpp{P} = (P,<)$ be a logic program with preferences. Then $\pasindgen{\lpp{P}} \subseteq \as{P}$.
\end{proposition}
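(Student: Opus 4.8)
The plan is to read the statement off from two facts already in place in the excerpt. First, Proposition~\ref{pas_gen:proposition:pref_ext_is_ext} tells us that every preferred stable fragment set of $\lpp{P} = (P,<)$ is a stable fragment set of the underlying program $P$. Second, the alternative characterization of answer sets (the proposition immediately following Proposition~\ref{pas_gen:proposition:ext_eq_gen}) tells us that a consistent set of literals $S$ is an answer set of $P$ precisely when $S = \head{E}$ for some stable fragment set $E$ of $P$. Chaining these two facts yields the inclusion.

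In detail, I would take an arbitrary $S \in \pasindgen{\lpp{P}}$. By Definition~\ref{pas_gen:definition:pas}, $S$ is consistent and there is a preferred stable fragment set $E$ of $\lpp{P}$ with $\head{E} = S$. By Proposition~\ref{pas_gen:proposition:pref_ext_is_ext}, $E$ is then a stable fragment set of $P$. Applying the alternative characterization of answer sets to this $E$ and $S = \head{E}$ gives $S \in \as{P}$. Since $S$ was arbitrary, $\pasindgen{\lpp{P}} \subseteq \as{P}$.

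At the level of this proposition there is essentially no obstacle; the substance is entirely inherited from Proposition~\ref{pas_gen:proposition:pref_ext_is_ext}, which may be invoked directly. If one wanted to see where the real work sits, it is the comparison of the two reducts: $\reduct{\lpp{P}}{E}$ discards a fragment $X$ only under the strictly stronger condition ``some $Y \in E$ defeats $X$ \emph{and} $X$ does not override $Y$ w.r.t.\ $<$'', so $\reduct{P}{E} \subseteq \reduct{\lpp{P}}{E} = E$ whenever $E$ is a preferred stable fragment set. The delicate direction is the reverse inclusion $E \subseteq \reduct{P}{E}$, i.e.\ that no $X \in E$ is defeated in $P$ by any $Y \in E$; here one argues by contradiction, using that $X$ overriding $Y$ forces $X$ and $Y$ to be conflicting and hence mutually defeating, and then the asymmetry of override (Proposition~\ref{pas_gen:proposition:override_asym}) together with the fact that both $X$ and $Y$ survive the preferred reduct to reach a contradiction. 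For the present statement, all of this is already encapsulated in the cited proposition.
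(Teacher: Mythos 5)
Your proof is correct and follows exactly the route the paper intends: chain Proposition~\ref{pas_gen:proposition:pref_ext_is_ext} with the stable-fragment-set characterization of answer sets, noting that the consistency of $S$ required by that characterization is already guaranteed by Definition~\ref{pas_gen:definition:pas}. Your supplementary sketch of why Proposition~\ref{pas_gen:proposition:pref_ext_is_ext} holds (comparing the two reducts and using the asymmetry of override) is also sound, but is not needed for this statement.
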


\begin{proposition}
Let $\lpp{P} = (P,\emptyset)$ be a logic program with preferences. Then $\pasindgen{\lpp{P}} = \as{P}$.
\end{proposition}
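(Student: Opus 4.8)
The plan is to show that when the preference relation is empty the preferences-aware reduct collapses onto the preference-free reduct of the Alternative Definition of Answer Sets, so that preferred stable fragment sets of $\lpp{P} = (P,\emptyset)$ are exactly the stable fragment sets of $P$; the claim then follows by taking heads.

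First I would analyse the \emph{override} relation under $< = \emptyset$. Recall that ``$X$ overrides $Y$'' is only asserted for conflicting fragments $X,Y$, and it requires that for each $r_1 \in X$ defeated by $Y$ there is $r_2 \in Y$ defeated by $X$ with $r_2 < r_1$. If $< = \emptyset$ there is no pair with $r_2 < r_1$ at all; and conflictingness implies in particular that $Y$ defeats $X$, i.e.\ $\head{Y} \cap \bodym{r_1} \neq \emptyset$ for some $r_1 \in X$, so there \emph{is} a rule $r_1 \in X$ defeated by $Y$. Hence the universally quantified condition in the definition of override fails, and $X$ does not override $Y$. Conclusion: for $\lpp{P} = (P,\emptyset)$ no fragment overrides any fragment.

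Next I would compare the two reduct operators on an arbitrary guess $E \subseteq \frag{P}$. The preference-free reduct removes $X$ from $\frag{P}$ iff some $Y \in E$ defeats $X$; the preference reduct removes $X$ iff some $Y \in E$ defeats $X$ \emph{and} $X$ does not override $Y$ w.r.t.\ $<$. If $Y$ defeats $X$ but $X$ and $Y$ are not conflicting, then $X$ does not override $Y$ (override is only defined for conflicting fragments, so it is vacuously false here); if $Y$ defeats $X$ and $X,Y$ are conflicting, then $X$ does not override $Y$ by the previous paragraph. In both cases the extra clause ``$X$ does not override $Y$'' is automatically satisfied, so the two removal conditions coincide and $\reduct{\lpp{P}}{E} = \reduct{P}{E}$ for every $E \subseteq \frag{P}$.

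Finally, since the two reducts agree, $E$ is a preferred stable fragment set of $(P,\emptyset)$ iff $E = \reduct{\lpp{P}}{E} = \reduct{P}{E}$, i.e.\ iff $E$ is a stable fragment set of $P$. Applying the alternative characterization of answer sets (a consistent $S$ is an answer set of $P$ iff $S = \head{E}$ for some stable fragment set $E$ of $P$) together with Definition \ref{pas_gen:definition:pas} gives $\pasindgen{\lpp{P}} = \as{P}$. The only delicate point — more a subtlety than an obstacle — is the reading of ``$X$ does not override $Y$'' on non-conflicting pairs as vacuously true, and the observation that conflictingness is exactly what guarantees the existence of a rule $r_1 \in X$ defeated by $Y$, which is what makes the empty preference relation force the override condition to fail.
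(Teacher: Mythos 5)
Your proof is correct and follows the natural route: under $<=\emptyset$ the override relation is empty (for conflicting $X,Y$ there is at least one $r_1\in X$ defeated by $Y$ but no $r_2<r_1$, and for non-conflicting pairs override is vacuously false), so the two reduct operators coincide on every $E\subseteq\frag{P}$ and (preferred) stable fragment sets agree. This is essentially the same argument as the paper's, and you correctly handle the one subtle point, namely reading ``$X$ does not override $Y$'' as true for non-conflicting pairs.
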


\begin{proposition}
Preferred answer sets as defined in Definition \ref{pas_gen:definition:pas} satisfy Principle \ref{principle:i}.
\end{proposition}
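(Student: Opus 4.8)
The plan is to argue by contradiction. Suppose $S_2 \in \pasindgen{\lpp{P}}$, witnessed by a preferred stable fragment set $E$ of $\lpp{P}$ with $\head{E} = S_2$. By Proposition~\ref{pas_gen:proposition:pref_ext_is_ext}, $E$ is a stable fragment set of $P$, so by Proposition~\ref{pas_gen:proposition:ext_eq_gen} the set $\exrules{E}$ is a generating set of $P$ and $E = \{ T : T = \minpos{T} \mbox{ and } T \subseteq \exrules{E} \}$. Since $\head{\exrules{E}} = \head{E} = S_2$ is consistent, $\exrules{E} = \agras{S_2}{P} = R \cup \{ r_2 \}$. As $<$ is asymmetric, $r_1 \neq r_2$; and we assume $r_1 \notin R$ and $r_2 \notin R$, which is the usual reading of Principle~\ref{principle:i} (there $R = \agras{S_1}{P} \cap \agras{S_2}{P}$). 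The goal is to exhibit a fragment $X$ of $P$ with $X \in \reduct{\lpp{P}}{E}$ but $X \notin E$, contradicting that $E$, being a preferred stable fragment set, satisfies $\reduct{\lpp{P}}{E} = E$. I would take $X := \agras{S_1}{P} = R \cup \{ r_1 \}$; this is a fragment of $P$ (generating sets are fragments, which is straightforward from the definitions and the idempotence of $\minpos$), and $\head{X} = S_1$. Moreover $X \notin E$, because $r_1 \in X$ while $r_1 \notin R \cup \{ r_2 \} = \exrules{E}$. So everything reduces to showing that $X$ is not removed when $\reduct{\lpp{P}}{E}$ is formed, i.e. that $X$ overrides every $Y \in E$ that defeats $X$.

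First I would pin down which $Y \in E$ can defeat $X$ and how. Because $Y \subseteq \exrules{E} = R \cup \{ r_2 \}$, we have $\head{Y} \subseteq \head{R} \cup \{ \head{r_2} \} = S_2$, whereas every $r \in R \subseteq \agras{S_2}{P}$ satisfies $\bodym{r} \cap S_2 = \emptyset$; hence $Y$ defeats no rule of $R$, so if $Y$ defeats $X$ it defeats $r_1$. Fix $l \in \head{Y} \cap \bodym{r_1}$. Since $\head{R} \subseteq \head{X} = S_1$ while $\bodym{r_1} \cap S_1 = \emptyset$ (as $r_1 \in \agras{S_1}{P}$), $l \notin \head{R}$; combined with $\head{Y} \subseteq \head{R} \cup \{ \head{r_2} \}$ this forces $l = \head{r_2}$, and the same argument shows that the rule of $Y$ whose head is $l$ cannot be in $R$, hence $r_2 \in Y$. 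So every $Y \in E$ that defeats $X$ contains $r_2$ and forces $\head{r_2} \in \bodym{r_1}$; and for such a $Y$ the only rule of $X$ that $Y$ defeats is $r_1$. In particular, if $\head{r_2} \notin \bodym{r_1}$ then no $Y \in E$ defeats $X$, so $X \in \reduct{\lpp{P}}{E} \setminus E$ and we are done.

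Otherwise, assume $\head{r_2} \in \bodym{r_1}$ and fix any $Y \in E$ with $r_2 \in Y$ that defeats $X$. By the definition of \emph{overrides}, and since $r_1$ is the only rule of $X$ defeated by $Y$, $X$ overrides $Y$ as soon as $X$ and $Y$ are conflicting and some $r' \in Y$ defeated by $X$ satisfies $r' < r_1$. Taking $r' := r_2$ (we already have $r_2 \in Y$ and $r_2 < r_1$), both requirements follow from the single claim $\bodym{r_2} \cap S_1 \neq \emptyset$: it says $X$ defeats $r_2$ (recall $\head{X} = S_1$), hence $X$ defeats $Y$, which together with $Y$ defeating $X$ makes $X$ and $Y$ conflicting. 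This claim is the heart of the argument, and I would prove it as follows. Since $r_2 \neq r_1$ and $r_2 \notin R$, we have $r_2 \notin \agras{S_1}{P}$, so $\bodyp{r_2} \not\subseteq S_1$ or $\bodym{r_2} \cap S_1 \neq \emptyset$; it suffices to refute the first alternative. If $l \in \bodyp{r_2} \setminus S_1$, then from $\bodyp{r_2} \subseteq S_2 = \head{R} \cup \{ \head{r_2} \}$ and $\head{R} \subseteq S_1$ we get $l = \head{r_2}$, so $\head{r_2} \in \bodyp{r_2}$ and $\head{r_2} \notin S_1$. But $\agras{S_2}{P} = \minpos{\reduct{P}{\agras{S_2}{P}}}$, and in the stagewise construction of $\minpos$ the first stage at which $r_2$ is added has so far collected only rules from $\agras{S_2}{P} \setminus \{ r_2 \} = R$, whose heads lie in $\head{R} \subseteq S_1$; yet $\head{r_2} \in \bodyp{r_2}$ must already be among those heads, contradicting $\head{r_2} \notin S_1$. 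Hence $\bodym{r_2} \cap S_1 \neq \emptyset$, so $X$ overrides $Y$; as this holds for every $Y \in E$ that defeats $X$, we obtain $X \in \reduct{\lpp{P}}{E} \setminus E$, the desired contradiction.

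I expect the last step --- establishing $\bodym{r_2} \cap S_1 \neq \emptyset$ by ruling out $\bodyp{r_2} \not\subseteq S_1$ --- to be the main obstacle: it is the only place where the non-cyclic, stagewise nature of $\minpos$ genuinely enters, and where one must be careful about rules whose head occurs in their own positive body. Everything else is bookkeeping with the definitions of generating set, defeat, and override.
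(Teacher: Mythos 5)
Your argument is correct and follows the route one would expect from the paper's definitions (the paper itself defers this proof to its technical report): you exhibit $X=\agras{S_1}{P}$ as a fragment that survives the preferred reduct w.r.t.\ $E$ yet lies outside $E$, the crux being that $X$ defeats $r_2$, which you establish by ruling out $\bodyp{r_2}\not\subseteq S_1$ via the stagewise construction of $\symminpos$. The two steps you leave implicit --- that generating sets are fragments, and the iterative (least-fixed-point) characterization of $\symminpos$ --- are both sound, and your explicit reading of Principle \ref{principle:i} (that $r_1\neq r_2$ and $r_1,r_2\notin R$) is the standard one.
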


\begin{proposition}
Let $\lpp{P}_1 = (P, <_1)$, $\lpp{P}_2 = (P, <_2)$ be logic programs with preferences such that $<_1 \subseteq <_2$.

Then $\pasindgen{\lpp{P}_2} \subseteq \pasindgen{\lpp{P}_1}$.
\end{proposition}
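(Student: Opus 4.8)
The plan is to prove the stronger statement at the level of fragment sets: \emph{every preferred stable fragment set $E$ of $\lpp{P}_2 = (P,<_2)$ is a preferred stable fragment set of $\lpp{P}_1 = (P,<_1)$}; taking heads then yields $\pasindgen{\lpp{P}_2} \subseteq \pasindgen{\lpp{P}_1}$. The engine of the argument is a monotonicity remark about the override relation: if $<_1 \subseteq <_2$ and conflicting fragments $X,Y$ are such that $X$ overrides $Y$ w.r.t.\ $<_1$, then $X$ overrides $Y$ w.r.t.\ $<_2$ as well. Indeed, the preconditions (being conflicting, and the defeat relations between $X$ and $Y$) do not mention the preference order at all, and every witnessing pair satisfying $r_2 <_1 r_1$ also satisfies $r_2 <_2 r_1$. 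Contrapositively: ``$X$ does not override $Y$ w.r.t.\ $<_2$'' implies ``$X$ does not override $Y$ w.r.t.\ $<_1$''.

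Now fix a preferred stable fragment set $E$ of $\lpp{P}_2$; the goal is $\reduct{\lpp{P}_1}{E} = E$. For $E \subseteq \reduct{\lpp{P}_1}{E}$: by Proposition~\ref{pas_gen:proposition:pref_ext_is_ext}, $E$ is a stable fragment set of $P$, i.e.\ $\reduct{P}{E} = E$; since the plain reduct deletes a fragment as soon as some member of $E$ defeats it, $\reduct{P}{E} = E$ forces that no $Y \in E$ defeats any $X \in E$. Hence for $X \in E$ the deletion clause of the preferred reduct can never fire --- it would require some $Y \in E$ to defeat $X$, and no such $Y$ exists --- so $X \in \reduct{\lpp{P}_1}{E}$ regardless of $<_1$. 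For $\reduct{\lpp{P}_1}{E} \subseteq E$: take a fragment $X \in \reduct{\lpp{P}_1}{E}$ and suppose $X \notin E$. Since $\reduct{\lpp{P}_2}{E} = E$, the fragment $X$ was deleted when forming $\reduct{\lpp{P}_2}{E}$, so there is $Y \in E$ that defeats $X$ while $X$ does not override $Y$ w.r.t.\ $<_2$; by the monotonicity remark $X$ does not override $Y$ w.r.t.\ $<_1$ either, so $X$ would be deleted when forming $\reduct{\lpp{P}_1}{E}$, contradicting $X \in \reduct{\lpp{P}_1}{E}$. Thus $\reduct{\lpp{P}_1}{E} = E$, $E$ is a preferred stable fragment set of $\lpp{P}_1$, and $\head{E} \in \pasindgen{\lpp{P}_1}$.

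The step I expect to be the real obstacle --- or at least the one that must be got right --- is the inclusion $E \subseteq \reduct{\lpp{P}_1}{E}$. Shrinking the preference order makes overriding strictly harder to attain, so one might naively fear that some $X \in E$ loses an override it needed in order to survive the reduct, destroying stability. The resolution is precisely the observation above: a fragment belonging to a stable fragment set is never defeated by another member of that set, so its survival never depends on overriding anything, and the preference order is irrelevant for it; the companion inclusion $\reduct{\lpp{P}_1}{E} \subseteq E$ is then pure monotonicity. (A small check shows the inclusion $\pasindgen{\lpp{P}_2} \subseteq \pasindgen{\lpp{P}_1}$ is in general proper --- e.g.\ a two-rule program consisting of a pair of directly conflicting rules, with $<_1$ empty and $<_2$ making one of the two preferred, loses one of its two preferred answer sets.)
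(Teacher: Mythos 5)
Your proof is correct, and it follows the natural route one would expect for this proposition (the paper itself defers the proof to its technical report): the monotonicity of ``overrides'' in the preference relation handles the inclusion $\reduct{\lpp{P}_1}{E} \subseteq E$, and you correctly identify and close the only delicate point --- that members of $E$ might seem to need an override to survive --- by invoking Proposition~\ref{pas_gen:proposition:pref_ext_is_ext} to conclude that no member of a stable fragment set defeats another, so no member's survival ever depends on the preference order. Your concluding example of properness also checks out.
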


\begin{comment}
As we have argued in Preliminaries, the semantics should violate Principle \ref{principle:ii} and Principle \ref{principle:iii}, which is the case here.

\begin{proposition}
Preferred answer sets as defined in Definition \ref{pas_gen:definition:pas} violate Principle \ref{principle:ii}.
\end{proposition}

\begin{proposition}
Preferred answer sets as defined in Definition \ref{pas_gen:definition:pas} violate Principle \ref{principle:iii}.
\end{proposition}
\end{comment}

On the subclass of stratified programs, the semantics is equivalent to the answer set semantics. We consider this property to be an important one as stratified programs contain no conflicts.

\begin{proposition}
\label{pas_ind_gen:proposition:strat}
Let $\lpp{P} = (P,<)$ be a logic program with preferences such that $P$ is stratified. Then $\pasindgen{\lpp{P}} = \as{P}$.
\end{proposition}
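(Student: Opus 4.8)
The plan is to show both inclusions. The inclusion $\pasindgen{\lpp{P}} \subseteq \as{P}$ is already established (it holds for arbitrary $\lpp{P}$), so the real work is $\as{P} \subseteq \pasindgen{\lpp{P}}$: every answer set of a stratified $P$ must be preferred. By the alternative characterization, it suffices to show that every stable fragment set $E$ of $P$ is already a preferred stable fragment set of $\lpp{P} = (P,<)$, i.e. that $\reduct{P}{E} = \reduct{\lpp{P}}{E}$. Since the preferred reduct removes a subset of what the ordinary reduct removes (a fragment survives the preferred reduct whenever it survives the ordinary one, because the removal condition is strictly stronger: it additionally requires non-override), we always have $\reduct{P}{E} \subseteq \reduct{\lpp{P}}{E}$. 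So the crux is the reverse: no fragment $X$ that is removed by the ordinary reduct can be spared by the preferred reduct. Equivalently: if $Y \in E$ defeats $X$ and $X$ overrides $Y$ w.r.t. $<$, we must derive a contradiction with $E$ being a stable fragment set of a stratified program.

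The key idea is that $X$ overriding $Y$ forces $X$ and $Y$ to be conflicting fragments, hence $X$ defeats $Y$ as well; and moreover the override condition hands us, for the rule $r_1 \in X$ defeated by $Y$, a rule $r_2 \in Y$ defeated by $X$. Unfolding ``defeats'' through fragments, $X$ defeats $Y$ means some $\head{r} \in \bodym{r'}$ with $r \in X$, $r' \in Y$ — and since $X, Y$ are fragments ($\minpos{X} = X$, $\minpos{Y} = Y$), all rules in $X$ and $Y$ have their positive bodies supported acyclically within $X$ resp. $Y$. I would then use stratification: fix a level mapping $\lambda$ on atoms witnessing that $P$ is stratified (positive body atoms at level $\le$ head, negative body atoms at level $<$ head, for every rule). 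Tracing the mutual-defeat chain between $X$ and $Y$ together with the positive support inside the fragments, I expect to build a cycle of atoms along which the level strictly decreases at the ``$\dnot$'' edges and is non-increasing along positive-support edges — contradicting stratification. Concretely: from $r_1 \in X$ defeated by $Y$ and $r_2 \in Y$ defeated by $X$, the heads and negative bodies interlock; chasing positive support of the relevant rules back to facts inside each fragment and combining the two negative-body links yields a strict decrease that loops, which is impossible.

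The main obstacle will be making the cycle argument precise: a fragment can contain many rules, and ``$X$ defeats $Y$'' only asserts the existence of one defeating pair, so one has to be careful to pick the right rules and chase positive support inside the fragments to close the loop. A cleaner route may be to argue at the level of literals directly: an answer set $S = \head{E}$ of a stratified program is the unique answer set, and is the perfect model, computable stratum by stratum; then show that any $Y$ defeating a fragment $X \in \reduct{P}{E}$-candidate that is nonetheless overridden would require two complementary-ish dependency directions that a level mapping forbids. I would first try the level-mapping/cycle argument sketched above, and fall back to the perfect-model argument if the bookkeeping becomes unwieldy. Either way, once $\reduct{P}{E} = \reduct{\lpp{P}}{E}$ is shown for every stable fragment set $E$, stability of $E$ for $P$ gives stability for $\lpp{P}$, hence $\head{E}$ is a preferred answer set, completing $\as{P} \subseteq \pasindgen{\lpp{P}}$ and therefore the equality.
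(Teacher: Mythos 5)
Your setup is sound up to a point: the inclusion $\pasindgen{\lpp{P}} \subseteq \as{P}$ is free, $E \subseteq \reduct{\lpp{P}}{E}$ holds for every stable fragment set $E$ because the preferred reduct's removal condition is strictly stronger, and the proof does come down to showing that every fragment $X \notin E$ is still removed in $\reduct{\lpp{P}}{E}$. The gap is in your ``equivalently''. Removal in the preferred reduct is an \emph{existential} condition --- there must be \emph{some} $Y \in E$ that defeats $X$ and is not overridden by $X$ --- but you replace it by the \emph{universal} claim that \emph{no} $Y \in E$ defeating $X$ can be overridden by $X$, and then try to refute ``$Y \in E$ defeats $X$ and $X$ overrides $Y$'' outright. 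That stronger claim is false, and so is the lemma underneath it (that mutually defeating fragments cannot coexist with stratification). Take $r_1: \lrule{a}{\dnot b}$, $r_2: \lrule{b}{}$, $r_3: \lrule{c}{\dnot a}$ with $r_3 < r_1$. This is stratified ($\lambda(b)=0<\lambda(a)=1<\lambda(c)=2$); the unique answer set is $\{b,c\}$ with $\exrules{E}=\{r_2,r_3\}$. The fragments $X=\{r_1\}$ and $Y=\{r_2,r_3\}\in E$ defeat each other ($\head{r_2}=b\in\bodym{r_1}$ and $\head{r_1}=a\in\bodym{r_3}$), and $X$ overrides $Y$ because the only rule of $X$ defeated by $Y$ is $r_1$ and $r_3\in Y$ is defeated by $X$ with $r_3<r_1$. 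No dependency cycle arises: the two defeat links go through \emph{unrelated} rules of $Y$ (a fragment need not be positively connected), so the chain you hope to close simply does not close and the level mapping is consistent with the mutual defeat. The proposition survives here only because a \emph{different} member of $E$, namely $\{r_2\}$, defeats $X$ and cannot be overridden, since a fact cannot be defeated back. Your fallback ``perfect model'' route targets the same false universal statement, so it inherits the same problem.

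What the proof actually needs is to \emph{construct}, for each removed $X$, one witness $Y\in E$ that defeats $X$ and is not overridden. One workable route: among all $r'\in X$ and $l\in\bodym{r'}\cap S$ (where $S=\head{E}$), pick a pair with $\lambda(l)$ minimal; take $r\in\exrules{E}$ with $\head{r}=l$ and let $Y$ be the minimal fragment of $\exrules{E}$ supporting $r$, so every head in $Y$ lies in a stratum $\leq\lambda(l)$. If some $q\in X$ defeated a rule of $Y$, then $\lambda(\head{q})<\lambda(l)$, so by the minimality of $\lambda(l)$ no rule in the positive support of $q$ inside $X$ is defeated by $S$; that support therefore sits inside $\minpos{\reduct{P}{\exrules{E}}}=\exrules{E}$, forcing $\head{q}\in S$ and contradicting $Y\subseteq\agras{S}{P}$. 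Hence $X$ does not defeat $Y$, the two are not conflicting, no override is possible, and $X$ is removed. This existential, stratum-minimizing selection of the witness is the missing content of your plan.
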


The following example illustrates how the approach works on stratified programs.

\begin{example}
\label{example:pasi_strat}
Consider a problematic program from \cite{Brewka:1999uv}:
\begin{tabbing}
\laarule{r_1}{a}{\dnot b}\\
\lbrule{r_2}{b}{}\\
\\
\tgo $r_2 < r_1$
\end{tabbing}

The program is stratified and has a unique answer set $S = \{b\}$.

The program has the following fragments $F_0 = \emptyset$, $F_1 = \{r_1\}$, $F_2 = \{ r_2 \}$, $F_3 = \{ r_1, r_2 \}$. The set $E = \{ F_0, F_2\}$ is a unique stable fragment set.

We have that $F_2$ defeats both $F_1$, and $F_3$. Neither $F_1$ nor $F_3$ override $F_2$ as they are not conflicting with $F_2$. This is the reason why preference $r_2 < r_1$ is ignored here, and both $F_1$ and $F_3$ are removed during the reduction: $\reduct{\lpp{P}}{E} = \{F_0, F_2\} = E$. Therefore $S$ is a unique preferred answer set.
\end{example}

From the computational complexity point of view, so far, we have established only the upper bound. Establishing the lower bound remains among open problems for future work.

\begin{proposition}
Given a logic program with preferences $\lpp{P}$, deciding whether $\lpp{P}$ has a preferred answer set  is in $\Sigma_3^P$.
\end{proposition}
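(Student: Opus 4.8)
The plan is to exhibit a $\Sigma_3^P$ decision procedure of the usual $\exists\forall\exists$ guess-and-check shape with a polynomial-time matrix. The first thing to notice is the obstacle: $\frag{P}$, and hence a preferred stable fragment set $E$, may be exponentially large, so $E$ cannot be guessed explicitly. The key observation that removes this obstacle is that $E$ has a succinct witness. By Proposition~\ref{pas_gen:proposition:pref_ext_is_ext} a preferred stable fragment set of $\lpp{P}$ is a stable fragment set of $P$, so by Proposition~\ref{pas_gen:proposition:ext_eq_gen} it is completely determined by the set of rules $R = \exrules{E} \subseteq P$ — an object of polynomial size — via $E = \{ T \subseteq R : \minpos{T} = T \}$. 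Conversely, for any $R \subseteq P$ one may form $E_R = \{ T \subseteq R : \minpos{T} = T \}$, and if $\head{R}$ is consistent and $\reduct{\lpp{P}}{E_R} = E_R$, then $E_R$ is by definition a preferred stable fragment set, hence (Proposition~\ref{pas_gen:proposition:pref_ext_is_ext}, Proposition~\ref{pas_gen:proposition:ext_eq_gen}) $\head{\exrules{E_R}} = \head{E_R}$ is an answer set of $P$ and therefore a preferred answer set of $\lpp{P}$ by Definition~\ref{pas_gen:definition:pas}. So the problem reduces to: does there exist $R \subseteq P$ with $\head{R}$ consistent and $\reduct{\lpp{P}}{E_R} = E_R$?

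Next I would unfold the fixpoint condition $\reduct{\lpp{P}}{E_R} = E_R$ into bounded-quantifier form. By the definition of the reduct, this equality holds exactly when, for every fragment $X \in \frag{P}$: (a) if $X \in E_R$, i.e. $X \subseteq R$, then for every fragment $Y \subseteq R$ we have that $Y$ does not defeat $X$ or $X$ overrides $Y$ w.r.t. $<$ (so $X$ is not removed); and (b) if $X \notin E_R$, i.e. $X$ is a fragment with $X \not\subseteq R$, then there is a fragment $Y \subseteq R$ such that $Y$ defeats $X$ and $X$ does not override $Y$ w.r.t. $<$ (so $X$ is removed). Each atomic predicate used here is polynomial-time decidable on inputs of polynomial size: ``$Z$ is a fragment'' is ``$\minpos{Z} = Z$'' and $\minpos{}$ is a polynomial-time least fixpoint; ``$Z \subseteq R$'', ``$Y$ defeats $X$'', ``$X$ overrides $Y$ w.r.t. $<$'' (including the auxiliary check that $X$ and $Y$ are conflicting), and ``$\head{R}$ consistent'' are all immediate.

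Finally I would count alternations. Clause (a) is a universal block over $(X,Y)$; clause (b) is a $\forall X \exists Y$ block; and everything is prefixed by the existential guess of $R$. Collecting these gives a prefix $\exists R\; \forall (X, Y)\; \exists Y'$ over polynomially bounded objects with a polynomial-time matrix, placing the problem in $\Sigma_3^P$. Soundness and completeness of the procedure are precisely the equivalence set up in the first paragraph. I expect the only point that needs care to be exactly this alternation bookkeeping: recognizing that the search for a ``removing'' fragment $Y$ is universal in case (a) (no such $Y$ may exist) but existential in case (b) (some such $Y$ must exist), so the two cases cannot be merged, yet neither pushes the complexity past the third level; the rest is routine.
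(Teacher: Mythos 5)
Your argument is correct: the essential point is that by Propositions~\ref{pas_gen:proposition:pref_ext_is_ext} and~\ref{pas_gen:proposition:ext_eq_gen} the (potentially exponential) preferred stable fragment set is determined by the polynomial-size witness $R=\exrules{E}$, after which the fixpoint condition $\reduct{\lpp{P}}{E_R}=E_R$ unfolds into a $\forall X\,\exists Y$ test over polynomially bounded fragments with a polynomial-time matrix, giving the $\exists\forall\exists$ shape of $\Sigma_3^P$. The paper relegates its proof to the technical report, but this succinct-witness reduction is the natural argument and your bookkeeping (including the asymmetry between the universal ``no remover exists'' check for retained fragments and the existential ``some remover exists'' check for discarded ones, and the consistency check passing from $\head{R}$ to $\head{E_R}\subseteq\head{R}$) is sound.
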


%%%%%%%%%%%%%%%%%%%%%%%%%%%%%%%%%%%%%
%%%%%%%%%%%%%%%%%%%%%%%%%%%%%%%%%%%%%
\section{Approach Two to General Conflicts}
%%%%%%%%%%%%%%%%%%%%%%%%%%%%%%%%%%%%%
%%%%%%%%%%%%%%%%%%%%%%%%%%%%%%%%%%%%%

If we have an application domain, where we can relax the requirements for preference handling in a sense that we no longer require preferences between non-conflicting rules to be ignored, we can ensure that the semantics stays in the NP complexity class.

In this section we simplify our first approach by using the following intuition for preference handling: \emph{a rule cannot be defeated by a less preferred rule or a rule depending on a less preferred rule}.

The definition of the approach follows the structure of our approach for direct conflicts. The presented intuition is realized using a set $\trules{r}{R}$ in the definition of reduct.

\begin{definition}[Reduct]
\label{pasindirect:definition:reduct}
Let $\mathcal{P} = (P,<)$ be a logic program with preferences, and $R \subseteq P$ be a set of rules.

The {\em reduct} $\mathcal{P}^R$ of $\lpp{P}$ w.r.t. $R$ is obtained from $P$ by removing each rule $r \in P$ such that $\bodym{r} \cap \head{\trules{r}{R}} \neq \emptyset$, where
$\trules{r}{R} = \minpos{\{ p \in R : p \not < r\}}$.
\end{definition}

\begin{example}[Example \ref{pas_ind_gen:example:running:2} continued]
\label{pas_ind:example:running:1}
Let us recall the program:
\begin{tabbing}
\laarule{r_1}{a}{x}\\
\lbrule{r_2}{x}{\dnot b}\\
\lbrule{r_3}{b}{\dnot a}\\
\\
$r_2 < r_3$
\end{tabbing}
Let $R_1 = \{ r_1, r_2 \}$. We have that $\trules{r_1}{R_1} = R_1$, $\trules{r_2}{R_1} = R_1$. On the other hand $\trules{r_3}{R_1} = \emptyset$ as $r_2 < r_3$ and $r_1$ depends on $r_2$. No rule less preferred, and no rule that depends on a rule less preferred than $r_3$ can be used to defeat $r_3$. In this case no rule can defeat $r_3$.

Hence $\reduct{\lpp{P}}{R_1} = \{ r_1, r_2, r_3 \}$.
\end{example}

\begin{definition}[Preferred generating set]
Let $\lpp{P} = (P,<)$ be a logic program with preferences, and $R$ be a generating set of $P$.

We say that $R$ is a \emph{preferred generating set} of $\lpp{P}$ iff $R = \minpos{\reduct{\lpp{P}}{R}}$.
\end{definition}

\begin{example}[Example \ref{pas_ind:example:running:1} continued]
\label{pas_ind:example:running:2}
We have that $\minpos{\reduct{\lpp{P}}{R_1}} = P \neq R_1$. Hence $R_1$ is not a preferred generating set.
\end{example}

\begin{definition}[Preferred answer set]
\label{pasindirect:definition:pas}
Let $\lpp{P} = (P,<)$ be a logic program with preferences, and $S$ be a consistent set of literals.

$S$ is a preferred answer set of $\lpp{P}$ iff there is a preferred generating set $R$ such that $S = \head{R}$.

We will use $\pasindno{\lpp{P}}$ to denote the set of all the preferred answer sets of $\lpp{P}$ according to this definition.
\end{definition}

\begin{example}[Example \ref{pas_ind:example:running:2} continued]
\label{pas_ind:example:running:3}
The set $R_2 = \{r_3\}$ is the only preferred generating set, and $\{b\} = \head{R_2}$ is the only preferred answer set.
\end{example}

%%%%%%%%%%%%%%%%%%%%%%%%%
\subsection{Transformation}
%%%%%%%%%%%%%%%%%%%%%%%%%

It turns out that the second approach can be characterized by a transformation from programs with preferences to programs without preferences in a way that the answer sets of the transformed  program correspond (modulo new special-purpose literals) to the preferred answer sets of an original program.

The idea of the transformation is to use special-purpose literals and auxiliary rules in order to allow a rule $r$ to be defeated only by $\trules{r}{R}$ where $R$ is a preferred generating set guess. We first present the definition of the transformation and then explain each rule.

\begin{notation}
If $r$ is a rule of a program $P$, then $n_r$ denotes a new literal not occurring in $P$.

If $r$ is a rule of a program $P$, and $x$ is a literal of $P$, then $x^r$ denotes a new literal not occurring in $P$ and different from $n_q$ for each $q \in P$. For a set of literals $S$, $S^r$ denotes $\{ x^r : x \in S \}$.

We will also use $inc$ to denote a literal not occurring in $P$ and different from all previously mentioned literals.
\end{notation}

\begin{definition}[Transformation]
Let $\lpp{P} = (P,<)$ be a logic program with preferences.

Let $r$ be a rule. Then $\transr{r}{\lpp{P}}$ is the set of the rules
\begin{eqnarray}
\label{pas_ind:rule:head_r}\alrule{\head{r}}{n_r}\\
\label{pas_ind:rule:n_r}\alrule{n_r}{\bodyp{r}, \dnot \bodym{r}^r}
\end{eqnarray}
and the rule
\begin{eqnarray}
\label{pas_ind:rule:head_p_r}\alrule{\head{p}^r}{\bodyp{p}^r, n_p}
\end{eqnarray}
for each $p \in P$ such that $p \not < r$, and the rule
\begin{eqnarray}
\label{pas_ind:rule:const}\alrule{inc}{n_r, x, \dnot inc}
\end{eqnarray}
for each $x \in \bodym{r}$.

$\trans{\lpp{P}} = \bigcup_{r \in P} t_{\mathcal{P}}(r)$.
\end{definition}

A preferred generating set guess $R$ is encoded using $n_r$ literals. The meaning of a literal $n_r$ is that a rule $r$ was applied. In order to derive $n_r$ literals, we split each rule $r$ of a program into two rules: The rule \eqref{pas_ind:rule:n_r} derives literal $n_r$, and the rule \eqref{pas_ind:rule:head_r} derives the head of the original rule $r$.

The special-purpose literals $x^r$ are used in the negative body of the rule \eqref{pas_ind:rule:n_r} in order to ensure that only $\trules{r}{R}$ can defeat a rule $r$. The $x^r$ literals are derived using the rules of the form \eqref{pas_ind:rule:head_p_r}.

The rules of the form \eqref{pas_ind:rule:const} ensure that no answer set of $\trans{\lpp{P}}$ contains both $n_r$ and $x$. This condition is needed in order to ensure that $R$ is also a generating set.

\begin{example}
\label{pas_ind:example:tr_1}
Consider again our running program $\lpp{P}$:
\begin{tabbing}
\laarule{r_1}{a}{x}\\
\lbrule{r_2}{x}{\dnot b}\\
\lbrule{r_3}{b}{\dnot a}\\
\\
\tgo $r_2 < r_3$\\
\end{tabbing}
$\trans{\lpp{P}}$ is as follows:\\
\begin{tabbing}
\larule{}{a}{\rname{r_1}}{}{0.8em}{0em}
\tset{2em}
\larule{}{x}{\rname{r_2}}{}{0.8em}{0em}
\tset{3em}
\larule{}{b}{\rname{r_3}}{}{0.8em}{0em}\\
\lbrule{}{\rname{r_1}}{x}
\tgo
\lbrule{}{\rname{r_2}}{\dnot \rlit{b}{r_2}}
\tgo
\lbrule{}{\rname{r_3}}{\dnot \rlit{a}{r_3}}\\
\\
\lbrule{}{\rlit{a}{r_1}}{\rlit{x}{r_1}, \rname{r_1}}
\tgo
\lbrule{}{\rlit{a}{r_2}}{\rlit{x}{r_2}, \rname{r_1}}
\tgo
\lbrule{}{\rlit{a}{r_3}}{\rlit{x}{r_3}, \rname{r_1}}\\
\lbrule{}{\rlit{x}{r_1}}{\rname{r_2}}
\tgo
\lbrule{}{\rlit{x}{r_2}}{\rname{r_2}}\\
\lbrule{}{\rlit{b}{r_1}}{\rname{r_3}}
\tgo
\lbrule{}{\rlit{b}{r_2}}{\rname{r_3}}
\tgo
\lbrule{}{\rlit{b}{r_3}}{\rname{r_3}}\\
\\
\lbrule{}{inc}{\rname{r_2}, b, \dnot inc}\\
\lbrule{}{inc}{\rname{r_3}, a, \dnot inc}
\end{tabbing}
Now, as $r_2 < r_3$, a transformed rule deriving $\rlit{x}{r_3}$ coming from $r_2$ is not included.
\end{example}

The transformation captures the semantics of preferred answer sets as defined in Definition \ref{pasindirect:definition:pas}.

\begin{proposition}
\label{proposition:tr_eq}
Let $\lpp{P} = (P,<)$ be a logic program with preferences. Let $\Lit$ be a set of all the literals constructed from the atoms of $P$, and $\namess{S}{\lpp{P}} = \{ n_r : r \in \agras{S}{P}\}$, and $Aux(S) = \bigcup_{r\in P} \head{\trules{r}{R}}^r$, where $R = \agras{S}{P}$.

If $S$ is a preferred answer set of $\lpp{P}$, then $A = S \cup \namess{S}{\lpp{P}} \cup Aux(S)$ is an answer set of $\trans{\lpp{P}}$.

If $A$ is an answer set of $\trans{\lpp{P}}$, then $S = A \cap \Lit$ is a preferred answer set of $\lpp{P}$, and $A = S \cup \namess{S}{\lpp{P}} \cup Aux(S)$.
\end{proposition}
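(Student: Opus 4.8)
The plan is to prove the two implications by exhibiting, in each direction, the generating set of $\trans{\lpp{P}}$ that witnesses via Definition~\ref{definition:as_alt} that the relevant set of literals is an answer set, keeping strict track of which transformed rules fire. Write $Q = \trans{\lpp{P}}$ and, given a set $S$ of literals, $R = \agras{S}{P}$; recall that a generating set $R'$ of a program with consistent head satisfies $R' = \agras{\head{R'}}{\cdot}$, and that the reduct $\reduct{Q}{R'}$ is $Q$ with every rule whose negative body meets $\head{R'}$ deleted. Both directions rest on the same bookkeeping, so I would first establish it as a list of observations about an arbitrary answer set $A$ of $Q$ with generating set $R'$, and re-use it, read in reverse, to verify the candidate $A = S \cup \namess{S}{\lpp{P}} \cup Aux(S)$ in the forward direction, where $R$ is then a preferred generating set of $\lpp{P}$ and $S = \head{R}$.

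Observation~(a): $inc \notin A$, since the only rules with head $inc$ are the rules~\eqref{pas_ind:rule:const}, each carrying $\dnot inc$, so $inc \in A$ would be self-defeating; consequently, by stability of $A$, whenever $n_r \in A$ and $x \in \bodym{r}$ we get $x \notin A$, i.e.\ $\bodym{r} \cap \head{R} = \emptyset$ for every $r$ with $n_r \in A$, where $R := \{r \in P : n_r \in A\}$. Observation~(b): the $x^r$-literals of $A$ are exactly $\bigcup_{r \in P}\head{\trules{r}{R}}^r$, because for fixed $r$ the rules~\eqref{pas_ind:rule:head_p_r} have empty negative body and form a superscripted copy of $\{p \in R : p \not< r\}$ guarded by the $n_p$'s, whose least model is $\head{\minpos{\{p \in R : p \not< r\}}}^r = \head{\trules{r}{R}}^r$ by induction along the computation of $\minpos$. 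Observation~(c): in $\reduct{Q}{R'}$ the rule~\eqref{pas_ind:rule:n_r} for $r$ survives iff $\bodym{r}^r \cap A = \emptyset$, which by~(b) holds iff $\bodym{r} \cap \head{\trules{r}{R}} = \emptyset$, i.e.\ iff $r$ survives the reduct $\reduct{\lpp{P}}{R}$.

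With~(a)--(c) the core is a simultaneous fixpoint computation inside $\reduct{Q}{R'}$: $n_r$ becomes derivable exactly when rule~\eqref{pas_ind:rule:n_r} is present, i.e.\ $r \in \reduct{\lpp{P}}{R}$, and $\bodyp{r}$ has already been derived, whereupon~\eqref{pas_ind:rule:head_r} derives $\head{r}$; matching this step by step against the iterative construction of $\minpos{\reduct{\lpp{P}}{R}}$ shows that $\minpos{\reduct{Q}{R'}}$, restricted to the $n$- and plain literals, is $\{n_r : r \in \minpos{\reduct{\lpp{P}}{R}}\} \cup \head{\minpos{\reduct{\lpp{P}}{R}}}$; by~(b) it also derives precisely the $x^r$-literals of $A$, and by~(a) it never derives $inc$. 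In the backward direction the stability condition $R' = \minpos{\reduct{Q}{R'}}$ then forces $R = \minpos{\reduct{\lpp{P}}{R}}$ and $\head{R} = S := A \cap \Lit$; in the forward direction one instead starts from a preferred generating set $R$ of $\lpp{P}$, takes $R'$ to consist of the rules~\eqref{pas_ind:rule:head_r} and~\eqref{pas_ind:rule:n_r} for $r \in R$ together with the rules~\eqref{pas_ind:rule:head_p_r} coming from $p \in \trules{r}{R}$, and checks that this collection is exactly $\minpos{\reduct{Q}{R'}}$ with head $A$.

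The step I expect to be the main obstacle is the backward-direction claim that $R = \{r : n_r \in A\}$ is not merely a fixpoint of $R \mapsto \minpos{\reduct{\lpp{P}}{R}}$ but a genuine generating set of $P$, i.e.\ $R = \minpos{\reduct{P}{R}}$, as the definition of preferred generating set requires. Here I would combine Observation~(a), which gives $\bodym{r} \cap \head{R} = \emptyset$ and hence $r \in \reduct{P}{R}$ for every $r \in R$; the inclusion $\reduct{P}{R} \subseteq \reduct{\lpp{P}}{R}$ (because $\head{\trules{r}{R}} \subseteq \head{R}$) together with monotonicity of $\minpos$, giving $\minpos{\reduct{P}{R}} \subseteq \minpos{\reduct{\lpp{P}}{R}} = R$; and an enumeration of $R$ witnessing $R = \minpos{\reduct{\lpp{P}}{R}}$ which, since all of $R$ already lies in $\reduct{P}{R}$, equally witnesses $R \subseteq \minpos{\reduct{P}{R}}$. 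The same positive-saturation argument yields $R = \agras{S}{P}$, so $\namess{S}{\lpp{P}} = \{n_r : r \in R\}$ and $Aux(S) = \bigcup_{r} \head{\trules{r}{R}}^r$ match the $n$- and $x^r$-parts of $A$; since $S$ is consistent (being a subset of the consistent $A$) and the freshly introduced literals $n_r$, $x^r$, $inc$ contribute no complementary pair, $S$ is a preferred answer set of $\lpp{P}$ and $A = S \cup \namess{S}{\lpp{P}} \cup Aux(S)$, as claimed.
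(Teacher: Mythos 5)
Your proposal is correct and follows essentially the approach the paper's transformation is designed around: the literal-by-literal correspondence between the fixpoint computation of $\minpos{\reduct{\trans{\lpp{P}}}{R'}}$ and that of $\minpos{\reduct{\lpp{P}}{R}}$, with the $inc$-rules forcing $\bodym{r} \cap \head{R} = \emptyset$ and the $x^r$-rules computing $\head{\trules{r}{R}}^r$. In particular, you correctly isolate and resolve the one genuinely delicate point — that in the backward direction $R = \{r : n_r \in A\}$ is a generating set of $P$ and not merely a fixpoint of the preference reduct — via $R \subseteq \reduct{P}{R} \subseteq \reduct{\lpp{P}}{R}$ and monotonicity of $\minpos$.
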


%%%%%%%%%%%%%%%%%%%%%%%
\subsection{Properties}
%%%%%%%%%%%%%%%%%%%%%%%

Preferred answer sets as defined in Definition \ref{pasindirect:definition:pas} enjoy several nice properties.

\begin{proposition}
Let $\lpp{P} = (P,<)$ be a logic program with preferences. Then $\pasindno{\lpp{P}} \subseteq \as{P}$.
\end{proposition}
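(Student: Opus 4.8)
The plan is to reduce the claim directly to the alternative definition of answer sets (Definition \ref{definition:as_alt}), using the fact that a preferred generating set of $\lpp{P}$ is, by construction, already a generating set of the underlying program $P$.

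First I would unfold the hypothesis. If $S \in \pasindno{\lpp{P}}$, then by Definition \ref{pasindirect:definition:pas} there is a preferred generating set $R$ of $\lpp{P}$ with $S = \head{R}$, and $S$ is a consistent set of literals. Next, by the definition of preferred generating set (the definition immediately preceding Definition \ref{pasindirect:definition:pas}), such an $R$ is required to be a generating set of $P$: that property is part of what it means to be a preferred generating set in this section, not something that must be re-derived. Then, applying Definition \ref{definition:as_alt} to the generating set $R$ together with the consistency of $S = \head{R}$ gives that $S$ is an answer set of $P$, i.e. $S \in \as{P}$. Since $S$ was an arbitrary element of $\pasindno{\lpp{P}}$, we conclude $\pasindno{\lpp{P}} \subseteq \as{P}$.

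There is essentially no obstacle here: the content that would otherwise need work — that the fixpoint condition $R = \minpos{\reduct{\lpp{P}}{R}}$ forces $R$ to be a genuine generating set — is sidestepped, since the definition of preferred generating set in this section already carries ``$R$ is a generating set of $P$'' as a standing assumption (in contrast with the direct-conflict case, where that fact had to be stated separately as ``each preferred generating set of $\lpp{P}$ is a generating set of $P$''). If one instead wanted an argument that did not appeal to that standing assumption, one could route through Proposition \ref{proposition:tr_eq}: an answer set $A$ of $\trans{\lpp{P}}$ restricts to $S = A \cap \Lit$, and an analysis of the rules \eqref{pas_ind:rule:head_r}, \eqref{pas_ind:rule:n_r} and \eqref{pas_ind:rule:const} shows $S$ is generated by $\agras{S}{P}$; but this is considerably heavier than the definitional unfolding above and is not needed.
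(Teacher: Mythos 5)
Your proof is correct and is essentially the argument the paper intends: since the definition of preferred generating set in this section takes ``$R$ is a generating set of $P$'' as a standing hypothesis, and preferred answer sets are by definition consistent sets of the form $\head{R}$, Definition \ref{definition:as_alt} immediately yields $S \in \as{P}$. Your side remark correctly identifies why no further work is needed here, in contrast with the direct-conflict section where the analogous fact is stated as a separate claim.
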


\begin{proposition}
\label{proposition:empty_pref}
Let $\lpp{P} = (P,\emptyset)$ be a logic program with preferences. Then $\pasindno{\lpp{P}} = \as{P}$.
\end{proposition}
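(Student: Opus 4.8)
The plan is to show that, once the preference relation is empty, the machinery of Approach Two degenerates to the ordinary answer set machinery. Concretely, I will argue that for $\lpp{P} = (P,\emptyset)$ a set of rules is a preferred generating set of $\lpp{P}$ exactly when it is a generating set of $P$; the claim then follows immediately by comparing the definition of preferred answer set (Definition \ref{pasindirect:definition:pas}) with the alternative definition of answer set (Definition \ref{definition:as_alt}).

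First I would unwind $\trules{r}{R}$. Since $<\ =\emptyset$, we have $p \not< r$ for every $p \in R$ and every rule $r$, so $\{ p \in R : p \not< r \} = R$ and hence $\trules{r}{R} = \minpos{R}$, independently of $r$. Next I would record the auxiliary fact that every generating set $R$ of $P$ is a fragment, i.e. $\minpos{R} = R$. This is a routine least-fixpoint argument: writing $Q = \reduct{P}{R}$, we have $R = \minpos{Q}$, and since $\minpos{\cdot}$ is computed as the least fixpoint of the monotone ``one step of positive derivation'' operator, every rule of $R = \minpos{Q}$ already has its positive body produced by rules of $R$ that were introduced earlier in that construction; restricting the construction to $R$ therefore reproduces exactly the same rules, so $\minpos{R} \supseteq R$, while $\minpos{R} \subseteq R$ is immediate. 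Consequently $\head{\trules{r}{R}} = \head{\minpos{R}} = \head{R}$ for every rule $r$, whenever $R$ is a generating set of $P$.

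With this in hand, I would compare the two reducts. By Definition \ref{pasindirect:definition:reduct}, a rule $r$ is removed from $\reduct{\lpp{P}}{R}$ iff $\bodym{r} \cap \head{\trules{r}{R}} \neq \emptyset$, which by the previous paragraph is exactly $\bodym{r} \cap \head{R} \neq \emptyset$, i.e. exactly the condition under which $r$ is removed from the ordinary reduct $\reduct{P}{R}$. Hence $\reduct{\lpp{P}}{R} = \reduct{P}{R}$ for every generating set $R$ of $P$. Therefore such an $R$ satisfies $R = \minpos{\reduct{\lpp{P}}{R}}$ iff it satisfies $R = \minpos{\reduct{P}{R}}$, and the latter holds by the definition of a generating set; so $R$ is a preferred generating set of $\lpp{P}$. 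Conversely, a preferred generating set of $\lpp{P}$ is by definition already a generating set of $P$. Thus the preferred generating sets of $\lpp{P}$ coincide with the generating sets of $P$. Since $S \in \pasindno{\lpp{P}}$ iff $S$ is consistent and $S = \head{R}$ for some preferred generating set $R$, while $S \in \as{P}$ iff $S$ is consistent and $S = \head{R}$ for some generating set $R$ (Definition \ref{definition:as_alt}), we conclude $\pasindno{\lpp{P}} = \as{P}$.

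I do not expect a genuine obstacle here: the one point that needs a little care is the auxiliary lemma that a generating set is a fragment (so that $\head{\minpos{R}} = \head{R}$), since without it the equality of the two reducts would only be one direction. Everything else is definition chasing.
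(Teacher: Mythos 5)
Your proof is correct and takes the natural route (the paper relegates this proof to its technical report): with $<\ =\emptyset$ you get $\trules{r}{R}=\minpos{R}$ for every $r$, and the one genuinely non-trivial step --- that every generating set $R$ of $P$ satisfies $\minpos{R}=R$, so that $\head{\trules{r}{R}}=\head{R}$ and the two reducts coincide --- is identified and justified correctly. The rest is definition chasing, exactly as you say, noting that the converse inclusion is immediate because a preferred generating set is by definition already a generating set.
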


\begin{proposition}
Preferred answer sets as defined in Definition \ref{pasindirect:definition:pas} satisfy Principle \ref{principle:i}.
\end{proposition}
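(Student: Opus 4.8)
**Proof plan for: Preferred answer sets as defined in Definition~\ref{pasindirect:definition:pas} satisfy Principle~\ref{principle:i}.**

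The plan is to unwind the definitions directly. Assume the hypotheses of Principle~\ref{principle:i}: $S_1, S_2$ are answer sets of $P$, $\agras{S_1}{P} = R \cup \{r_1\}$, $\agras{S_2}{P} = R \cup \{r_2\}$ with $R \subset P$, and $r_2 < r_1$. I must show $S_2$ is not a preferred answer set of $\lpp{P}$. Suppose for contradiction that it is. Then by Definition~\ref{pasindirect:definition:pas} there is a preferred generating set $R'$ with $\head{R'} = S_2$. Since a preferred generating set is in particular a generating set of $P$ and $S_2$ is consistent, the fact recalled at the end of the Preliminaries gives $R' = \agras{S_2}{P} = R \cup \{r_2\}$. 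So the set of generating rules is forced to be exactly $R \cup \{r_2\}$, and the task reduces to showing that $R \cup \{r_2\} \neq \minpos{\reduct{\lpp{P}}{R \cup \{r_2\}}}$.

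First I would locate where $r_1$ sits. Since $\agras{S_1}{P} = R \cup \{r_1\}$, the rule $r_1$ is applicable in $S_1$: $\bodyp{r_1} \subseteq \head{R \cup \{r_1\}} = S_1$ and $\bodym{r_1} \cap S_1 = \emptyset$. The two answer sets differ only in $r_1$ versus $r_2$, and a standard observation (which I would spell out) is that $\head{r_1} = \head{r_2}$ is impossible here unless the single-rule symmetric-difference structure is degenerate, but in fact what I actually need is weaker: I claim $r_1 \in \reduct{\lpp{P}}{R \cup \{r_2\}}$ and moreover $\bodyp{r_1} \subseteq \head{\minpos{\reduct{\lpp{P}}{R\cup\{r_2\}}}}$, which would force $r_1$ into $\minpos{\reduct{\lpp{P}}{R\cup\{r_2\}}}$ and hence make that set strictly larger than $R \cup \{r_2\}$ (as $r_1 \notin R \cup \{r_2\}$, because $\agras{S_1}{P}$ and $\agras{S_2}{P}$ agree off $\{r_1,r_2\}$ and $r_1 \neq r_2$). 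To see $r_1$ survives the reduct: a rule is deleted from $\reduct{\lpp{P}}{R\cup\{r_2\}}$ only if $\bodym{r_1} \cap \head{\trules{r_1}{R \cup \{r_2\}}} \neq \emptyset$, and $\trules{r_1}{R\cup\{r_2\}} = \minpos{\{p \in R \cup \{r_2\} : p \not< r_1\}} = \minpos{\{p \in R : p \not< r_1\}}$ since $r_2 < r_1$ removes $r_2$ from the set. Thus $\head{\trules{r_1}{R\cup\{r_2\}}} \subseteq \head{R} \subseteq S_1$, and $\bodym{r_1} \cap S_1 = \emptyset$, so the deletion condition fails and $r_1$ is retained. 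For the positive-body part I would use that $R' = R\cup\{r_2\}$ is assumed to be a preferred generating set, so $R \cup \{r_2\} = \minpos{\reduct{\lpp{P}}{R\cup\{r_2\}}}$, whence $\head{\minpos{\reduct{\lpp{P}}{R\cup\{r_2\}}}} = S_2 \supseteq \head{R}$; combined with the relationship between $S_1$ and $S_2$ (they share the rules of $R$, so $\head{R} \subseteq S_1 \cap S_2$) and with $\bodyp{r_1} \subseteq S_1$, I would argue $\bodyp{r_1} \subseteq \head{R} \subseteq S_2$ — here I would invoke that $\agras{S_1}{P} = R \cup \{r_1\}$ means $r_1$'s positive body is derived already by $R$ alone together with possibly $r_1$, but since $r_1$ can only contribute $\head{r_1}$ and $\head{r_1} \neq $ any literal needed before $r_1$ fires in the $\minpos$ construction, in fact $\bodyp{r_1} \subseteq \head{R}$. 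Then $\bodyp{r_1} \subseteq \head{R} \subseteq \head{\minpos{\reduct{\lpp{P}}{R\cup\{r_2\}}}}$, so by the positive-satisfaction requirement $r_1 \in \minpos{\reduct{\lpp{P}}{R\cup\{r_2\}}} = R \cup \{r_2\}$, contradicting $r_1 \notin R \cup \{r_2\}$.

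The main obstacle I anticipate is the bookkeeping around $\bodyp{r_1} \subseteq \head{R}$: one must be careful that $r_1$ is not needed to derive its own positive body in the $\minpos$ iteration over $\reduct{P}{R\cup\{r_1\}}$, i.e. that $\head{r_1} \notin \bodyp{r_1}$ in the relevant derivation order. This follows from $R \cup \{r_1\}$ being a generating set of $P$ — $\minpos{\reduct{P}{R\cup\{r_1\}}} = R \cup \{r_1\}$ builds rules in stages by positive bodies, and $r_1$ enters at some stage whose predecessors lie in $R$, so $\bodyp{r_1}$ is covered by $\head{R}$. I would also double-check the edge case $\head{r_1} = \head{r_2}$, which is harmless for this argument since it never enters. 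Everything else is a direct unwinding of Definition~\ref{pasindirect:definition:reduct}, the definition of $\minpos{\cdot}$ as the least positively-satisfying set, and the Preliminaries fact that a consistent generating set equals $\agras{\head{R}}{P}$; I expect the write-up to be short once the $\bodyp{r_1} \subseteq \head{R}$ point is nailed down.
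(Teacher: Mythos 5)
Your argument is correct and is the direct definitional route the paper itself relies on (the proof is deferred to the technical report, but the key steps are exactly these): forcing $R'=\agras{S_2}{P}=R\cup\{r_2\}$, observing that $\trules{r_1}{R\cup\{r_2\}}\subseteq R$ because $r_2<r_1$, so $r_1$ survives the reduct and, since $\bodyp{r_1}\subseteq \head{R}\subseteq S_2$, must be added by $\symminpos$, contradicting $R\cup\{r_2\}=\minpos{\reduct{\lpp{P}}{R\cup\{r_2\}}}$. The only points worth tightening in the write-up are the stage-wise justification of $\bodyp{r_1}\subseteq\head{R}$ (take the least iteration of the $\symminpos$ fixpoint at which $r_1$ enters) and the standing assumption $r_1\notin R\cup\{r_2\}$, both of which you already identify; the digression about $\head{r_1}=\head{r_2}$ can simply be dropped.
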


\begin{proposition}
Let $\lpp{P}_1 = (P, <_1)$ and $\lpp{P}_2 = (P, <_2)$ be logic programs with preferences such that $<_1 \subseteq <_2$. Then $\pasindno{\lpp{P}_2} \subseteq \pasindno{\lpp{P}_1}$.
\end{proposition}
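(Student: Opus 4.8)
The plan is to leverage two monotonicity observations and then close with a sandwich argument. The first observation is that $\symminpos$ is monotone in its argument, viewed as a set of rules: if $R \subseteq R'$ then $\minpos{R} \subseteq \minpos{R'}$. This follows directly from the defining property of $\symminpos$ (least subset closed under ``if the positive body is already derived, include the rule''): one checks that $\minpos{R'} \cap R$ positively satisfies $R$ — if $r \in R$ and $\bodyp{r} \subseteq \head{\minpos{R'} \cap R}$, then $\bodyp{r} \subseteq \head{\minpos{R'}}$ and $r \in R \subseteq R'$, so $r \in \minpos{R'}$ and hence $r \in \minpos{R'} \cap R$ — and then minimality of $\minpos{R}$ gives $\minpos{R} \subseteq \minpos{R'} \cap R \subseteq \minpos{R'}$. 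The second observation is that enlarging the preference relation can only shrink the sets $\trules{r}{R}$: fix $R \subseteq P$ and $r \in P$, and write $T^{(i)}_r = \symminpos(\{p \in R : p \not<_i r\})$, so that $\trules{r}{R}$ computed in $\lpp{P}_i$ is $T^{(i)}_r$. Since $<_1 \subseteq <_2$, the contrapositive gives $\{p \in R : p \not<_2 r\} \subseteq \{p \in R : p \not<_1 r\}$, hence by the first observation $T^{(2)}_r \subseteq T^{(1)}_r$, and trivially $T^{(1)}_r \subseteq R$. Passing to heads, $\head{T^{(2)}_r} \subseteq \head{T^{(1)}_r} \subseteq \head{R}$.

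Next I would fix $S \in \pasindno{\lpp{P}_2}$ together with a preferred generating set $R$ of $\lpp{P}_2$ with $\head{R} = S$. By definition $R$ is a generating set of $P$, so $R = \minpos{\reduct{P}{R}}$, and $R = \minpos{\reduct{\lpp{P}_2}{R}}$. I now turn the head inclusions above into inclusions of reducts. A rule $r$ is removed when forming $\reduct{\lpp{P}_i}{R}$ exactly when $\bodym{r} \cap \head{T^{(i)}_r} \neq \emptyset$, and removed when forming the plain reduct $\reduct{P}{R}$ exactly when $\bodym{r} \cap \head{R} \neq \emptyset$. From $\head{T^{(2)}_r} \subseteq \head{T^{(1)}_r} \subseteq \head{R}$ it follows that every rule removed in $\reduct{\lpp{P}_2}{R}$ is removed in $\reduct{\lpp{P}_1}{R}$, and every rule removed in $\reduct{\lpp{P}_1}{R}$ is removed in $\reduct{P}{R}$; since fewer removals means a larger surviving set of rules, this yields
\[
\reduct{P}{R} \subseteq \reduct{\lpp{P}_1}{R} \subseteq \reduct{\lpp{P}_2}{R}.
\]

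Finally I would apply monotonicity of $\symminpos$ to this chain: $R = \minpos{\reduct{P}{R}} \subseteq \minpos{\reduct{\lpp{P}_1}{R}} \subseteq \minpos{\reduct{\lpp{P}_2}{R}} = R$, which forces $\minpos{\reduct{\lpp{P}_1}{R}} = R$. Thus $R$ is a generating set of $P$ satisfying $R = \minpos{\reduct{\lpp{P}_1}{R}}$, i.e. a preferred generating set of $\lpp{P}_1$, so $S = \head{R} \in \pasindno{\lpp{P}_1}$, as required. I do not anticipate a genuine obstacle; the only points that need care are making the monotonicity of $\symminpos$ explicit (the ``minimal positively-satisfying set'' phrasing is a closure condition, so this is a small lemma rather than a triviality) and keeping the directions straight when moving from ``more rules removed'' to ``smaller reduct'' — the head inclusions and the reduct inclusions run the opposite way, which is the easiest place to make a sign error.
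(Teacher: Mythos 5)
Your proof is correct: the monotonicity of $\symminpos$, the antitonicity of $\trules{r}{R}$ in the preference relation, and the resulting sandwich $\reduct{P}{R} \subseteq \reduct{\lpp{P}_1}{R} \subseteq \reduct{\lpp{P}_2}{R}$ with $R = \minpos{\reduct{P}{R}} = \minpos{\reduct{\lpp{P}_2}{R}}$ together force $R = \minpos{\reduct{\lpp{P}_1}{R}}$, exactly as needed. The paper defers this proof to its technical report, but your argument is the natural one for this statement and I see no gap; in particular you correctly isolate the one nontrivial ingredient (monotonicity of $\symminpos$, proved via the closure characterization) and keep the inclusion directions straight.
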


\begin{comment}
\begin{proposition}[Principle \ref{principle:ii} is satisfied]
Preferred answer sets as defined in Definition \ref{pasindirect:definition:pas} satisfy Principle \ref{principle:ii}.
\end{proposition}

\begin{proposition}[Principle \ref{principle:iii} is violated]
Preferred answer sets as defined in Definition \ref{pasindirect:definition:pas} violate Principle \ref{principle:iii}.
\end{proposition}
\end{comment}

The approach two is not equivalent to the answer set semantics for the subclass of stratified programs.

\begin{proposition}
There is a logic program with preferences $\lpp{P} = (P,<)$ where $P$ is stratified and $\pasindno{\lpp{P}} = \emptyset$.
\end{proposition}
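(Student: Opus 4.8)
The plan is to exhibit a single concrete counterexample: a stratified logic program $P$ together with a preference relation $<$ such that $P$ has an answer set but $\pasindno{(P,<)} = \emptyset$. The natural candidate is the program from Example~\ref{example:pasi_strat}, reused with a twist that makes the preference ``bite'' under the Approach-Two reduct, since that reduct — unlike the Approach-One reduct — does \emph{not} ignore preferences between non-conflicting rules. Concretely I would take
\begin{tabbing}
\laarule{r_1}{a}{\dnot b}\\
\lbrule{r_2}{b}{}\\
\\
\tgo $r_2 < r_1$
\end{tabbing}
Here $P$ is stratified (order the strata $\{b\} \prec \{a\}$) and has the unique answer set $S = \{b\}$, witnessed by the generating set $R = \{r_2\} = \agras{S}{P}$.

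The core of the argument is to check that $R$ is \emph{not} a preferred generating set of $\lpp{P}$, and that no other generating set is either, so that $\pasindno{\lpp{P}} = \emptyset$. First I would compute the reduct $\reduct{\lpp{P}}{R}$ from Definition~\ref{pasindirect:definition:reduct}. For the rule $r_1$ we have $\trules{r_1}{R} = \minpos{\{ p \in R : p \not< r_1 \}}$; since the only member of $R$ is $r_2$ and $r_2 < r_1$, the set $\{ p \in R : p \not< r_1 \}$ is empty, hence $\trules{r_1}{R} = \emptyset$ and $\head{\trules{r_1}{R}} = \emptyset$, so $\bodym{r_1} \cap \head{\trules{r_1}{R}} = \{b\} \cap \emptyset = \emptyset$ and $r_1$ is \emph{not} removed. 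The rule $r_2$ is a fact and is never removed. Thus $\reduct{\lpp{P}}{R} = \{r_1, r_2\} = P$, and $\minpos{\reduct{\lpp{P}}{R}} = \minpos{P}$. Since $P$ is positively acyclic, $\minpos{P} = \{r_1, r_2\} \neq R$, so $R$ fails the fixpoint condition and is not a preferred generating set.

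Next I would rule out the only other candidate. The generating sets of $P$ are exactly $R = \{r_2\}$ (giving answer set $\{b\}$); in particular $\{r_1, r_2\}$ is \emph{not} a generating set of $P$, because $\reduct{P}{\{r_1,r_2\}}$ removes $r_1$ (as $\head{r_2} = b \in \bodym{r_1}$), leaving $\minpos{\{r_2\}} = \{r_2\} \neq \{r_1,r_2\}$; and the empty set and $\{r_1\}$ are not generating sets either. Since by definition a preferred generating set of $\lpp{P}$ must first be a generating set of $P$, the only thing to verify is that $R = \{r_2\}$ is not one, which was done above. Hence there is no preferred generating set, so $\pasindno{\lpp{P}} = \emptyset$, while $P$ is stratified with $\as{P} = \{\{b\}\} \neq \emptyset$, completing the proof.

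The only mildly delicate point — the ``main obstacle'' such as it is — is making sure the reader sees \emph{why} the analogous program behaves differently here than under $\snpasindgen$ (Example~\ref{example:pasi_strat}): there, $r_1$ is reinstated in the Approach-One reduct because its defeater-fragment $\{r_2\}$ does not \emph{override} it (the two are not conflicting), so the standard answer set survives; here, in the Approach-Two reduct, $r_1$ is likewise reinstated, but that is precisely what destroys the fixpoint, since a reinstated $r_1$ forces $a$ into $\minpos{\reduct{\lpp{P}}{R}}$, which $R = \{r_2\}$ does not contain. I would add one sentence of this comparison after the computation so the statement does not look like an accident of bookkeeping.
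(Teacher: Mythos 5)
Your proposal is correct and matches the paper's own argument: the paper proves this proposition via Example~\ref{example:pasino_strat}, which uses exactly the same two-rule program, computes $\trules{r_1}{R} = \emptyset$ for the unique generating set $R = \{r_2\}$, and concludes $\minpos{\reduct{\lpp{P}}{R}} \neq R$. Your additional verification that $\{r_2\}$ is the only generating set (so no other candidate exists) is a small completeness point the paper leaves implicit, but the route is the same.
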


Example \ref{example:pasino_strat} shows such a program. Example \ref{example:pasi_strat} and \ref{example:pasino_strat} illustrate the main difference between the two approaches. While $\snpasindgen$ ignores preferences between non-conflicting rules, $\snpasindno$ is not always able to do so.

\begin{example}
\label{example:pasino_strat}
Consider again the program from Example \ref{example:pasi_strat}:
\begin{tabbing}
\laarule{r_1}{a}{\dnot b}\\
\lbrule{r_2}{b}{}\\
\\
\tgo $r_2 < r_1$
\end{tabbing}

The program is stratified and has a unique answer set $S = \{b\}$. A unique generating set $R = \{ r_2 \}$ corresponds to the answer set $S$.

We have that $\trules{r_1}{R} = \emptyset$. The rule $r_2$ is not included as $r_2 < r_1$. Due to a simplicity of the approach, preference $r_2 < r_1$ is not ignored. Hence $\head{\trules{r_1}{R}} \cap \bodym{r_1} = \emptyset$, and $r_1 \in \reduct{\lpp{P}}{R}$. From that $\minpos{\reduct{\lpp{P}}{R}} \neq R$, and $S$ is not a preferred answer set.
\end{example}

On the other hand the approach stays in the NP complexity class.

\begin{proposition}
Deciding whether $\pasindno{\lpp{P}} \neq \emptyset$ for a logic program with preferences $\lpp{P}$ is NP-complete.
\end{proposition}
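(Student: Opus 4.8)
The plan is to prove the two halves of NP-completeness separately: membership in NP and NP-hardness.

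For membership in NP, the cleanest route is to go through the transformation of the previous subsection. First I would observe that $\trans{\lpp{P}}$ is computable in polynomial time and has size polynomial in that of $\lpp{P}$: for each rule $r \in P$, the set $\transr{r}{\lpp{P}}$ consists of the two rules \eqref{pas_ind:rule:head_r} and \eqref{pas_ind:rule:n_r}, at most $|P|$ rules of the form \eqref{pas_ind:rule:head_p_r} (one for each $p \in P$ with $p \not< r$), and at most $|\bodym{r}|$ rules of the form \eqref{pas_ind:rule:const}, so $\trans{\lpp{P}}$ has $O(|P|^2)$ rules. By Proposition \ref{proposition:tr_eq}, $\pasindno{\lpp{P}} \neq \emptyset$ if and only if $\trans{\lpp{P}}$ has an answer set, and deciding whether a logic program has an answer set is in NP (guess a set of rules $R$ and verify in polynomial time that $\head{R}$ is consistent and that $R = \minpos{\reduct{P}{R}}$, which is an answer set by Definition \ref{definition:as_alt}). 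Alternatively one can argue membership directly without the transformation: guess $R \subseteq P$ and check that $\head{R}$ is consistent, that $R$ is a generating set of $P$, and that $R = \minpos{\reduct{\lpp{P}}{R}}$; forming $\reduct{P}{R}$, forming the sets $\trules{r}{R} = \minpos{\{ p \in R : p \not< r \}}$ that determine $\reduct{\lpp{P}}{R}$, and evaluating the fixpoint operator $\minpos{\cdot}$ are all polynomial, since each is an iteration of length at most $|P|$.

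For NP-hardness, I would reduce from the problem of deciding whether a logic program has an answer set, which is well known to be NP-hard. Given a logic program $P$, map it to the logic program with preferences $\lpp{P} = (P, \emptyset)$; this map is clearly polynomial-time computable. By Proposition \ref{proposition:empty_pref}, $\pasindno{\lpp{P}} = \as{P}$, so $\pasindno{\lpp{P}} \neq \emptyset$ if and only if $P$ has an answer set. Combining the two parts yields NP-completeness.

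I do not expect a genuine obstacle in this argument. The only points that need a little care are (i) the polynomial size bound on $\trans{\lpp{P}}$ — the quadratically many rules of the form \eqref{pas_ind:rule:head_p_r} being the dominant contribution — and (ii) the observation that $\minpos{\cdot}$, both reducts, and the sets $\trules{r}{R}$ are polynomial-time computable, which is immediate from their iterative definitions. If one prefers not to rely on the NP-completeness of answer-set existence in the hardness direction, one could instead give a direct reduction from SAT using the standard guess-and-check encoding of a CNF as a logic program, but routing through Proposition \ref{proposition:empty_pref} is shorter.
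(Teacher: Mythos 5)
Your proposal is correct and follows essentially the same route as the paper: membership via the polynomial-time transformation $\trans{\lpp{P}}$ and Proposition \ref{proposition:tr_eq}, and hardness via Proposition \ref{proposition:empty_pref} by reducing answer-set existence to the case of an empty preference relation. The extra details you supply (the quadratic size bound on the transformed program and the alternative direct guess-and-check argument) are sound but not needed beyond what the paper already states.
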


\begin{proof}
Membership: Using Proposition \ref{proposition:tr_eq}, we can reduce the decision problem $\pasindno{\lpp{P}} \neq \emptyset$ to the problem $\as{\trans{\lpp{P}}} \neq \emptyset$ (in polynomial time), which is in NP. Hardness: Deciding $\as{P} \neq \emptyset$ for a program $P$ is NP-complete. Using Proposition \ref{proposition:empty_pref} we can reduce it to the decision $\pasindno{(P,\emptyset)} \neq \emptyset$.
\end{proof}

%%%%%%%%%%%%%%%%%%%%%%%%%%%
%%%%%%%%%%%%%%%%%%%%%%%%%%%
\section{Relation between the Approaches of this Paper}
%%%%%%%%%%%%%%%%%%%%%%%%%%%
%%%%%%%%%%%%%%%%%%%%%%%%%%%

It turns out that the approaches of this paper form a hierarchy, which does not collapse.

\begin{notation}
Let $A$ and $B$ be names of semantics.

We write $A \subseteq B$ iff each preferred answer set according to $A$ is a preferred answer set according to $B$.

We write $A = B$ iff $A \subseteq B$ and $B \subseteq A$.
\end{notation}

\begin{proposition}
$\snpasindno \subseteq \snpasindgen \subseteq \snpasd$
\end{proposition}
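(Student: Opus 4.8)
The plan is to prove the two inclusions separately, establishing $\snpasindno \subseteq \snpasindgen$ first and $\snpasindgen \subseteq \snpasd$ second. In both cases the natural strategy is to start from a preferred answer set $S$ of the stronger (more restrictive) semantics, recover the generating set $R = \agras{S}{P}$, and then exhibit the witness object required by the weaker semantics — a preferred stable fragment set in the first case, a preferred generating set (in the sense of the direct-conflict semantics) in the second. In each case $R$ is already a generating set of $P$ (by the basic facts recalled in the Preliminaries and after Proposition~\ref{pas_gen:proposition:pref_ext_is_ext}), so the whole content is in checking that the appropriate reduct condition is preserved.

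\textbf{For $\snpasindno \subseteq \snpasindgen$.} Let $S \in \pasindno{\lpp{P}}$ with preferred generating set $R$, and set $E = \{ T : T = \minpos{T} \text{ and } T \subseteq R \}$. By Proposition~\ref{pas_gen:proposition:ext_eq_gen}, $E$ is a stable fragment set of $P$ with $\exrules{E} = R$, so $\head{E} = S$; it remains to show $\reduct{\lpp{P}}{E} = E$, i.e.\ that a fragment $X$ survives the $\snpasindgen$-reduct exactly when $X \subseteq R$ and $X = \minpos X$. The key lemma to prove is: \emph{if $Y \in E$ defeats a rule $r \in X$ and $X$ does not override $Y$, then $\bodym{r} \cap \head{\trules{r}{R}} \neq \emptyset$}, so that $r \notin \reduct{\lpp{P}}{R}$ — contradicting $R = \minpos{\reduct{\lpp{P}}{R}}$ once we know $r$ is reachable. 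The point is that if $X$ does not override $Y$, there is $r_1 \in X$ defeated by $Y$ for which \emph{no} $r_2 \in Y$ defeated by $X$ has $r_2 < r_1$; since $Y \subseteq R$ and $Y = \minpos Y$, one extracts from $Y$ a $\minpos$-derivation of the defeating head that uses only rules $p$ with $p \not< r_1$, placing that head in $\head{\trules{r_1}{R}}$. Conversely, every $X \subseteq R$ with $X = \minpos X$ must survive, because any $Y \in E$ defeating some $r_1 \in X$ must be overridden by $X$ — otherwise the same argument would eject a rule of $R$ from its own reduct. This back-and-forth is the main obstacle: one has to line up the fragment-level ``override'' condition with the rule-level $\trules{r}{R}$ condition, and in particular argue that the relevant defeating derivations inside a fragment $Y \subseteq R$ are exactly the derivations witnessed by $\minpos{\{ p \in R : p \not< r\}}$.

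\textbf{For $\snpasindgen \subseteq \snpasd$.} Let $S \in \pasindgen{\lpp{P}}$ with preferred stable fragment set $E$, and put $R = \exrules{E}$, which by Proposition~\ref{pas_gen:proposition:pref_ext_is_ext} and Proposition~\ref{pas_gen:proposition:ext_eq_gen} is a generating set of $P$ with $\head R = S$ and $E = \{ T : T = \minpos T,\ T \subseteq R\}$. I must show $R = \minpos{\reduct{\lpp{P}}{R}}$ in the direct-conflict sense, i.e.\ a rule $r_1$ survives iff for every $r_2 \in R$ defeating $r_1$, $r_1$ directly overrides $r_2$. Here the bridge is that $\{r_2\}$ and the singleton-or-small fragments containing $r_1$ are themselves fragments: if $r_2 \in R$ defeats $r_1 \in R$ but $r_1$ does \emph{not} directly override $r_2$ (so either they are not directly conflicting, or $\neg(r_2 < r_1)$), one builds a fragment $Y \in E$ containing $r_2$ (take $Y = \minpos{\{r_2\}}$ together with, if needed, a $\minpos$-closure — actually $\{r_2\}$ already lies in $E$ when $r_2$ is a fact, and in general use the $\minpos$-subset of $R$ supporting $r_2$) and a fragment $X \ni r_1$ with $X \subseteq R$ such that $X$ does not override $Y$, forcing $X \notin \reduct{\lpp{P}}{E}$ and contradicting stability. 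The delicate point — and I expect this to be where the argument needs care — is the direction from ``$\snpasd$-survives'' back to ``$\snpasindgen$-survives'': given $X \in \frag{P}$, $X \subseteq R$, one must show no $Y \in E$ can defeat $X$ without being overridden by it, using that every rule $r_1 \in X \subseteq R$ passes the direct-conflict reduct test; translating ``$r_1$ directly overrides each defeating $r_2 \in R$'' into ``$X$ overrides every $Y \in E$ that defeats it'' requires observing that a defeating fragment $Y$ contributes, for each defeated $r_1 \in X$, some defeating rule $r_2 \in Y$, and that $Y$ being a fragment contained in $R$ means this $r_2$ in turn is defeated-back by $X$ precisely when $r_1, r_2$ are directly conflicting. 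Once both inclusions of each reduct are in hand, $R$ is a preferred generating set of $\lpp{P}$ in the direct-conflict sense, so $S \in \pasd{\lpp{P}}$, completing the chain.
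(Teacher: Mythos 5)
Your overall strategy --- proving the two inclusions separately and, in each case, passing from the witness of the stronger semantics to the witness of the weaker one via $R=\agras{S}{P}$ (resp.\ $R=\exrules{E}$) and $E=\{T : T=\minpos{T},\ T\subseteq R\}$ --- is the right one. The trouble is that every step where you locate the ``real work'' is conditioned on a configuration that cannot occur, while the half of each equivalence that carries the actual content is left unproved. Since $R$ is a generating set of $P$, $R\subseteq\reduct{P}{R}$, so no rule of $R$ is defeated by any rule of $R$; consequently no fragment $Y\subseteq R$ defeats any fragment $X\subseteq R$, and no $r_2\in R$ defeats any $r_1\in R$. Your ``key lemma'' for the first inclusion is only able to produce a contradiction with $R=\minpos{\reduct{\lpp{P}}{R}}$ when the defeated rule $r$ already lies in $R$, and your ``bridge'' for the second inclusion starts from ``$r_2\in R$ defeats $r_1\in R$'': both are exactly these impossible situations, so they only re-establish the trivial inclusions $E\subseteq\reduct{\lpp{P}}{E}$ and $R\subseteq\reduct{\lpp{P}}{R}$. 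Moreover the lemma's justification is broken on its own terms: failure of override only constrains the rules of $Y$ that are \emph{defeated by $X$}; the rule of $Y$ whose head defeats $r_1$, and the rules supporting it, need not be defeated by $X$ and may well be $<r_1$, in which case they are excluded from $\trules{r_1}{R}=\minpos{\{p\in R : p\not<r_1\}}$ and the defeating head need not lie in $\head{\trules{r_1}{R}}$ (the paper's stratified example $r_1\colon a\leftarrow \dnot b$, $r_2\colon b\leftarrow$, $r_2<r_1$ is precisely such a configuration).

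What is missing in both cases is the elimination of objects \emph{outside} $R$. For $\snpasindno\subseteq\snpasindgen$ you must show that every fragment $X$ with $X\not\subseteq R$ is removed in $\reduct{\lpp{P}}{E}$: take the first rule $r\in X\setminus R$ in the $\minpos$-ordering of $X$, so that $\bodyp{r}\subseteq\head{R}$; since $r\notin R=\minpos{\reduct{\lpp{P}}{R}}$, this forces $\bodym{r}\cap\head{\trules{r}{R}}\neq\emptyset$, and $Y=\trules{r}{R}$ is then a fragment belonging to $E$ that defeats $X$ and that $X$ cannot override, because every rule of $Y$ is $\not<r$. For $\snpasindgen\subseteq\snpasd$ you must show that every $r\notin R$ with $\bodyp{r}\subseteq\head{R}$ fails the direct-conflict reduct: the fragment $X$ consisting of $r$ together with its support in $R$ lies outside $E$, so stability of $E$ yields some $Y\in E$ that defeats $X$ and is not overridden by it; the defeat can only target $r$ (any other target would be a rule of $R$ defeated by $R$), and unwinding the failed override condition produces $q\in Y\subseteq R$ with $\head{q}\in\bodym{r}$ such that $r$ does not directly override $q$. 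Note that this forces the relevant fragment to satisfy $X\not\subseteq R$, contrary to your ``fragment $X\ni r_1$ with $X\subseteq R$''. With these two elimination arguments supplied, the remainder of your outline goes through.
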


\begin{proposition}
$\snpasd \not \subseteq \snpasindgen$
\end{proposition}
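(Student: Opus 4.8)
The claim $\snpasd \not\subseteq \snpasindgen$ asserts that there is a logic program with preferences $\lpp{P} = (P,<)$ and a set $S$ such that $S$ is a preferred answer set according to $\snpasd$ (the direct-conflict semantics recapitulated earlier) but $S$ is \emph{not} a preferred answer set according to $\snpasindgen$ (Approach One). So the plan is purely to exhibit a counterexample program and verify the two membership facts. The natural place to look is a program whose relevant conflict is \emph{indirect}: $\snpasd$ only cares about directly conflicting rules when computing its reduct, so a preference between two rules that are indirectly (but not directly) conflicting will simply be ignored by $\snpasd$, whereas $\snpasindgen$ is designed precisely to respect such indirect conflicts via the ``override'' relation on fragments.

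Concretely, I would take a program of the same shape as the one used in the Principle \ref{principle:ii} discussion — something like
\begin{tabbing}
\laarule{r_1}{a}{\dnot b}\\
\lbrule{r_2}{b}{\dnot c}\\
\lbrule{r_3}{c}{a}
\end{tabbing}
with the preference $r_2 < r_1$ (so $r_1$ is preferred), where $r_1$ and $r_2$ are indirectly conflicting via $r_3$: $r_1$ defeats $r_2$ through nothing, and $r_2$ defeats $r_1$ only with the help of $r_3$ which depends on $r_1$ — wait, I need $r_2$ to defeat $r_1$, i.e. $\head{r_2}\in\bodym{r_1}$, which holds since $b\in\bodym{r_1}$; and $r_1$ to reach $r_2$'s negative-body atom $c$ via $r_3$. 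Then $\{b\}$ (generated by $\{r_2\}$) and $\{a,c\}$ (generated by $\{r_1,r_3\}$) are the two answer sets of $P$. For $\snpasd$: when we build $\reduct{\lpp{P}}{\{r_2\}}$, the rule $r_1$ is defeated by $r_2$ and $r_1$ does not \emph{directly} override $r_2$ (they are not directly conflicting, since $r_1$ does not defeat $r_2$ directly — $\head{r_1}=a\notin\bodym{r_2}=\{c\}$), so $r_1$ is removed; one checks $\minpos{\reduct{\lpp{P}}{\{r_2\}}} = \{r_2\}$, so $\{r_2\}$ is a preferred generating set and $\{b\}\in\pasd{\lpp{P}}$. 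For $\snpasindgen$: the fragments $X=\{r_1,r_3\}$ and $Y=\{r_2\}$ are conflicting, and $X$ overrides $Y$ since the unique $r\in X$ defeated by $Y$ is $r_1$ and there is $r_2\in Y$ defeated by $X$ with $r_2<r_1$; hence in any preferred stable fragment set containing $Y$, the fragment $X$ survives the reduction, so the guess corresponding to $\{b\}$ is not preferred-stable, and $\{b\}\notin\pasindgen{\lpp{P}}$. (I would also verify, for completeness, that $\{a,c\}$ \emph{is} in $\pasindgen{\lpp{P}}$, so that the two semantics genuinely differ on this program rather than one being empty.)

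The routine obstacle is bookkeeping: I must pin down \emph{all} the fragments of the three-rule program, compute $\reduct{\lpp{P}}{E}$ for the candidate guesses on the $\snpasindgen$ side, and run the $\minpos$ fixpoint twice on the $\snpasd$ side, making sure the direct-vs-indirect distinction is applied correctly (in particular that $r_1$ and $r_2$ are \emph{not} directly conflicting, so the ``$r_1$ does not directly override $r_2$'' clause fires). The one genuine design point — and the place where an ill-chosen example would fail — is ensuring the conflict is indirect yet \emph{effectual} in the sense of the earlier examples: I need $r_3$ actually usable (its positive body $a$ derivable from $r_1$) so that picking $r_1$ really does defeat $r_2$, and I need no extra rule defeating $r_3$, so that the indirect conflict is not ``broken'' the way it is in the $r_4: y\leftarrow$ example. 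Once the example is fixed, the verification is mechanical and short.
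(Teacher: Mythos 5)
Your proposal is correct and takes essentially the same route the paper does: exhibit a program whose only conflict is indirect, so that $\snpasd$ (which only consults \emph{direct} overriding) accepts both answer sets while $\snpasindgen$ rejects the one generated by the less preferred side of the indirect conflict; your example checks out ($\{b\}\in\pasd{\lpp{P}}$ since $r_1$ and $r_2$ are not directly conflicting, yet $F_5=\{r_1,r_3\}$ overrides $\{r_2\}$, so the fragment-set guess for $\{b\}$ is not stable under the preferred reduct). Note that the paper's own running example ($r_1\colon \lrule{a}{x}$, $r_2\colon \lrule{x}{\dnot b}$, $r_3\colon \lrule{b}{\dnot a}$ with $r_2<r_3$) already serves as such a witness, with $\{a,x\}$ preferred under $\snpasd$ but not under $\snpasindgen$.
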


\begin{proposition}
$\snpasindgen \not \subseteq \snpasindno$
\end{proposition}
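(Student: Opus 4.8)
The plan is to exhibit one witness program, so only a single example is needed. The natural candidate is the program already analysed in Examples~\ref{example:pasi_strat} and~\ref{example:pasino_strat}, namely $\lpp{P} = (P,<)$ with $P = \{ r_1, r_2 \}$, $r_1 = \lrule{a}{\dnot b}$, $r_2 = \lrule{b}{}$, and the single preference $r_2 < r_1$. The underlying program $P$ is stratified and has the unique answer set $S = \{b\}$, produced by the unique generating set $R = \{r_2\}$.

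First I would show $S \in \pasindgen{\lpp{P}}$. This is immediate from Proposition~\ref{pas_ind_gen:proposition:strat}: since $P$ is stratified, $\pasindgen{\lpp{P}} = \as{P} = \{S\}$. If one prefers a direct argument, it is the fragment computation of Example~\ref{example:pasi_strat}: with $E = \{ \emptyset, \{r_2\} \}$ the fragments $\{r_1\}$ and $\{r_1,r_2\}$ are defeated by $\{r_2\} \in E$ and, not being conflicting with $\{r_2\}$, cannot override it; hence they are deleted in the preferred reduct, $\reduct{\lpp{P}}{E} = E$, and $\head{E} = S$.

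Next I would show $S \notin \pasindno{\lpp{P}}$, which is the computation of Example~\ref{example:pasino_strat}. For the only generating set $R = \{r_2\}$ of $P$ we have $\trules{r_1}{R} = \minpos{\{ p \in R : p \not < r_1 \}} = \minpos{\emptyset} = \emptyset$, because $r_2 < r_1$; hence $\bodym{r_1} \cap \head{\trules{r_1}{R}} = \emptyset$, so $r_1$ is not removed, $\reduct{\lpp{P}}{R} = \{r_1,r_2\}$ and $\minpos{\reduct{\lpp{P}}{R}} = \{r_1,r_2\} \neq R$. Thus $R$ is not a preferred generating set, and as it is the only generating set of $P$, $\pasindno{\lpp{P}} = \emptyset$. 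Combining the two parts, $S \in \pasindgen{\lpp{P}} \setminus \pasindno{\lpp{P}}$, which gives $\snpasindgen \not \subseteq \snpasindno$.

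I do not anticipate a genuine obstacle: the content is already encapsulated in the two cited examples, and the only point that requires care is that the ``$p \not < r$'' clause in the definition of $\trules{r}{R}$ deliberately excludes the less preferred fact $r_2$ from the rules allowed to defeat $r_1$. This is precisely the simplification by which $\snpasindno$ fails to ignore the (vacuous, since there is no conflict) preference $r_2 < r_1$, whereas $\snpasindgen$, which only lets conflicting fragments override one another, does ignore it.
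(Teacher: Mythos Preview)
Your proposal is correct and follows essentially the same approach as the paper: the witness program you chose is precisely the one the paper analyses in Examples~\ref{example:pasi_strat} and~\ref{example:pasino_strat} to exhibit exactly this separation, and your verifications of $S \in \pasindgen{\lpp{P}}$ (via Proposition~\ref{pas_ind_gen:proposition:strat} or the direct fragment argument) and $S \notin \pasindno{\lpp{P}}$ reproduce those computations accurately.
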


We interpret the results as follows. The semantics $\snpasd$ is the reference semantics for the case of direct conflicts. The semantics $\snpasindno$ and $\snpasindgen$ extend the semantics to the case of indirect conflicts. The semantics $\snpasindgen$ ignores preferences between non-conflicting rules, e.g. it is equivalent to the answer set semantics for the subclass of stratified programs (Stratified programs contain no conflicts). If an application domain allows it, we can drop the requirement for ignoring preferences between non-conflicting rules and use the semantics $\snpasindno$ that stays in the NP complexity class. The semantics $\snpasindno$ is sound w.r.t. $\snpasindgen$ but it is not complete w.r.t. $\snpasindgen$. Some preferred answer sets according to $\snpasindgen$ are not preferred according to $\snpasindno$ due to preferences between non-conflicting rules.

%%%%%%%%%%%%%%%%%%%%%%%%%%%
%%%%%%%%%%%%%%%%%%%%%%%%%%%
\section{Relation to Existing Approaches}
%%%%%%%%%%%%%%%%%%%%%%%%%%%
%%%%%%%%%%%%%%%%%%%%%%%%%%%

Schaub and Wang \cite{Schaub:2003uf} have shown that the approaches \cite{Delgrande:2002uc,Wang:2000tq,Brewka:1999uv}, referred here as $\snpasdst$, $\snpaswzl$, $\snpasbe$ form a hierarchy.

\begin{proposition}[\cite{Schaub:2003uf}]
$\snpasdst \subseteq \snpaswzl \subseteq \snpasbe$
\end{proposition}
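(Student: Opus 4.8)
This is the hierarchy established by Schaub and Wang in \cite{Schaub:2003uf}; we outline the argument one would give. All three semantics $\snpasdst$, $\snpaswzl$, $\snpasbe$ are prescriptive: each declares a consistent answer set $S$ of the underlying program $P$ to be preferred exactly when the generating rules $\agras{S}{P}$ admit an enumeration compatible with $<$ along which $S$ can be ``rebuilt'', and they differ only in how permissive the rebuilding step is. Writing each notion in a common template --- a sequence of the rules in $\agras{S}{P}$ together with a per-step predicate that says a rule may occupy the current position (either it fires, its positive body already derived and its negative body not met, or it is blocked by the rules placed earlier) --- the three blocking/groundedness predicates become successively weaker as one passes from $\snpasdst$ to $\snpaswzl$ to $\snpasbe$.

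The plan is to prove the two inclusions separately by a certificate-transfer argument. For $\snpasdst \subseteq \snpaswzl$: fix $S \in \pasdst{\lpp{P}}$ and let $\sigma$ be an enumeration of $\agras{S}{P}$ witnessing $S \in \pasdst{\lpp{P}}$. One then checks that $\sigma$, possibly after a canonical reordering that preserves $<$-compatibility, also witnesses $S \in \paswzl{\lpp{P}}$: at each step $i$ the condition imposed by the WZL construction is implied by the (stronger) DST condition that $\sigma$ already satisfies at step $i$, so no step fails and the same $S$ is accepted. The inclusion $\snpaswzl \subseteq \snpasbe$ is handled analogously, turning a WZL certificate into a Brewka--Eiter ``fully prioritized'' construction and verifying step by step that the weaker BE requirement is met. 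Note that only the inclusions are claimed here, so no separating examples are needed.

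First I would carry out the normalization, i.e. rewrite DST's order-preservation together with its associated operator, Wang et al.'s alternating construction, and Brewka and Eiter's prioritized reduct into the common sequence-plus-per-step-predicate form; once that is done, the implications between the per-step predicates are essentially immediate. The main obstacle is precisely this normalization: reconciling what ``already applied'' and ``blocked'' mean across the three formalisms, and dealing with the fact that DST and WZL may require the rules in a slightly different order than BE (a rule whose prerequisite becomes available only later is treated differently), so a reordering lemma --- that a valid certificate can be permuted without breaking $<$-compatibility or any step predicate --- must be proved before the step-wise comparison goes through.
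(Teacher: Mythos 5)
The paper does not prove this proposition at all: it is imported verbatim from Schaub and Wang \cite{Schaub:2003uf}, so there is no in-paper argument to compare yours against. Your outline does match the strategy of that source --- Schaub and Wang's contribution is precisely to recast $\snpasdst$, $\snpaswzl$ and $\snpasbe$ as order-preservation conditions on enumerations of the generating rules $\agras{S}{P}$ and to observe that the conditions are nested. But as written your text is a plan rather than a proof: the normalization step, which you yourself identify as ``the main obstacle'', is exactly where all the mathematical content lives, and you never write down the three per-step predicates nor verify the two implications between them. Until that is done, nothing has been established.

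Two more specific concerns. First, the reordering lemma you invoke for $\snpasdst \subseteq \snpaswzl$ is both unproved and suspicious: permuting an enumeration can destroy groundedness (a rule moved earlier may no longer have its positive body derived by its predecessors), and in Schaub and Wang's development no reordering is needed --- the same enumeration that witnesses the stronger (D) condition witnesses the weaker ones, because the per-step predicates are implied one by the next. The fact that you feel a reordering is required suggests the three definitions have not been pinned down precisely enough for the step-wise comparison to be meaningful. Second, for $\snpaswzl \subseteq \snpasbe$ the delicate point is that Brewka and Eiter's construction lets a rule fire when its prerequisite lies in the final answer set $S$ rather than in the part derived so far; showing that this genuinely weakens the W condition (rather than making the two incomparable) is the heart of that inclusion and is not addressed by the phrase ``verifying step by step that the weaker BE requirement is met''.
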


We have shown that our approach for direct conflicts continues in this hierarchy \cite{Simko:2013tm}.

\begin{proposition}[\cite{Simko:2013tm}]
$\snpasbe \subseteq \snpasd$
\end{proposition}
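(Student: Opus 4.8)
The plan is to fix a $\snpasbe$-preferred answer set $S$ of $\lpp{P}=(P,<)$, take $R=\agras{S}{P}$ (the unique generating set of $P$ with $\head{R}=S$, since $S$ is an answer set), and prove $R=\minpos{\reduct{\lpp{P}}{R}}$, so that $R$ is a preferred generating set of $\lpp{P}$ in the sense of $\snpasd$ and $S=\head{R}\in\pasd{\lpp{P}}$. I would work from Brewka and Eiter's operator characterisation \cite{Brewka:1999uv}: $S$ is $\snpasbe$-preferred iff $S$ is an answer set of $P$ and, for some total order $<'$ on $P$ with ${<}\subseteq{<'}$, the greedy ``most preferred first'' reconstruction of $S$ along $<'$ succeeds; I only need that (a) a rule $q$ is committed exactly when it is reached along $<'$ with $\bodyp{q}\subseteq S$ and no literal of $\bodym{q}$ yet committed, and (b) a successful reconstruction never commits a rule whose head lies outside $S$. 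One direction, $R\subseteq\minpos{\reduct{\lpp{P}}{R}}$, is immediate: the $\snpasd$-reduct discards only rules the ordinary reduct discards (a rule survives once it directly overrides every rule of $R$ defeating it), so $\reduct{P}{R}\subseteq\reduct{\lpp{P}}{R}$; as $\minpos{\cdot}$ is $\subseteq$-monotone and $R=\minpos{\reduct{P}{R}}$ because $R$ is a generating set of $P$, the inclusion follows.

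For $\minpos{\reduct{\lpp{P}}{R}}\subseteq R$ I would argue by contradiction. If the iterative construction of $\minpos{\reduct{\lpp{P}}{R}}$ ever adds a rule outside $R$, the first such rule has positive body inside $\head{R}=S$, so the family $\mathcal{D}$ of rules $d\in\reduct{\lpp{P}}{R}\setminus R$ with $\bodyp{d}\subseteq S$ is nonempty; pick $d\in\mathcal{D}$ that is $<'$-minimal. Then $\bodyp{d}\subseteq S$ and $d\notin\agras{S}{P}$ force $\bodym{d}\cap S\neq\emptyset$, so $d$ is defeated by $R$; since $d$ survives $\reduct{\lpp{P}}{R}$, it directly overrides every rule of $R$ defeating it, hence $d$ is directly conflicting with some $r_2\in R$ — so $\head{d}\in\bodym{r_2}$ and therefore $\head{d}\notin S$ — and every rule of $R$ defeating $d$ lies $<$-below $d$, hence $<'$-below $d$, hence is committed strictly after $d$. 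Running the reconstruction: when $d$ is reached, $\bodyp{d}\subseteq S$, so by (a) either $d$ is committed — impossible by (b), as $\head{d}\notin S$ — or some $y\in\bodym{d}$ is already committed; then $y\in S$ (the reconstruction succeeds), $y=\head{p}$ for a rule $p$ committed before $d$, and $p\notin R$ because every rule of $R$ with head in $\bodym{d}$ is committed after $d$.

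The tempting step is to recurse on $p$: it is committed, so $\bodyp{p}\subseteq S$; $p\notin R$; and if $p$ also survived $\reduct{\lpp{P}}{R}$ it would lie in $\mathcal{D}$ at a smaller $<'$-position, contradicting the minimality of $d$. \textbf{The obstacle} is precisely that $p$ need not survive $\reduct{\lpp{P}}{R}$ — it can be discarded because some $r_3\in R$ defeats it without $p$ directly overriding $r_3$ — so the single-rule recursion does not close, and this is where I expect the genuine difficulty to lie. The way around it is presumably to replace the per-rule recursion by a single induction along the whole $<'$-prefix of the reconstruction, carrying an invariant such as: every rule committed so far is in $R$, or is discarded from $\reduct{\lpp{P}}{R}$ by an $R$-defeater whose own head is committed only later in that prefix — which is exactly what forbids an offending $d$. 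Everything else (the reduct inclusion, $\subseteq$-monotonicity of $\minpos{\cdot}$, identifying $\agras{S}{P}$ as the generating set of $P$, consistency of $S$) is book-keeping.
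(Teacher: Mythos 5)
This proposition is imported from \cite{Simko:2013tm}; the paper under review contains no proof of it, so I can only assess your argument on its own terms. The frame is right: take $R=\agras{S}{P}$ and show $R=\minpos{\reduct{\lpp{P}}{R}}$. Your first inclusion is correct and complete: every rule removed by $\reduct{\lpp{P}}{R}$ is defeated by $R$ and hence also removed by $\reduct{P}{R}$, so $\reduct{P}{R}\subseteq\reduct{\lpp{P}}{R}$, and monotonicity of $\minpos{\cdot}$ together with $R=\minpos{\reduct{P}{R}}$ gives $R\subseteq\minpos{\reduct{\lpp{P}}{R}}$. Note this half uses nothing about preferredness of $S$.

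The converse inclusion, however, is not proved, and the gap is exactly where you flag it — but your proposed repair does not close it either. Grant the invariant you suggest (every rule committed so far is in $R$ or is discarded from $\reduct{\lpp{P}}{R}$). It rules out $d$ itself being committed, but it does \emph{not} rule out the remaining case: $d$ is skipped because it is defeated by $S_{i-1}$ via a committed rule $p\notin R$ with $\head{p}\in\bodym{d}\cap S$. The invariant then only tells you that $p$ is discarded from $\reduct{\lpp{P}}{R}$ by some $r_3\in R$ that $p$ does not directly override; since $p$ was committed, $r_3$ is necessarily committed after $p$, and no contradiction follows. So the invariant is not ``exactly what forbids an offending $d$'': to finish one must additionally show that no such $p$ can be committed before $d$ in a successful reconstruction (or that its commitment already dooms the reconstruction), and that is the actual mathematical content of the proposition, which your sketch leaves open. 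A secondary caveat: your facts (a) and (b) are a simplified surrogate for Brewka and Eiter's operator, which checks positive bodies against the final candidate and treats specially rules defeated by the candidate whose heads are already derived; at the one point where you invoke them for $d$ the simplification happens to be harmless because you have established $\head{d}\notin S$, but a complete proof must carry the exact operator through the global induction you are gesturing at.
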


The relations $\snpasdst \subseteq \snpasindno$ and $\snpaswzl \subseteq \snpasindgen$ are the only subset relation between our semantics for general conflicts $\snpasindno$, $\snpasindgen$ and $\snpasdst$, $\snpaswzl$ and $\snpasbe$.

%Thus $\snpasindno$ and $\snpasindgen$ form an independent branch in the hierarchy of the approaches.

\begin{proposition}
$\snpasdst \subseteq \snpasindno$.
\end{proposition}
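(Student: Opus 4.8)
The plan is to show that for every $S \in \pasdst{\lpp{P}}$ the generating set $R = \agras{S}{P}$ is a preferred generating set of $\lpp{P}$ in the sense of Definition~\ref{pasindirect:definition:pas}; since $S = \head{R}$, this yields $S \in \pasindno{\lpp{P}}$ and hence $\snpasdst \subseteq \snpasindno$. Recall from \cite{Delgrande:2002uc} (cf.\ \cite{Schaub:2003uf}) that $S \in \pasdst{\lpp{P}}$ means that $S$ is an answer set of $P$ and the generating rules $R = \agras{S}{P}$ admit an enumeration $r_1,\dots,r_n$ that is (a) \emph{grounded}, $\bodyp{r_i} \subseteq \head{\{r_1,\dots,r_{i-1}\}}$ for all $i$; (b) \emph{order preserving}, $r_i < r_j$ implies $j < i$; and (c) satisfies the \emph{defeating condition}: for every $i$ and every rule $r \in P$ with $r_i < r$, either $\bodyp{r} \not\subseteq S$, or $r \in \{r_1,\dots,r_{i-1}\}$, or $\bodym{r} \cap \head{\{r_1,\dots,r_{i-1}\}} \neq \emptyset$. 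In particular $R$ is a generating set of $P$, so it only remains to prove $R = \minpos{\reduct{\lpp{P}}{R}}$.

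The inclusion $R \subseteq \minpos{\reduct{\lpp{P}}{R}}$ is the easy half. Since $\trules{r}{R} = \minpos{\{p \in R : p \not< r\}} \subseteq R$, whenever a rule is deleted from $P$ in forming $\reduct{\lpp{P}}{R}$ (Definition~\ref{pasindirect:definition:reduct}) it is also deleted in forming $\reduct{P}{R}$; hence $\reduct{P}{R} \subseteq \reduct{\lpp{P}}{R}$. As $R$ is a generating set, $R = \minpos{\reduct{P}{R}}$, and since $\minpos{\cdot}$ is monotone with respect to rule-set inclusion (immediate from its iterative characterisation), $R = \minpos{\reduct{P}{R}} \subseteq \minpos{\reduct{\lpp{P}}{R}}$.

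For the converse $\minpos{\reduct{\lpp{P}}{R}} \subseteq R$ — the heart of the proof — I would argue by contradiction. Build $\minpos{\reduct{\lpp{P}}{R}}$ by the usual forward iteration and let $r$ be the first rule entering this set that is not in $R$; then every rule applied before $r$ lies in $R$, so $\bodyp{r} \subseteq \head{R} = S$, and $r \in \reduct{\lpp{P}}{R}$. From $\bodyp{r} \subseteq S$ and $r \notin R = \agras{S}{P}$ we get $\bodym{r} \cap S \neq \emptyset$; the aim is to contradict $r \in \reduct{\lpp{P}}{R}$ by exhibiting $\bodym{r} \cap \head{\trules{r}{R}} \neq \emptyset$. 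If no generating rule is $<$-below $r$, then $\{p \in R : p \not< r\} = R$, and since generating sets are fragments, $\head{\trules{r}{R}} = \head{\minpos{R}} = \head{R} = S$, which $\bodym{r}$ meets. Otherwise let $i^*$ be the least position in the enumeration of $R$ with $r_{i^*} < r$; by order preservation every $r_k$ with $k < i^*$ satisfies $r_k \not< r$, so $\{r_1,\dots,r_{i^*-1}\} \subseteq \{p \in R : p \not< r\}$, and being grounded, $\{r_1,\dots,r_{i^*-1}\} \subseteq \minpos{\{p \in R : p \not< r\}} = \trules{r}{R}$. Applying the defeating condition (c) at $i = i^*$ to the rule $r$: its first alternative fails since $\bodyp{r} \subseteq S$, and its second fails since $r \notin R$; hence $\bodym{r} \cap \head{\{r_1,\dots,r_{i^*-1}\}} \neq \emptyset$, so $\bodym{r} \cap \head{\trules{r}{R}} \neq \emptyset$ — contradicting $r \in \reduct{\lpp{P}}{R}$. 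This completes both inclusions, so $R$ is a preferred generating set of $\lpp{P}$ and $S \in \pasindno{\lpp{P}}$.

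I expect the main obstacle to be precisely the bridge just sketched: translating the closure operator $\trules{r}{R} = \minpos{\{p \in R : p \not< r\}}$, which speaks about sets of rules, into something the linear DST enumeration can address. The idea that makes it work is to look at the \emph{earliest} generating rule that $r$ is preferred over; order preservation guarantees that the entire prefix before it contains no rule $r$ is preferred over, hence is contained in $\trules{r}{R}$, so DST's defeating condition — which by itself refers only to an arbitrary prefix — can be invoked at that precise index to land a defeating literal inside $\head{\trules{r}{R}}$. A secondary point needing care is the degenerate case in which $r$ is incomparable to or below all of $R$, which is dispatched separately using the fact that generating sets are fragments, so that $\trules{r}{R}$ already reproduces all of $S$.
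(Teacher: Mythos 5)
Your proof is correct. The paper defers its own proof to the technical report, so a line-by-line comparison is not possible here, but your route through the order-preserving-enumeration characterisation of $\snpasdst$ is the natural bridge, and the decisive step --- picking the \emph{least} index $i^*$ with $r_{i^*} < r$, so that by minimality the whole prefix $\{r_1,\dots,r_{i^*-1}\}$ consists of rules not below $r$ and, by groundedness, is absorbed into the closure $\trules{r}{R} = \minpos{\{p \in R : p \not< r\}}$ --- is exactly what the statement needs; the DST defeating condition applied at $i^*$ then lands a defeating head inside $\head{\trules{r}{R}}$ as required. Two facts you use without proof are both routine but deserve a line in a written-up version: that $\symminpos$ is monotone in its argument (immediate from the iterative construction), and that a generating set $R = \minpos{\reduct{P}{R}}$ satisfies $\minpos{R} = R$, which you need so that $\trules{r}{R} = R$ in the degenerate case where no rule of $R$ lies below $r$. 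Finally, the boundary case $i^* = 1$ is harmless: there the defeating condition contradicts the existence of your offending rule $r$ outright, which only strengthens the contradiction you are deriving.
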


\begin{proposition}
$\snpaswzl \subseteq \snpasindgen$.
\end{proposition}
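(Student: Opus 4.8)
The plan is to take an arbitrary preferred answer set $S$ according to $\snpaswzl$ (Wang, Zhou, Lin) and exhibit a preferred stable fragment set $E$ of $\lpp{P}$ with $\head{E} = S$, which by Definition \ref{pas_gen:definition:pas} witnesses $S \in \pasindgen{\lpp{P}}$. Since every preferred answer set according to $\snpaswzl$ is in particular an answer set of the underlying program $P$, there is a stable fragment set $E$ with $\head{E} = S$; concretely, by Proposition \ref{pas_gen:proposition:ext_eq_gen}, $E = \{ T : T = \minpos{T} \text{ and } T \subseteq \agras{S}{P} \}$. The whole task is then to show this particular $E$ satisfies $\reduct{\lpp{P}}{E} = E$, i.e.\ that no fragment $X \in E$ is removed by the preferred reduct and no fragment $X \notin E$ survives it.

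The direction ``$\reduct{\lpp{P}}{E} \subseteq E$'' is the same as in the plain (non-preferred) case: if $X \notin E$ then $X \not\subseteq \agras{S}{P}$ or $X \neq \minpos{X}$; the latter cannot happen for a fragment, so some rule $r \in X$ has $\bodyp{r} \not\subseteq S$ or $\bodym{r} \cap S \neq \emptyset$. One shows, tracking the supported derivation inside the fragment $X$, that $X$ is defeated by some $Y \in E$ and $X$ does not override $Y$ (here one uses that $S$ is consistent and $E$ is a stable fragment set of $P$, so the generating set $\agras{S}{P}$ already defeats $r$ via heads in $S$). The real content is the other inclusion, ``$E \subseteq \reduct{\lpp{P}}{E}$'': we must show that for every $X \in E$ and every $Y \in E$ that defeats $X$, the fragment $X$ overrides $Y$ w.r.t.\ $<$. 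This is exactly the place where the hypothesis that $S$ is $\snpaswzl$-preferred (and not merely an answer set) must be cashed in — and it is the step I expect to be the main obstacle.

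To carry out that step I would unfold the definition of $\snpaswzl$ via its alternating/iterative construction of the ``preferred'' part of the answer set: $S$ is $\snpaswzl$-preferred iff the underlying program $P$ can be ordered (in the WZL sense, respecting $<$ and the support structure) so that each rule defeated along the way is defeated only by rules that are, in the order, ``above'' it — equivalently, a rule $r$ is never defeated in the construction by a rule $r'$ with $r' < r$ or by the downstream consequences of such an $r'$. Given $X, Y \in E$ with $Y$ defeating $X$, pick $r_1 \in X$ defeated by $Y$; since $X \subseteq \agras{S}{P}$, $r_1$ is in the generating set, hence applied along the WZL order, so the rule(s) of $\agras{S}{P}$ supplying the defeating literal were applied strictly before $r_1$. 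Restricting this to the fragment $Y$ (which is $\minpos$-closed and contained in $\agras{S}{P}$) and using that $Y$ defeats $X$, the WZL ordering condition forces the existence of $r_2 \in Y$ with $\head{r_2} \in \bodym{r_1}$ and $r_2 \not< r_1$; asymmetry/transitivity of $<$ upgrades this (together with the fact that $X$ and $Y$ are conflicting, so $X$ also defeats $Y$) to $r_2 < r_1$, giving precisely the override condition of Definition [Override]. The delicate points will be (i) translating the ``construction order'' of $\snpaswzl$ into a usable statement about which fragment defeats which, and (ii) verifying that the witness $r_2$ one extracts actually lies in the chosen fragment $Y$ rather than merely in $\agras{S}{P}$ — this is where the fragment-level ($\minpos$-closed) bookkeeping of the first approach does real work, and it is why $\snpaswzl$ lands inside $\snpasindgen$ rather than inside the coarser $\snpasindno$. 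A clean way to organize (i) and (ii) is to prove an auxiliary lemma: if $S$ is $\snpaswzl$-preferred, $R = \agras{S}{P}$, and $Y \subseteq R$ is a fragment defeating a rule $r \in R$, then $Y$ contains a rule $r'$ with $\head{r'} \in \bodym{r}$ and $r' \not< r$; the proposition then follows by a short argument as above.
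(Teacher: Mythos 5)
Your overall strategy -- take the canonical stable fragment set $E = \{\, T : T = \minpos{T} \mbox{ and } T \subseteq \agras{S}{P} \,\}$ for a $\snpaswzl$-preferred answer set $S$ and show $\reduct{\lpp{P}}{E} = E$ -- is the right one, but you have the two inclusions exactly backwards, and this inversion hides the entire content of the proof. The inclusion $E \subseteq \reduct{\lpp{P}}{E}$, which you call ``the real content,'' is vacuous: since $E$ is a stable fragment set, $E = \reduct{P}{E}$, so no $X \in E$ is defeated by any $Y \in E$ at all (every rule $r$ of $\agras{S}{P}$ has $\bodym{r} \cap S = \emptyset$ while $\head{Y} \subseteq S$), and the preferred reduct only removes fewer fragments than the plain one. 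Your own setup betrays this -- you ``pick $r_1 \in X$ defeated by $Y$'' with $X \subseteq \agras{S}{P}$, a situation that cannot occur -- and the closing auxiliary lemma quantifies over an empty set of cases. The argument you give there is also internally inconsistent: you extract $r_2 \not< r_1$ from the WZL ordering and then claim asymmetry ``upgrades'' this to $r_2 < r_1$, which is the opposite conclusion.

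The genuinely hard direction is $\reduct{\lpp{P}}{E} \subseteq E$: for every fragment $X \notin E$ you must produce a $Y \in E$ that defeats $X$ \emph{and is not overridden by} $X$. You dismiss this as ``the same as in the plain case,'' but it is not -- stability of $E$ only guarantees some defeater $Y \in E$, not one that escapes the override clause, and consistency of $S$ contributes nothing here. The paper's own running example shows the gap concretely: $E_1 = \{F_1, F_2, F_4\}$ is a stable fragment set, $F_3 \notin E_1$ is defeated only by $F_4 \in E_1$, yet $F_3$ overrides $F_4$, so $F_3$ survives the preferred reduct and $E_1$ fails to be preferred. Ruling this situation out is precisely where the hypothesis that $S$ is $\snpaswzl$-preferred must be used -- one has to argue from the WZL construction order that whenever a fragment $X$ outside $E$ would override every defeater in $E$, the WZL ordering could not have been completed, contradicting preferredness of $S$. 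Since your proposal never actually unfolds the definition of $\snpaswzl$ beyond an informal gloss, and spends its effort on the vacuous inclusion instead, the key step of the proof is missing.
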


\begin{proposition}
$\snpasindno \not \subseteq \snpasbe$.
\end{proposition}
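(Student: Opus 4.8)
The plan is to prove the non‑inclusion by a counterexample: a logic program with preferences $\lpp{P} = (P,<)$ and a set of literals $S$ with $S \in \pasindno{\lpp{P}}$ but $S \notin \pasbe{\lpp{P}}$. I would take $P$ to be the four rules $v_1\colon \lrule{x_1}{\dnot y_2}$, $v_2\colon \lrule{x_2}{\dnot y_1}$, $q_1\colon \lrule{y_1}{\dnot x_1}$, $q_2\colon \lrule{y_2}{\dnot x_2}$, and $<$ the transitive asymmetric relation generated by $v_1 < q_2$ and $v_2 < q_1$. The program $P$ has two answer sets, $S = \{x_1,x_2\}$ with generating set $R = \{v_1,v_2\}$, and $\{y_1,y_2\}$ with generating set $\{q_1,q_2\}$; the claim is that $S$ is the required witness.

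For $S \in \pasindno{\lpp{P}}$ I would verify that $R = \agras{S}{P} = \{v_1,v_2\}$ is a preferred generating set. The only point that is not bookkeeping is the reduct $\reduct{\lpp{P}}{R}$. Here $\trules{q_1}{R} = \minpos{\{p \in R : p \not< q_1\}} = \minpos{\{v_1\}} = \{v_1\}$ — the rule $v_2$ is excluded precisely because $v_2 < q_1$ — and since $\head{\{v_1\}} = \{x_1\}$ meets $\bodym{q_1}$, the rule $q_1$ is removed; symmetrically $\trules{q_2}{R} = \{v_2\}$ removes $q_2$. Both $v_1$ and $v_2$ survive, because $\trules{v_i}{R} \subseteq R$ and $\head{R} = \{x_1,x_2\}$ is disjoint from $\{y_1,y_2\} \supseteq \bodym{v_i}$. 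Hence $\reduct{\lpp{P}}{R} = \{v_1,v_2\}$ and $\minpos{\reduct{\lpp{P}}{R}} = R$, so $R$ is a preferred generating set and $S$ a preferred answer set of $\lpp{P}$ under $\snpasindno$.

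For $S \notin \pasbe{\lpp{P}}$ I would argue directly from the definition of $\snpasbe$ in \cite{Brewka:1999uv}: $S$ is $\snpasbe$‑preferred only if, for some enumeration of $P$ respecting $<$ (each rule preceded by every rule preferred to it), the Brewka--Eiter construction along that enumeration reconstructs $S$, where in that construction a rule may be blocked only by rules occurring earlier. All positive bodies in $P$ are empty, so applicability never depends on what has been derived so far, and the only subtlety is blocking. Because the enumeration respects $<$, in every admissible enumeration $q_2$ precedes $v_1$ and $q_1$ precedes $v_2$. Take whichever of $q_1,q_2$ comes first; say $q_1$. Its only possible blocker is $v_1$ (the unique rule with head $x_1 \in \bodym{q_1}$); but $v_1$ occurring before $q_1$ would force $q_2$ before $v_1$ before $q_1$, contradicting $q_1$ being the earlier of the two. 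Hence $q_1$ is unblocked and fires, putting $y_1 \notin S$ into the constructed set, which therefore differs from $S$; the case ``$q_2$ first'' is symmetric. So no admissible enumeration reproduces $S$, i.e.\ $S \notin \pasbe{\lpp{P}}$, completing the argument.

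The generating‑set check and the reduct computation in the first half are routine. The delicate step is the second half: one has to run the ``circularity'' argument uniformly against \emph{every} $<$‑respecting enumeration and be precise about the blocking clause of the Brewka--Eiter construction — which is exactly where the two preferences $v_1 < q_2$ and $v_2 < q_1$ earn their keep, by forcing each of $q_1,q_2$ ahead of the rule that would defeat it. I expect that to be the only real obstacle.
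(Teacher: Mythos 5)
Your counterexample is correct: for the four-rule defeat cycle with $v_1 < q_2$ and $v_2 < q_1$, the set $\{x_1,x_2\}$ is generated by the preferred generating set $R=\{v_1,v_2\}$ under $\snpasindno$ (the computation $\trules{q_1}{R}=\{v_1\}$, $\trules{q_2}{R}=\{v_2\}$ removes both $q_i$ from the reduct exactly as you say), while in every full prioritization compatible with $<$ the earlier of $q_1,q_2$ cannot be preceded by its unique defeater, fires, and derives a literal outside $\{x_1,x_2\}$, so no $<$-compatible order reconstructs that answer set under $\snpasbe$. This is the same counterexample strategy any proof of this non-inclusion must take (the paper relegates its witness to the technical report), and you correctly handle the one delicate point, namely quantifying over \emph{all} admissible enumerations rather than exhibiting a single failing one.
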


\begin{corollary}~
\begin{itemize}
\item $\snpasindno \not \subseteq \snpaswzl$, $\snpasindno \not \subseteq \snpasdst$,
\item $\snpasindgen \not \subseteq \snpasbe$, $\snpasindgen \not \subseteq \snpaswzl$, $\snpasindgen \not \subseteq \snpasdst$.
\end{itemize}
\end{corollary}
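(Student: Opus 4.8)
The plan is to obtain all five non-inclusions purely by composing subset relations already established, taking the single genuinely new non-inclusion $\snpasindno \not\subseteq \snpasbe$ as the seed and propagating it upward and sideways through the known hierarchy. The only facts I would invoke are: $\snpasdst \subseteq \snpaswzl \subseteq \snpasbe$ (the cited hierarchy of Schaub and Wang), $\snpasindno \subseteq \snpasindgen$ (the hierarchy proposition of this paper), and $\snpasindno \not\subseteq \snpasbe$.

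For the first bullet, I would argue by contradiction. If $\snpasindno \subseteq \snpaswzl$, then composing with $\snpaswzl \subseteq \snpasbe$ gives $\snpasindno \subseteq \snpasbe$, contradicting $\snpasindno \not\subseteq \snpasbe$; hence $\snpasindno \not\subseteq \snpaswzl$. Since $\snpasdst \subseteq \snpaswzl \subseteq \snpasbe$, the identical argument (an inclusion $\snpasindno \subseteq \snpasdst$ would force $\snpasindno \subseteq \snpasbe$) yields $\snpasindno \not\subseteq \snpasdst$.

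For the second bullet, I would first establish $\snpasindgen \not\subseteq \snpasbe$: an inclusion $\snpasindgen \subseteq \snpasbe$ would give $\snpasindno \subseteq \snpasindgen \subseteq \snpasbe$, again contradicting $\snpasindno \not\subseteq \snpasbe$. With $\snpasindgen \not\subseteq \snpasbe$ in hand, and using $\snpaswzl \subseteq \snpasbe$ and $\snpasdst \subseteq \snpasbe$, any of the inclusions $\snpasindgen \subseteq \snpaswzl$ or $\snpasindgen \subseteq \snpasdst$ would force $\snpasindgen \subseteq \snpasbe$, which has just been ruled out; hence $\snpasindgen \not\subseteq \snpaswzl$ and $\snpasindgen \not\subseteq \snpasdst$.

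Every step here is a one-line composition of containments, so there is no real obstacle; the only point requiring care is that both $\snpasdst$ and $\snpaswzl$ lie below $\snpasbe$, which is exactly the content of the cited hierarchy, so nothing beyond it is needed. As an alternative one could simply remark that the concrete program witnessing $\snpasindno \not\subseteq \snpasbe$ already witnesses all five statements — its distinguishing answer set fails to lie in $\snpaswzl$ and $\snpasdst$ because these are contained in $\snpasbe$, and it lies in $\snpasindgen$ because $\snpasindno \subseteq \snpasindgen$ — but the compositional argument is cleaner and is the one I would present.
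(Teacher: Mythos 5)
Your proposal is correct and is essentially the intended derivation: the statement is presented as a corollary immediately after $\snpasindno \not\subseteq \snpasbe$, to be obtained by composing that single witness with the Schaub--Wang hierarchy $\snpasdst \subseteq \snpaswzl \subseteq \snpasbe$ and the paper's own $\snpasindno \subseteq \snpasindgen$. All five non-inclusions follow by exactly the one-line contrapositive compositions you give.
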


\begin{proposition}
$\snpaswzl \not \subseteq \snpasindno$
\end{proposition}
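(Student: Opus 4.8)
The plan is to exhibit a small counterexample: take exactly the program with preferences $\lpp{P} = (P,<)$ of Example~\ref{example:pasino_strat}, namely $P = \{\, r_1 : \lrule{a}{\dnot b},\ \ r_2 : \lrule{b}{}\,\}$ with $r_2 < r_1$, and show that its unique answer set $S = \{b\}$ satisfies $S \in \paswzl{\lpp{P}}$ while, as already established in Example~\ref{example:pasino_strat}, $S \notin \pasindno{\lpp{P}}$. Since $S$ then witnesses a preferred answer set according to $\snpaswzl$ that is not preferred according to $\snpasindno$, the non-inclusion $\snpaswzl \not\subseteq \snpasindno$ follows immediately.

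The non-routine half is verifying $S \in \paswzl{\lpp{P}}$. Here $P$ is stratified with the single answer set $S = \{b\}$, and $\agras{S}{P} = \{r_2\}$ because $r_1$ is blocked by $S$ (we have $b \in \bodym{r_1}$ and $b \in S$). In the prioritized construction underlying $\snpaswzl$ one must process the $<$-greater rule $r_1$ before the $<$-smaller rule $r_2$; however, $r_1$ is not active w.r.t.\ $S$, since its negative body meets $S$, so the construction is allowed to pass over $r_1$ as defeated-by-$S$ without applying it, and may then apply the fact $r_2$, which is unconditionally applicable. The resulting set is $\{b\} = S$, so $S$ is $\snpaswzl$-preferred. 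The essential point is that $\snpaswzl$, unlike $\snpasdst$, permits a $<$-greater rule to be discharged by the candidate answer set itself rather than requiring it to be blocked by rules applied strictly earlier; this is precisely why $\snpasdst \subseteq \snpasindno$ holds whereas $\snpaswzl \subseteq \snpasindno$ fails. For the other half, one simply quotes Example~\ref{example:pasino_strat}: the unique generating set $R = \{r_2\}$ is not a preferred generating set of $\lpp{P}$, because $\trules{r_1}{R} = \emptyset$ (as $r_2 < r_1$), so $r_1$ survives in $\reduct{\lpp{P}}{R}$ and $\minpos{\reduct{\lpp{P}}{R}} = P \neq R$; hence $S \notin \pasindno{\lpp{P}}$.

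The main obstacle is purely expository rather than mathematical: the definition of $\snpaswzl$ is not reproduced in this paper, so the argument must invoke it carefully — specifically the clause that a rule whose negative body intersects the candidate answer set need not be applied and does not obstruct the application of less preferred rules. Once that clause is made explicit (and it is exactly the feature that places $\snpasdst$ strictly inside $\snpaswzl$ in the Schaub--Wang hierarchy), checking that the construction yields $\{b\}$ is immediate, and nothing else in the proof requires computation beyond what Example~\ref{example:pasino_strat} already supplies.
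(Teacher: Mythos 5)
Your reduction to Example~\ref{example:pasino_strat} has a genuine gap: the claim that $S=\{b\}$ is a preferred answer set of that program under $\snpaswzl$ is false. The clause you invoke --- that a more preferred rule may be ``discharged by the candidate answer set itself rather than requiring it to be blocked by rules applied strictly earlier'' --- is precisely the feature that distinguishes $\snpasbe$ from $\snpaswzl$, not $\snpaswzl$ from $\snpasdst$. In the Schaub--Wang reconstruction, both $\snpasdst$ and $\snpaswzl$ demand that a non-generating rule $r'$ preferred over a generating rule $r_i$ be neutralized by material available \emph{before} $r_i$ in the order (unsatisfiable prerequisite, or negative body hit by heads of earlier rules, or --- for $\snpaswzl$ --- head already derived by earlier rules); only $\snpasbe$ lets the whole answer set do the defeating. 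For $P=\{r_1: \lrule{a}{\dnot b},\ r_2: \lrule{b}{}\}$ with $r_2<r_1$, the fact $r_2$ cannot be placed first because $r_1$ has an empty (hence satisfied) prerequisite and nothing derived so far defeats it, so the construction deadlocks: this program is the standard witness separating $\snpasbe$ from $\snpaswzl$, and indeed the paper's own discussion of Example~\ref{example:motivation} records exactly this imperative deadlock for $\snpaswzl$. What your example actually establishes is $\snpasbe \not\subseteq \snpasindno$, not the stated proposition. The second half of your argument (that $\{b\}\notin\pasindno{\lpp{P}}$) is fine, but it is carried by the wrong witness.

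To repair the proof you must exploit the feature that genuinely separates $\snpaswzl$ from $\snpasdst$, namely the dropped groundedness requirement on the order. For instance, take $r_1:\lrule{c}{}$, $r_2:\lrule{b}{c}$, $r_3:\lrule{a}{\dnot b}$ with $r_1 < r_3$. The unique answer set is $S=\{b,c\}$ with generating set $R=\{r_1,r_2\}$. Under $\snpaswzl$ the enumeration placing $r_2$ before $r_1$ is admissible (no groundedness check), and at the position of the less preferred rule $r_1$ the rule $r_3$ is already defeated by $\head{\{r_2\}}=\{b\}$; hence $S\in\paswzl{\lpp{P}}$. Under $\snpasindno$, however, $\trules{r_3}{R}=\minpos{\{p\in R: p\not< r_3\}}=\minpos{\{r_2\}}=\emptyset$ because $r_2$ positively depends on the less preferred $r_1$, so $r_3$ survives the reduct, $\minpos{\reduct{\lpp{P}}{R}}\neq R$, and $S\notin\pasindno{\lpp{P}}$. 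This yields the non-inclusion; your chosen program does not.
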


The overall hierarchy of the approaches is depicted in Figure \ref{figure:hierarchy}.

\begin{figure}[h]
\caption{The hierarchy of the approaches.}
\label{figure:hierarchy}
\begin{center}
\begin{tikzpicture}
	\coordinate [label=$\snpasindno$] (INO) at (2,2);
	\coordinate [label=$\subseteq$] (S1) at (3,2);
	\coordinate [label=$\snpasindgen$] (I) at (4,2);

	\coordinate [label=$\snpasdst$] (DST) at (0,1);
	\coordinate [label=$\subseteq$] (S3) at (1,0.5);
	\coordinate [label=$\subseteq$] (S4) at (1,1.5);
	\coordinate [label=$\subseteq$] (S5) at (5,0.5);
	\coordinate [label=$\subseteq$] (S6) at (5,1.5);
	\coordinate [label=$\snpasd$] (D) at (6,1);
	
	\draw (2.7,1.3) node[rotate=30]  {$\subseteq$};
		
	\coordinate [label=$\snpaswzl$] (WZL) at (2,0);
	\coordinate [label=$\subseteq$] (S2) at (3,0);	
	\coordinate [label=$\snpasbe$] (BE) at (4,0);	
\end{tikzpicture}
\end{center}
\end{figure}
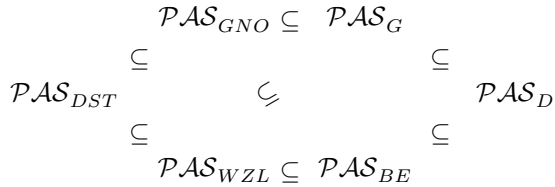

%%%%%%%%%%%%%%%%%%%%%%%%%%%
%%%%%%%%%%%%%%%%%%%%%%%%%%%
\section{An Example}
%%%%%%%%%%%%%%%%%%%%%%%%%%%
%%%%%%%%%%%%%%%%%%%%%%%%%%%

In this section we show that the approaches of this paper handle correctly  the program of Example \ref{example:motivation} from Introduction.  We remind that neither of the approaches $\snpasdst$, $\snpaswzl$ and $\snpasbe$ provides intended preferred answer sets.

\begin{example}
\label{example:my}
We recall the program:
\begin{tabbing}
\laarule{r_1}{nice(car_1)}{}\\
\lbrule{r_2}{safe(car_2)}{}\\
\\
\lbrule{r_3}{rec(car_1)}{nice(car_1), \dnot \neg rec(car_1)}\\
\lbrule{r_4}{rec(car_2)}{nice(car_2), \dnot \neg rec(car_2)}\\
\\
\lbrule{u_1}{\neg rec(car_2)}{rec(car_1)}\\
\lbrule{u_2}{\neg rec(car_1)}{rec(car_2)}\\
\\
\lbrule{u_3}{rec(car_1)}{safe(car_1), \dnot \neg rec(car_1)}\\
\lbrule{u_4}{rec(car_2)}{safe(car_2), \dnot \neg rec(car_2)}\\
\\
\tgo $r_i < u_j$ for each $i$ and $j$.
\end{tabbing}
The program has two answer sets $S_1 = \{ rec(car_1), \neg rec(car_2) \} \cup F$ and $S_2 = \{ \neg rec(car_1), rec(car_2) \} \cup F$ where $F = \{ nice(car_1), safe(car_2) \}$. As we mentioned in Introduction, $S_2$ is the intended unique preferred answer set.

\paragraph{$\snpasindgen$}:
We start by listing fragments of the program. We denote by $F_i$ fragments formed by the facts. Let $F_0 = \emptyset$, $F_1 = \{r_1\}$, $F_2 = \{r_2\}$, $F_3 = \{ r_1, r_2\}$.

The rules $r_3$ and $u_4$ are conflicting. We denote by $A_i$ fragments containing the rule $r_3$: $A_1 = \{ r_1, r_3\}$, $A_2 = \{ r_1, r_3, u_1\}$,  $A_3 = \{r_1, r_2, r_3\}$, $A_4 = \{ r_1, r_2, r_3, u_1 \}$.

We denote by $B_i$ fragments containing the rule $u_4$. Let $B_1 = \{ r_2, u_4\}$, $B_2 =\{ r_2, u_4, u_2 \}$, $B_3 = \{r_1, r_2, u_4\}$, $B_4 = \{r_1, r_2, u_4, u_2\}$.

A stable fragment set $E_1 = \{ F_0, F_1, F_2, F_3, A_1, A_2, A_3, A_4 \}$ corresponds to the answer set $S_1$ and a stable fragment set $E_2 = \{F_0, F_1, F_2, F_3, B_1, B_2, B_3, B_4\}$ corresponds to the answer set $S_2$.

We have that $B_3$ overrides both $A_2$ and $A_4$. Hence $B_3 \in \reduct{\lpp{P}}{E_1}$, and $\reduct{\lpp{P}}{E_1} \neq E_1$. Hence $S_1$ is not a preferred answer set.

On the other hand $E_2 = \reduct{\lpp{P}}{E_2}$, and $S_2$ is a preferred answer set.

\paragraph{$\snpasindno$}: A generating set $R_1 = \{r_1, r_2, r_3, u_1\}$ corresponds to the answer set $S_1$, and $R_2 = \{r_1, r_2, u_4, u_2\}$ corresponds to the answer set $S_2$.

We have that $\trules{u_4}{R_1} = \{u_1\}$. The rules $r_1, r_2, r_3$ are not included as they are less preferred that $u_4$. Hence $\bodym{u_4} \cap \head{\trules{u_4}{R_1}} = \emptyset$. Therefore $u_4$ cannot be defeated, i.e. $u_4 \in \reduct{\lpp{P}}{R_1}$. Hence $R_1 \neq \minpos{\reduct{\lpp{P}}{R_1}}$, and the answer set $S_1$ is not a preferred answer set.

On the other hand $R_2 = \minpos{\reduct{\lpp{P}}{R_2}}$, and the answer set $S_2$ is a preferred answer set.
\end{example}

%%%%%%%%%%%%%%%%%%%%%%%%%%%
%%%%%%%%%%%%%%%%%%%%%%%%%%%
\section{Conclusions}
%%%%%%%%%%%%%%%%%%%%%%%%%%%
%%%%%%%%%%%%%%%%%%%%%%%%%%%

When dealing with preferences it is always important to remember what the abstract term ``preferences'' represents. In this paper we understand preferences as a mechanism for encoding exceptions. In case of conflicting rules, the preferred rules define exceptions to less preferred ones, and not the other way around. For this interpretation of preferences, it is important that a semantics for preferred answer sets satisfies Brewka and Eiter's Principle \ref{principle:i}. All the existing approaches for logic programming with preferences on rules that satisfy the principle introduce an imperative feature into the language. Preferences are understood as the order in which the rules of a program are applied.

The goal of this paper was to develop a purely declarative approach to preference handling satisfying Principle \ref{principle:i}. We have developed two approaches $\snpasindgen$ and $\snpasindno$. The first one is able to ignore preferences between non-conflicting rules. For example, it is equivalent with the answer set semantics on stratified programs. It is designed for situations, where developer does not have full control over preferences. An example is a situation where a user is able to write his/her own rules in order to override developer's rules. If the user's rules are not known until run-time of the system, we have to prefer all the user's rules over the developer's rules. To the best of our knowledge, no existing approach for logic programming with preferences satisfying Principle \ref{principle:i} is usable in this situation. On the other hand, in situations where we can drop the requirement for ignoring preferences between non-conflicting rules, e.g. if a developer has full control over the program, we can use $\snpasindno$ which is in the $NP$ complexity class. Naturally, since the requirement for ignoring preferences between non-conflicting rules was dropped, there are stratified programs with answer sets and no preferred answer sets according to $\snpasindno$.

The two presented approaches are not independent. They form a hierarchy, a branch in the hierarchy of the approaches $\snpasdst$, $\snpaswzl$, $\snpasbe$ and $\snpasd$.

One of our future goals is to better understand the complexity of the decision problem $\pasindgen{\lpp{P}} \neq \emptyset$. So far, we have $\Sigma_3^P$ membership result. It is not immediately clear whether the problem is also $\Sigma_3^P$ hard.

We also plan to investigate relation between $\snpasindgen$ and argumentation, and to implement a prototype solver for the semantics using a meta-interpretation technique of \cite{Eiter:2003uj}.

%%%%%%%%%%%%%%%%%%%%%%%%%%%
%%%%%%%%%%%%%%%%%%%%%%%%%%%
\section{Acknowledgments}
%%%%%%%%%%%%%%%%%%%%%%%%%%%
%%%%%%%%%%%%%%%%%%%%%%%%%%%

We would like to thank the anonymous reviewers for detailed and useful comments.
This work was supported by the grant UK/276/2013 of Comenius University in Bratislava and 1/1333/12 of VEGA.

\bibliographystyle{aaai}
\bibliography{library}

\begin{thebibliography}{}

\bibitem[\protect\citeauthoryear{Brewka and Eiter}{1999}]{Brewka:1999uv}
Brewka, G., and Eiter, T.
\newblock 1999.
\newblock {Preferred Answer Sets for Extended Logic Programs}.
\newblock {\em Artificial Intelligence} 109(1-2):297--356.

\bibitem[\protect\citeauthoryear{Brewka}{1996}]{Brewka:1996vs}
Brewka, G.
\newblock 1996.
\newblock {Well-Founded Semantics for Extended Logic Programs with Dynamic
  Preferences}.
\newblock {\em Journal of Artificial Intelligence Research} 4:19--36.

\bibitem[\protect\citeauthoryear{Delgrande \bgroup et al\mbox.\egroup
  }{2004}]{Delgrande:2004um}
Delgrande, J.~P.; Schaub, T.; Tompits, H.; and Wang, K.
\newblock 2004.
\newblock {A classification and survey of preference handling approaches in
  nonmonotonic reasoning}.
\newblock {\em Computational Intelligence} 20(2):308--334.

\bibitem[\protect\citeauthoryear{Delgrande, Schaub, and
  Tompits}{2003}]{Delgrande:2002uc}
Delgrande, J.~P.; Schaub, T.; and Tompits, H.
\newblock 2003.
\newblock {A Framework for Compiling Preferences in Logic Programs}.
\newblock {\em Theoretical Computer Science} 3(2):129--187.

\bibitem[\protect\citeauthoryear{Eiter \bgroup et al\mbox.\egroup
  }{2003}]{Eiter:2003uj}
Eiter, T.; Faber, W.; Leone, N.; and Pfeifer, G.
\newblock 2003.
\newblock {Computing Preferred Answer Sets by Meta-Interpretation in Answer Set
  Programming}.
\newblock {\em Theoretical Computer Science} 3(4-5):463--498.

\bibitem[\protect\citeauthoryear{Gelfond and Lifschitz}{1991}]{Gelfond:1991wx}
Gelfond, M., and Lifschitz, V.
\newblock 1991.
\newblock {Classical Negation in Logic Programs and Disjunctive Databases}.
\newblock {\em New Generation Computing} 9(3-4):365--386.

\bibitem[\protect\citeauthoryear{Sakama and Inoue}{2000}]{Sakama:2000wo}
Sakama, C., and Inoue, K.
\newblock 2000.
\newblock {Prioritized logic programming and its application to commonsense
  reasoning}.
\newblock {\em Artificial Intelligence} 123(1-2):185--222.

\bibitem[\protect\citeauthoryear{Schaub and Wang}{2002}]{Schaub:2002tr}
Schaub, T., and Wang, K.
\newblock 2002.
\newblock {Preferred well-founded semantics for logic programming by
  alternating fixpoints: Preliminary Report}.
\newblock In {\em 9th International Workshop on Non-Monotonic Reasoning},
  238--246.

\bibitem[\protect\citeauthoryear{Schaub and Wang}{2003}]{Schaub:2003uf}
Schaub, T., and Wang, K.
\newblock 2003.
\newblock {A semantic framework for preference handling in answer set
  programming}.
\newblock {\em Theoretical Computer Science} 3(4-5):569--607.

\bibitem[\protect\citeauthoryear{{\v S}efr{\'a}nek}{2008}]{Sefranek:2008wo}
{\v S}efr{\'a}nek, J.
\newblock 2008.
\newblock {Preferred answer sets supported by arguments}.
\newblock In {\em Proceedings of Twelfth International Workshop on
  Non-Monotonic Reasoning}.

\bibitem[\protect\citeauthoryear{{\v S}imko}{2013}]{Simko:2013tm}
{\v S}imko, A.
\newblock 2013.
\newblock {Extension of Gelfond-Lifschitz Reduction for Preferred Answer Sets :
  Preliminary Report}.
\newblock In {\em Proceedings of 27th Workshop on Logic Programming (WLP2013)},
   2--16.

\bibitem[\protect\citeauthoryear{{\v S}imko}{2014}]{simko:report}
{\v S}imko, A.
\newblock 2014.
\newblock {Proofs for the Approaches to Preferred Answer Sets with General
  Conflicts}.
\newblock Technical report, Department of Applied Informatics, Comenius
  University in Bratislava.
\newblock \url{http://dai.fmph.uniba.sk/~simko/nmr2014_proofs.pdf}.

\bibitem[\protect\citeauthoryear{Van~Gelder, Ross, and
  Schlipf}{1991}]{VanGelder:1991wl}
Van~Gelder, A.; Ross, K.~A.; and Schlipf, J.~S.
\newblock 1991.
\newblock {The Well-founded Semantics for General Logic Programs}.
\newblock {\em Journal of the ACM}.

\bibitem[\protect\citeauthoryear{Wang, Zhou, and Lin}{2000}]{Wang:2000tq}
Wang, K.; Zhou, L.; and Lin, F.
\newblock 2000.
\newblock {Alternating Fixpoint Theory for Logic Programs with Priority}.
\newblock In {\em Proceedings of the First International Conference on
  Computational Logic},  164--178.

\bibitem[\protect\citeauthoryear{Zhang and Foo}{1997}]{Zhang:1997tc}
Zhang, Y., and Foo, N.~Y.
\newblock 1997.
\newblock {Answer Sets for Prioritized Logic Programs}.
\newblock In {\em Proceedings of the 1998 International Logic Programming
  Symposium},  69--83.

\end{thebibliography}
\end{document}